\DeclareSIUnit{\rad}{rad}
\newcommand{\EquationsFigureSize}{}
\newcommand{\Support}[1]{\text{supp}(#1)}
\newcommand{\Prob}{\mathbb{P}}
\newcommand{\Expectation}{\mathbb{E}}
\DeclareMathOperator{\Variance}{\text{Var}}
\newcommand{\Distr}{\mathKeyword{Distr}}
\DeclareMathOperator{\Erf}{\text{erf}}
\DeclareMathOperator{\ErfInv}{\text{erf}^{\,-1}}
\newcommand{\SqrtI}[1]{\surd\bigl(#1\bigr)}
\newcommand{\Reals}{\mathbb{R}}
\newcommand{\Nats}{\mathbb{N}}
\newcommand{\Subsets}[1]{\mathcal{P}(#1)}
\newcommand{\Distrs}[1]{\mathcal{D}(#1)}
\newcommand{\List}[1]{\mathKeyword{List}(#1)}
\newcommand{\FDistr}[1]{\mathKeyword{FinDistr}(#1)}
\newcommand{\IsomorphTo}{\cong}
\DeclareMathOperator{\dom}{\text{dom}}
\DeclareMathOperator{\IdFun}{\text{id}}
\newcommand{\parto}{\rightharpoonup} %
\newcommand{\toI}{\mathrel{\!\to\!}} %
\newcommand{\tuple}[1]{\left\langle#1\right\rangle}
\DeclareMathOperator{\FV}{FV}
\newcommand{\scalesembrack}[1]{\scalebox{0.85}{\raisebox{0.6pt}{$#1$}}}
\newcommand{\sem}[1]{\scalesembrack{\llbracket} {#1} \scalesembrack{\rrbracket}}
\newcommand{\bigsem}[1]{\Bigl\llbracket {#1} \Bigr\rrbracket}
\newcommand{\Fvalid}{\vDash}
\newcommand{\FSubst}[3]{{#1}[#2 \backslash #3]}
\newcommand{\UpdateMap}[3]{{#1}[{#2} \gets {#3}]}
\newcommand{\EmptyValuation}{\{\}}
\newcommand{\CatValuation}{\cup}
\newcommand{\BigCatValuation}{\bigcup}
\newcommand{\limply}{\,\rightarrow\,}
\newcommand{\lforall}[1]{\forall #1 \,}
\newcommand{\lexists}[1]{\exists #1 \,}
\newcommand{\limplyI}{\rightarrow}
\newcommand{\True}{\mathsf{true}}
\newcommand{\False}{\mathsf{false}}
\newcommand*{\dif}{\mathop{}\!\mathrm{d}}
\newcommand{\dL}{\dif \text{L}}
\newcommand{\lbox}[2]{\left[#1\right]\,#2}
\newcommand{\lboxI}[2]{[#1]#2}
\newcommand{\ldiamond}[2]{\langle#1\rangle\,#2}
\newcommand{\ldiamondI}[2]{\langle#1\rangle#2}
\newcommand{\dlode}[1]{\{#1\}}
\DeclareMathOperator{\dldom}{\,\&\,}
\DeclareMathOperator{\dlassign}{\coloneq}
\DeclareMathOperator*{\seqI}{;}
\DeclareMathOperator*{\seq}{\,;\,}
\newcommand{\BnfDef}{\ \coloneqq\ \xspace}
\DeclareMathOperator*{\BnfOr}{\ |\ }
\newcommand{\BnfOpSep}{\,,}
\newcommand{\UnitSet}{\mathbf{1}}
\newcommand{\Unit}{\raisebox{1pt}{\scalebox{0.7}{$\bullet$}}}
\newcommand{\SumLeft}{\mathsf{left}}
\newcommand{\SumRight}{\mathsf{right}}
\newcommand{\SumLeftInj}[1]{(\SumLeft, {#1})}
\newcommand{\SumRightInj}[1]{(\SumRight, {#1})}
\newcommand{\BoolAnd}[2]{{#1} \text{ and } {#2}}
\newcommand{\BoolTrue}{\text{true}}
\DeclareMathOperator{\InvCCDF}{\overline{\text{\small CDF}}^{\,\raisebox{-2pt}{\scriptsize -1}}}
\newcommand{\InvCCDFwith}[1]{\overline{\text{\small CDF}}^{\,\raisebox{-1pt}{\scriptsize -1}}_{#1}}
\newcommand{\OptNone}{\bot}
\newcommand{\OptReals}{\Reals \cup \{\OptNone\}}
\DeclareMathOperator*{\AnyBinOp}{\odot}
\newcommand{\Funs}{\mathKeyword{Fun}}
\newcommand{\VarsAll}{\mathKeyword{Var}}
\newcommand{\SymbsAll}{\mathKeyword{Symb}}
\newcommand{\StatesAll}{\mathKeyword{VState}}
\newcommand{\InterpsAll}{\mathKeyword{Interp}}
\newcommand{\TagWithIndex}[2]{{#1}^{(#2)}}
\newcommand{\ValSem}[1]{\sem{#1}}
\newcommand{\ProofSem}[1]{\mathcal{P}\sem{#1}}
\newcommand{\InfActionSem}[1]{\mathcal{A}\sem{#1}}
\newcommand{\VarDef}[1]{\mathcal{D}(#1)}
\newcommand{\InferenceStrategy}{\iota}
\newcommand{\Const}{\mathKeyword{Const}}
\newcommand{\Unknown}{\mathKeyword{Unknown}}
\newcommand{\Assum}{\mathKeyword{Assum}}
\newcommand{\StateVar}{\mathKeyword{StateVar}}
\newcommand{\Param}{\mathKeyword{Param}}
\newcommand{\Dir}{\mathKeyword{Dir}}
\newcommand{\UpperAnnot}{\mathKeyword{up}}
\newcommand{\LowerAnnot}{\mathKeyword{lo}}
\newcommand{\GlobalBound}{\mathKeyword{GBound}}
\newcommand{\Bound}{\mathKeyword{Bound}}
\newcommand{\Ctrl}{\mathKeyword{Ctrl}}
\newcommand{\Plant}{\mathKeyword{Plant}}
\newcommand{\Safe}{\mathKeyword{Safe}}
\newcommand{\Inv}{\mathKeyword{Inv}}
\newcommand{\Init}{\mathKeyword{Init}}
\newcommand{\NoiseVar}{\mathKeyword{NoiseVar}}
\newcommand{\Noise}{\mathKeyword{Noise}}
\newcommand{\ObsVar}{\mathKeyword{ObsVar}}
\newcommand{\Obs}{\mathKeyword{Obs}}
\newcommand{\Inference}{\mathKeyword{Infer}}
\newcommand{\BoundConj}[1]{\Bound[#1]}
\newcommand{\BoundWith}[2]{\Bound_{#1}[#2]}
\newcommand{\GlobalParam}{\mathKeyword{GParam}}
\newcommand{\LocalParam}{\mathKeyword{LParam}}
\newcommand{\ProgState}{\mathKeyword{State}}
\newcommand{\Inst}{\mathKeyword{Inst}}
\newcommand{\ObsAvail}{\mathKeyword{ObsAvail}}
\newcommand{\Hist}{\mathKeyword{Hist}}
\newcommand{\CtrlWith}[1]{\Ctrl[#1]}
\newcommand{\EpsType}{[0, 1]}
\newcommand{\ObsMeasurementFun}{\mu}
\newcommand{\ObsAvailFun}{\alpha}
\newcommand{\StateMapping}{\varphi_{\mathsf{s}}}
\newcommand{\ActionMapping}{\varphi_{\mathsf{a}}}
\newcommand{\RevActionMapping}{\varphi_{\mathsf{a}}^{-1}}
\newcommand{\Eps}{\varepsilon}
\newcommand{\EpsRem}{\varepsilon_{\mathKeyword{rem}}}
\newcommand{\ShieldedState}{\hat S}
\newcommand{\ShieldedAction}{\hat A}
\newcommand{\CtrlAc}{a_{\mkern1mu\mathKeyword{ctrl}}}
\newcommand{\InfAc}{a_{\mkern1mu\mathKeyword{inf}}}
\newcommand{\BSymb}{e}
\newcommand{\InfResult}{B}
\newcommand{\GlobBounds}{b_\textsf{\tiny G}}
\newcommand{\LocBounds}{b_\textsf{\tiny L}}
\newcommand{\AlgSample}[1]{\textbf{sample } #1}
\newcommand{\AlgAssert}[1]{\textbf{assert } #1}
\newcommand{\CtrlActionSem}[1]{\mathcal{A}\sem{#1}}
\newcommand{\CtrlMonitorSem}[1]{\mathcal{M}\sem{#1}}
\newcommand{\CtrlExeSem}[1]{\mathcal{E}\sem{#1}}
\newcommand{\CtrlFallbackSem}[1]{\mathcal{F}\sem{#1}}
\newcommand{\VarsVec}[1]{{\vec #1}}
\newcommand{\EmptyList}{\emptyset}
\newcommand{\ListComprehension}[2]{\left[{#1} : {#2}\right]}
\newcommand{\ListSingleton}[1]{\left[ #1 \right]}
\DeclareMathOperator{\ListCat}{\oplus}
\DeclareMathOperator{\WhenSymbBE}{\,\text{when}\,}
\newcommand{\SymbBESem}[1]{\sem{#1}}
\newcommand{\SymbBoundExpr}{\mathKeyword{SBInst}}
\newcommand{\slConfigIndentSize}{1.5em}
\newcommand{\slConfigListingSize}{}
\newcommand{\slkeyword}[1]{\textsc{#1}}
\newcommand{\sldecl}[1]{\slkeyword{#1}\ }
\DeclareMathOperator{\slconstant}{\sldecl{constant}}
\DeclareMathOperator{\slunknown}{\sldecl{unknown}}
\DeclareMathOperator{\slassume}{\sldecl{assume}}
\DeclareMathOperator{\slbound}{\sldecl{bound}}
\DeclareMathOperator{\slcontroller}{\sldecl{controller}}
\DeclareMathOperator{\slplant}{\sldecl{plant}}
\DeclareMathOperator{\slsafe}{\sldecl{safe}}
\DeclareMathOperator{\slinvariant}{\sldecl{invariant}}
\DeclareMathOperator{\slnoise}{\sldecl{noise}}
\DeclareMathOperator{\slobserve}{\sldecl{observe}}
\DeclareMathOperator{\slinfer}{\sldecl{infer}}
\DeclareMathOperator{\pbmonotonicity}{\sldecl{monotonicity}}\DeclareMathOperator{\pbmodel}{\sldecl{model}}
\DeclareMathOperator{\pbsafe}{\sldecl{safe}}
\DeclareMathOperator{\pbtotality}{\sldecl{totality}}
\DeclareMathOperator{\pbinference}{\sldecl{inference}}
\DeclareMathOperator{\slaggregate}{\slkeyword{aggregate}\,}
\DeclareMathOperator{\slbest}{\slkeyword{best}\,}
\DeclareMathOperator*{\slwhen}{\,\slkeyword{when}\,}
\DeclareMathOperator*{\slassign}{\,\coloneq\,}
\DeclareMathOperator*{\sland}{\,\slkeyword{and}\,}
\newcounter{slindCounter}
\newcommand{\slind}[1][1]{%
  \forloop{slindCounter}{0}{\value{slindCounter}<#1}{\hspace{\slConfigIndentSize}}}
\newenvironment{sllisting}{
  \renewcommand{\arraystretch}{\slConfigLineStretch}
  \slConfigListingSize
  \noindent
  \tabular{@{}>{$}l<{$}}
}{
  \endtabular
  \renewcommand{\arraystretch}{1}}
\newcommand{\slnormal}{\mathcal{N}}
\newcommand{\sluniform}{\mathcal{U}}
\newcommand{\slbernouilli}{\mathcal{B}}
\newcommand{\mathKeyword}[1]{\mathsf{#1}}
\newcommand{\makeMathKeywordText}{\expandafter\renewcommand{\mathKeyword}[1]{\textsf{##1}}}
\newcommand{\makeMathKeywordMath}{\expandafter\renewcommand{\mathKeyword}[1]{\mathsf{##1}}}
\DeclareMathOperator*{\DefEq}{\ = \ }
\newcommand{\commasep}{,\ }
\newcommand{\aligncolon}{\;:\;}
\newcommand{\Equiv}{\,\equiv\,}
\newcommand\primitiveinput[1]
\newcommand{\TrainThetaSlope}{\theta}
\newcommand{\TrainThetaBias}{\varphi}
\newcommand{\TrainThetaSlopeMax}{\bar\TrainThetaSlope}
\newcommand{\TrainThetaSlopeMin}{\underaccent{\bar}{\TrainThetaSlope}}
\newcommand{\TrainThetaBiasMax}{\bar\TrainThetaBias}
\newcommand{\TrainThetaBrakingRate}{\TrainThetaSlopeMin B - \TrainThetaBiasMax}
\newcommand{\TrainThetaAcc}{\TrainThetaSlopeMax u + \TrainThetaBiasMax}
\newcommand{\bdist}[2]{\textsf{Bdist}_{#1}(#2)}
\newcommand{\TrainFxMax}{\bar {f}}
\algnewcommand\algorithmicswitch{\textbf{switch}}
\algnewcommand\algorithmiccase{\textbf{case}}
\algnewcommand\algorithmicassert{\texttt{assert}}
\algnewcommand\Assert[1]{\State \algorithmicassert(#1)}%
\begin{document}

\title{Adaptive Shielding via Parametric Safety Proofs}

\author{Yao Feng}
\authornote{Yao Feng contributed a majority of the case studies and experiments.}
\orcid{0000-0002-8213-5181}
\affiliation{%
  \institution{Tsinghua University}
  \city{Beijing}
  \country{China}
}
\email{y-feng23@mails.tsinghua.edu.cn}

\author{Jun Zhu}
\orcid{0000-0002-6254-2388}
\affiliation{%
  \institution{Tsinghua University}
  \city{Beijing}
  \country{China}
}
\email{dcszj@mail.tsinghua.edu.cn}

\author{André Platzer}
\orcid{0000-0001-7238-5710}
\affiliation{%
  \institution{Karlsruhe Institute of Technology (KIT)}
  \city{Karlsruhe}
  \country{Germany}
}
\email{platzer@kit.edu}

\author{Jonathan Laurent}
\authornote{Jonathan Laurent contributed a majority of the theoretical framework and writing.}
\orcid{0000-0002-8477-1560}
\affiliation{%
  \institution{KIT}
  \city{Karlsruhe}
  \country{Germany}
}
\affiliation{%
  \institution{Carnegie Mellon University}
  \city{Pittsburgh}
  \country{USA}
}
\email{jonathan.laurent@kit.edu}

\renewcommand{\shortauthors}{Feng, Zhu, Platzer and Laurent}

\begin{abstract}
  A major challenge to deploying cyber-physical systems with learning-enabled controllers is to ensure their safety, especially in the face of changing environments that necessitate runtime knowledge acquisition. Model-checking and automated reasoning have been successfully used for shielding, i.e., to monitor untrusted controllers and override potentially unsafe decisions, but only at the cost of hard tradeoffs in terms of expressivity, safety, adaptivity, precision and runtime efficiency. We propose a programming-language framework that allows experts to statically specify \emph{adaptive shields} for learning-enabled agents, which enforce a safe control envelope that gets more permissive as knowledge is gathered at runtime. A shield specification provides a safety model that is parametric in the current agent's knowledge. In addition, a nondeterministic inference strategy can be specified using a dedicated domain-specific language, enforcing that such knowledge parameters are inferred at runtime in a statistically-sound way. By leveraging language design and theorem proving, our proposed framework empowers experts to design adaptive shields with an unprecedented level of modeling flexibility, while providing rigorous, end-to-end probabilistic safety guarantees.
\end{abstract}

\begin{CCSXML}
  <ccs2012>
     <concept>
         <concept_id>10010520.10010553</concept_id>
         <concept_desc>Computer systems organization~Embedded and cyber-physical systems</concept_desc>
         <concept_significance>500</concept_significance>
         </concept>
     <concept>
         <concept_id>10011007.10010940.10010992.10010998</concept_id>
         <concept_desc>Software and its engineering~Formal methods</concept_desc>
         <concept_significance>500</concept_significance>
         </concept>
     <concept>
         <concept_id>10010147.10010257.10010258.10010261</concept_id>
         <concept_desc>Computing methodologies~Reinforcement learning</concept_desc>
         <concept_significance>300</concept_significance>
         </concept>
   </ccs2012>
\end{CCSXML}
\ccsdesc[500]{Computer systems organization~Embedded and cyber-physical systems}
\ccsdesc[500]{Software and its engineering~Formal methods}
\ccsdesc[300]{Computing methodologies~Reinforcement learning}

\keywords{Safe Reinforcement Learning, Programming Languages, Differential Dynamic Logic, Statistical Inference, Hybrid Systems}

\maketitle

\section{Introduction}

Learning-based methods such as reinforcement learning have shown great promise in the fields of autonomous driving~\cite{DBLP:journals/corr/Shalev-ShwartzS16a,DBLP:conf/eccv/LiangWYX18} and robot control~\cite{DBLP:journals/jirs/RovedaMFABTP20,DBLP:conf/icra/HesterQS12,DBLP:conf/icra/SmartK02}. However, their deployment in the real-world has been held back by reliability and safety concerns. The use of formal methods has been suggested to guarantee the safety of learning-enabled systems, both \emph{after} and \emph{during} training~\cite{DBLP:conf/aaai/FultonP18,DBLP:conf/tacas/FultonP19,DBLP:conf/atal/Elsayed-AlyBAET21,DBLP:conf/hybrid/HuntFMHDS21}. Many different approaches have been proposed to do so, which share as a foundation the idea of \emph{sandboxing} or \emph{shielding}~\cite{DBLP:journals/fmsd/MitschP16,DBLP:conf/aaai/AlshiekhBEKNT18,DBLP:conf/aaai/FultonP18,DBLP:conf/icra/ThummA22}. In this framework, the intended actions of a learning-enabled agent are monitored at runtime and overridden by appropriate fallbacks whenever they cannot be proved safe with respect to a model of the environment.

Most existing approaches leverage automated techniques for proving safety such as reachability analysis~\cite{DBLP:conf/icra/ThummA22,DBLP:conf/cdc/KollerBT018}, LTL model-checking~\cite{DBLP:conf/aaai/AlshiekhBEKNT18,DBLP:conf/isola/KonighoferL0B20} or Hamilton-Jacobi solving~\cite{DBLP:journals/tac/FisacAZKGT19}. Such techniques can be applied \emph{offline} to precompute numerical, table-based control envelopes that indicate what actions are safe in every possible state of the system. Alternatively, they can be used at runtime to analyze the safety of different actions with respect to the current state. Such \emph{online} methods mostly offer only bounded-horizon guarantees and must typically make aggressive tradeoffs to regain efficiency at the cost of precision or generality. However, an advantage of online methods is that they allow the model to be updated at runtime as the agent gathers information about its environment. Indeed, in many situations, a fully-specified model of the environment is not known at design-time and using a necessarily conservative static model may lead to overly cautious behavior. In principle, adaptivity can also be offered by offline methods, by precomputing control envelopes for \emph{every} possible model that may be considered at runtime. However, doing so further compounds the scalability challenges of model-checking. Finally, existing approaches to adaptive shielding, whether \emph{online} or \emph{offline}, are either offering limited expressivity to encode model uncertainty (e.g. \emph{bounded disturbance terms}~\cite{althoff2014online} or \emph{finite model families}~\cite{DBLP:conf/tacas/FultonP19}) or failing to provide rigorous end-to-end guarantees under assumptions that can be easily validated (e.g. methods based on learning Gaussian processes~\cite{DBLP:conf/aaai/ChengOMB19,DBLP:conf/eucc/BerkenkampS15,DBLP:conf/nips/BerkenkampTS017,DBLP:journals/tac/FisacAZKGT19}).

This work introduces a programming-language framework for designing \emph{adaptive} safety shields, leveraging a minimal yet crucial amount of human insight to target a very general class of model families, while shifting the full burden of safety analysis offline. In this framework, control envelopes are described by nondeterministic controllers, whose safety is established using differential dynamic logic~\cite{DBLP:journals/jar/Platzer08,DBLP:journals/jar/Platzer17}. Environment models can be parametrized by unknown function symbols. Those function symbols are subject to parametric bounds, whose parameters are accessible to the controller and can be \emph{instantiated and refined at runtime}. Updating these parameters in a way that is sound and efficient is a great challenge in itself. We address this challenge by offering experts a domain-specific language to express nondeterministic inference strategies that are \emph{sound by construction}, and whose nondeterminism is resolved by an \emph{inference policy} that can be programmed or learned in the same way that the \emph{control policy} is.

We illustrate our framework on four case studies, demonstrating its safety and its ability to handle advanced uncertainty models that are beyond the reach of existing methods. We also showcase the possibility of having agents learn how to manage their own safety budget by learning inference policies, which -- to the best of our knowledge -- has no equivalent in the literature.

\section{Overview}

\newcommand{\JSCModel}{\mathKeyword{Model}}

In this section, we motivate and illustrate our framework on a series of closely-related examples.

\subsection{Extracting Shields from Verified Nondeterministic Controllers}

Our framework builds on the idea of extracting runtime controller monitors from provably safe, nondeterministic controllers~\cite{DBLP:journals/fmsd/MitschP16,DBLP:conf/aaai/FultonP18}. The safety of such nondeterministic controllers is established using \emph{differential dynamic logic} ($\dL$)~\cite{DBLP:journals/jar/Platzer08,DBLP:journals/jar/Platzer17}, which is designed specifically to verify hybrid systems with both discrete and continuous dynamics. In $\dL$, we use hybrid programs (HPs) to model both controllers and the differential equations of the physical environments they interact with. A summary of the syntax and semantics of hybrid programs can be found in Table~\ref{tab:hps}. Proving the safety of a cyber-physical system using $\dL$ typically comes down to proving a modal formula of the form $\Init \!\limplyI\! \lboxI{(\Ctrl \seqI \Plant)^*}{\Safe}$, where $\Init$ and $\Safe$ are logical formulas while $\Ctrl$ and $\Plant$ are hybrid programs. This means that under a certain initial condition ($\Init$), no matter how often we execute the discrete-time controller ($\Ctrl$) and let the continuous-time system evolve to the next control cycle $(\Plant)$, the safety condition ($\Safe$) is satisfied.

\begin{table}%
  \caption{Semantics of Hybrid Programs in $\dL$}\label{tab:hps}
  \vspace{-0.2cm}
  \centering
  \renewcommand{\arraystretch}{1.2}
  \begin{tabular}{p{0.18\linewidth}p{0.77\linewidth}}
  \toprule
  Syntax  &  Semantics\\
  \midrule
  $x \dlassign e$ & Assign the value of $e$ to variable $x$, leaving all other variables unchanged.\\
  $x \dlassign *$ & Assign variable $x$ nondeterministically to some real value. \\
  $?Q$ & If $Q$ is {true}, continue running; else abort. \\
  $x'=f(x) \dldom Q$ & Follow the system of differential equations $x'=f(x)$ for a certain (nondeterministic) amount of time while $Q$ holds true.\\
  $\alpha \cup \beta$ & Nondeterministically run either HP $\alpha$ or $\beta$.\\
  $\alpha \seqI \beta$ & Sequentially run $\beta$ after $\alpha$.\\
  $\alpha^*$ & Run $\alpha$ repeatedly for any $\geq0$ amount of iterations.\\
  \bottomrule
  \end{tabular}
\end{table}

For example, the $\dL$ model described in Figure~\ref{fig:overview-etcs} designates a nondeterministic train controller that must stop by the \emph{end of movement authority} $e$ located somewhere ahead of the train, as assigned by the train network scheduler~\cite{DBLP:conf/hybrid/PlatzerQ08}. Variable $x$ models the position of the train on its tracks. At every control cycle, the driver can choose between braking with acceleration $-B < 0$ or accelerating with acceleration $A > 0$. It then gives control to the environment until the next control cycle, which happens in at most $T$ seconds. However, the option to accelerate is only acceptable when a sufficient distance remains to accelerate and then brake safely. The safety of this nondeterministic train controller can be established by proving the validity of Formula~\ref{eq:example-model-def}, using the rules and axioms of $\dL$. This can be done interactively in a proof assistant such as KeYmaera X~\cite{DBLP:conf/cade/FultonMQVP15}. In particular, doing so requires finding a loop invariant $\Inv$ that holds initially, implies the safety property and is preserved by any run of hybrid program $\Ctrl \seqI \Plant$. Here, one can take $\Inv \equiv \Init$.

\begin{figure}
  \begin{center}
    \begin{align}
      \JSCModel &\Equiv \Init \limplyI \lbox{(\Ctrl \seqI \Plant)^*}{\Safe} \label{eq:example-model-def} \\
      \Init &\Equiv (A>0 \land B>0 \land T>0) \land (x + v^2/2B \le e) \nonumber \\
      \Ctrl &\Equiv (a \dlassign -B) \cup (?(x + vT + AT^2/2 + (v+AT)^2/2B \le e) \seq a \dlassign A) \label{eq:example-model-ctrl} \\
      \Plant &\Equiv t \dlassign 0 \;\,;\, \{ x'=v \commasep v'=a \commasep t'=1 \dldom t \le T \land v \ge 0 \} \nonumber \\
      \Safe &\Equiv  x \le e \nonumber
    \end{align}
  \end{center}
  \vspace{-0.2cm}
  \caption{A Simple $\dL$ Model of a Train Braking-Control System~\cite{DBLP:conf/hybrid/PlatzerQ08}}\label{fig:overview-etcs}
  \Description[]{}
\end{figure}

Crucially, one can extract a runtime monitor from the nondeterministic controller shown in Equation~\ref{eq:example-model-ctrl} and use it as a shield for a learning-enabled agent~\cite{DBLP:conf/aaai/FultonP18}. For example, one could use reinforcement learning to learn a deterministic train controller that optimizes for speed, energy efficiency and passenger comfort, while guaranteeing safety by overriding acceleration whenever the distance to the end-of-movement authority is insufficient (i.e. the guard in Equation~\ref{eq:example-model-ctrl} is violated).

\subsection{Adding Adaptivity via Parametric Bounds}

As illustrated in the previous section, previous work~\cite{DBLP:journals/fmsd/MitschP16,DBLP:conf/aaai/FultonP18} has shown how to extract shields from nondeterministic controllers whose safety is established using differential dynamic logic. However, doing so requires an accurate model of every safety-relevant aspect of the system. In cases where only a conservative model is available, the shield may force the system into an overly cautious behavior. One contribution of this work is to allow the design of \emph{adaptive} shields, whose behavior is refined at runtime as the agent gathers more information about its environment. To do so, we allow environment models to be parameterized by unknown quantities and control envelopes to be parameterized by bounds on these same quantities.

An example of such an adaptive shield is described in Figure~\ref{fig:overview-train-global}. It is specified using our proposed \emph{shield specification language}, which we define rigorously in Sections~\ref{sec:adaptive-shielding}~and~\ref{sec:inference-strategy-language}. This example is a variation of the train control system from Figure~\ref{fig:overview-etcs} with a continuous action space: instead of making a binary choice between \emph{accelerating} and \emph{braking}, the agent must select a \emph{commanded acceleration} $u \in [-B, A]$, which is expressed in $\dL$ using a nondeterministic assignment $u \dlassign *$ followed by a test $?(-B \!\le\! u \!\le\! A)$. In addition, the relationship between the commanded acceleration $u$ and the \emph{actual acceleration} $a$ of the train is governed by an unknown linear function: $a = \TrainThetaSlope u + \TrainThetaBias$ where $\TrainThetaSlope > 0$ and $\TrainThetaBias$ are unknown real parameters. At runtime, the actual acceleration of the train is measured repeatedly, allowing the agent to compute increasingly precise estimates of the systematic disturbance $\TrainThetaSlope$ and $\TrainThetaBias$ and thus allowing the shield to be increasingly permissive.

The shield specification in Figure~\ref{fig:overview-train-global} consists of eleven sections, each of them introduced by a distinct keyword. \textsc{constant} introduces real quantities that are known by the agent at runtime such as the maximum commanded acceleration $A$. \textsc{unknown} introduces quantities that are \emph{not} known and must be estimated at runtime, namely $\TrainThetaSlope$ and $\TrainThetaBias$. \textsc{assume} gathers global assumptions about constant and unknown symbols such as $A > 0$ and $\TrainThetaSlope > 0$. Such assumptions could be encoded as system invariants instead (see \textsc{invariant}) but separating them from state-dependent invariants leads to greater conceptual clarity and more concise proof obligations (global assumptions are preserved by definition). The \textsc{controller}, \textsc{plant}, \textsc{safe} and \textsc{invariant} sections define a $\dL$ model similar in shape to the one already studied in Figure~\ref{fig:overview-etcs}. However, the plant model can feature unknown symbols, which it does here via the $v' = \TrainThetaSlope u + \TrainThetaBias$ differential equation. Also the controller can depend on \emph{bound parameters} that constrain these unknowns and that are introduced in the \textsc{bound} section. Here, we introduce a lower bound $\TrainThetaSlopeMin$ on $\TrainThetaSlope$, an upper bound $\TrainThetaSlopeMax$ on $\TrainThetaSlope$, and an upper bound $\TrainThetaBiasMax$ on $\TrainThetaBias$. These parameters are instantiated and refined at runtime as the agent is gathering knowledge, using statistical inference. 
The controller is similar to Equation~\ref{eq:example-model-ctrl}, except that it offers a continuous action space and conservatively assumes a maximum achievable braking rate of $\TrainThetaBrakingRate$. Thus, the control envelope defined by the nondeterministic controller gets increasingly permissive as tighter bounds are obtained on $\TrainThetaSlope$ and $\TrainThetaBias$.

The invariant can depend on bound parameters but only monotonically, in the sense that it can only be made more permissive by a tightening of the bounds. This is an intuitive requirement: acquiring knowledge must never transition an agent from a state considered safe to a state considered unsafe. Here, this requirement clearly holds since the maximum guaranteed braking rate of $\TrainThetaBrakingRate$ can only get larger as $\TrainThetaSlopeMin$ and $\TrainThetaBiasMax$ tighten. Finally, \textsc{noise} and \textsc{observe} specify the kind of information that may become available at runtime to compute and refine such bounds. Here, we are assuming that at every control cycle, an estimate of the \emph{actual} acceleration of the system \emph{may} be measured, with some Gaussian noise of standard deviation $\sigma$. No guarantee is offered on \emph{when} and \emph{under what conditions} such an observation becomes available. $\textsc{infer}$ defines a family of provably-sound strategies for updating $\TrainThetaSlopeMin, \TrainThetaSlopeMax$ and $\TrainThetaBiasMax$ based on such observations, as we discuss shortly in Section~\ref{sec:overview-inference}.
Before we do so though, we introduce a last variant of our train control shield that illustrates how \emph{functional} unknowns can be handled, using the concept of a \emph{local bound}.

\begin{figure}
  \begin{center}
  \begin{sllisting}
    \slconstant A, B, T, \sigma \\
    \slunknown \TrainThetaSlope, \TrainThetaBias \\
    \slassume
      A > 0 \commasep
      B > 0 \commasep
      T > 0 \commasep
      \sigma > 0 \commasep
      \TrainThetaSlope > 0 \\
    \slbound
      \TrainThetaSlopeMin: \TrainThetaSlopeMin \le \TrainThetaSlope \commasep
      \TrainThetaSlopeMax: \TrainThetaSlopeMax \ge \TrainThetaSlope \commasep
      \TrainThetaBiasMax: \TrainThetaBiasMax \ge \TrainThetaBias
      \\
    \slcontroller \\
      \slind u \dlassign * \ ;\, ?(-B \!\le\! u \!\le\! A) \,;\, %
      ?(x + vT + (\TrainThetaAcc)T^2/2 + (v + (\TrainThetaAcc)T)^2/2(\TrainThetaBrakingRate) \le e) \\
    \slplant t \dlassign 0 \seq \dlode{x'=v, v'= \TrainThetaSlope u + \TrainThetaBias, t'=1 \dldom t \le T \land v \ge 0} \\
    \slsafe x \le e \\
    \slinvariant %
      (\TrainThetaBrakingRate > 0) \land (x + v^2/2(\TrainThetaBrakingRate) \le e)  \\
    \slnoise \eta \sim \slnormal(0, \sigma^2) \\
    \slobserve \omega = \TrainThetaSlope u + \TrainThetaBias - \eta \\
    \slinfer \\
      \slind \TrainThetaSlopeMin, \TrainThetaSlopeMax \slassign \slaggregate i, j: (\omega_j - \omega_i)/(u_j - u_i) \sland (\eta_j - \eta_i)/(u_j - u_i) \slwhen u_j > u_i \seq \\
      \slind \TrainThetaBiasMax \slassign \slaggregate i: \omega_i - \TrainThetaSlopeMax u_i \sland \eta_i \slwhen u_i \le 0
  \end{sllisting}
  \end{center}
  \caption{An adaptive shield for a train control system, where the relationship between the commanded and actual train acceleration is governed by an unknown linear function.}\label{fig:overview-train-global}
  \Description[]{}
\end{figure}

\subsection{Handling Functional Unknowns with Local Bounds}\label{sec:overview-train-local}

In our previous example from Figure~\ref{fig:overview-train-global}, two real-valued unknowns are estimated at runtime. However, our framework can handle a more powerful and flexible form of model-uncertainty in the form of \emph{functional unknowns}. Figure~\ref{fig:overview-train-local} provides an example, considering a train that evolves on tracks of unknown, varying slope. Variable $x$ denotes the train position on an arc-length parametrization of the tracks. A train at position $x$ is subject to an additional acceleration term of $-g\cdot\sin(\theta_x)$ where $g \approx \qty{9.81}{\metre\per\s\squared}$ is Earth's gravitational acceleration and $\theta_x$ the track angle at coordinate $x$. We model this influence by defining the train's kinematics as $v' = a + f(x)$, with $a$ the acceleration commanded by the controller and $f$ an unknown function of $x$. In addition, $f$ is assumed to be $k$-Lipschitz, globally lower-bounded by $-A$ (ensuring that the train will always move forward when instructed to accelerate at rate $A$) and upper-bounded by a known constant $F$. One could simply use this global bound to implement a conservative shield. Our challenge is to do better by estimating superior, local estimates of $f$ based on runtime observations.

\begin{figure}
  \begin{center}
  \begin{sllisting}
    \slconstant A, B, F, k, \sigma \\
    \slunknown f(*) \\
    \slassume \\
      \slind A > 0 \commasep B > 0 \commasep T > 0 \commasep k > 0 \commasep \sigma > 0 \commasep F < B \commasep A + F > 0 \commasep \\
      \slind (\lforall x -A \le f(x) \le F) \commasep (\lforall x \lforall y |f(x) - f(y)| \le k |x - y|) \\
    \slbound %
      \TrainFxMax: f(x) \le \TrainFxMax
    \\
    \slcontroller \\
      \slind y \dlassign \min(y, \TrainFxMax) \seq \\
      \slind ((a \dlassign -B) \ \cup \\
      \slind \phantom{(}(?(x + vT + \frac{1}{2}(A + F)T^2 + \bdist{v+(A+F)T}{B - \min(F, \, y + k(vT + \frac{1}{2}(A + F)T^2) \ + \\
      \slind[2] k \cdot \bdist{v + (A+F)T}{B-F})} \le e) \seq a \dlassign A) \\
    \slplant t \dlassign 0 \seq \dlode{x'=v, v'=a+f(x), y'=kv, t'=1 \dldom t \le T \land v \ge 0} \\
    \slsafe x \le e \\
    \slinvariant \,(v \ge 0) \land
      (y \ge f(x)) \land (x + \bdist{v}{B - \min(F, \,y + k \!\cdot\! \bdist{v}{B-F})} \le e)
       \\
    \slnoise \eta \sim \slnormal(0, \sigma^2) \\
    \slobserve \omega = f(x) - \eta \\
    \slinfer
      \TrainFxMax \slassign F \seq
      \TrainFxMax \slassign \slbest i: \TrainFxMax_i + k|x - x_i| \seq
      \TrainFxMax \slassign \slaggregate i: \omega_i + k|x - x_i| \sland \eta_i \\
  \end{sllisting}
  \end{center}
  \caption{An adaptive shield for a train control system where the railway tracks are assumed to follow an unkown, space-varying slope function. We write $\bdist{v}{b} \equiv v^2/2b$. For simplicity, the train faces a \emph{binary} choice between \emph{acceleration} and \emph{braking} (as in Figure~\ref{fig:overview-etcs} and unlike in Figure~\ref{fig:overview-train-global}).}\label{fig:overview-train-local}
  \Description[]{}
\end{figure}

We do so by introducing the concept of a \emph{local bound}. Figure~\ref{fig:overview-train-local} defines such a  bound, namely $\TrainFxMax \ge f(x)$. As opposed to the \emph{global} bounds used in the previous section, this bound involves quantity $f(x)$ that is state-dependent. The guarantee that comes with such a definition is that before every control cycle, the inference module must provide a value $\TrainFxMax$ that is an upper bound on the value of $f(x)$, as evaluated in the current state. The value of $\TrainFxMax$ can be used in the controller. However, $\TrainFxMax$ \emph{cannot} be mentioned in the invariant since $\TrainFxMax \ge f(x)$ may not hold anymore as the plant executes and the train moves further on the tracks. A local bound is only guaranteed to be valid at discrete points in time, after each run of the inference module and right before the controller executes. Thus, we also maintain an upper bound \emph{y} on $f(x)$ that holds \emph{throughout} the plant and can therefore be used in the invariant. Variable $y$ is updated before the controller executes with the value of $\TrainFxMax$ whenever the latter provides a tighter bound. Then, it is evolved via the differential equation $y'=kv$, degrading the precision of our bound at a rate proportional to the Lipschitz constant of $f$ to ensure its preservation by the plant ($\dif f/\dif t \le \dif f / \dif x \cdot \dif x / \dif t \le k \cdot v$).

We can now derive our invariant, from which the controller's acceleration guard follows. Given a particular state and provided a bound $y \ge f(x)$, we wonder whether the train can be kept safe indefinitely from the current state by fully engaging the brakes. Given a constant, {effective} braking rate of $b$ and starting with speed $v$, it can be shown that the distance needed for the train to brake to a full stop is $\bdist{v}{b} \equiv v^2/2b$. %
Thus, a sufficient condition for the train to be safe indefinitely is $x + \bdist{v}{B-F} \le e$, since we can guarantee an effective braking rate of at least $B-F$ using the global bound $F$ on $f$. However, a stronger guarantee may result from combining the local bound $y \ge f(x)$ and the fact that $f$ is $k$-Lipschitz. Indeed, since we already know from our naive estimate that the train can stop within distance $\bdist{v}{B-F}$, we also know that the value of $f$ along this trajectory must be upper-bounded by $y + k \cdot \bdist{v}{B-F}$. This gives us an effective braking rate along the stopping trajectory of $B - (y + k \cdot \bdist{v}{B-F})$. In turn, a new stopping distance can be computed from this estimate, yielding the invariant shown in Figure~\ref{fig:overview-train-local}. The controller's acceleration guard follows from the invariant, as it simply ensures that the invariant will still hold after executing the plant for time $T$.

So far, we have seen how adaptive shields can be extracted from nondeterministic, parametric controllers. Our next step is to show how inference modules can also be synthesized that are guaranteed to instantiate such parameters from runtime observations in a statistically sound way.

\subsection{Inferring Statistically-Sound Bound Parameters}\label{sec:overview-inference}

In this section, we delve into our proposed \emph{inference strategy language}, which is used to specify nondeterministic inference strategies for instantiating bound parameters at runtime. Inference strategies are introduced by the \textsc{infer} keyword. We consider the strategy from Figure~\ref{fig:overview-train-local} as an example (the strategy from Figure~\ref{fig:overview-train-global} is analyzed in Appendix~\ref{ap:train-global-inf-strategy}). An inference strategy consists of a sequence of \emph{inference assignments}, where each assignment computes a value for a particular bound parameter and updates this parameter whenever the new value is tighter than the old one. Our language supports three forms of assignments, which are all represented in the strategy from Figure~\ref{fig:overview-train-local}. Each assignment yields a proof obligation that establishes its soundness.

The first assignment $\TrainFxMax \slassign F$ indicates that the global bound $F$ can always be used as a local bound on $f(x)$. Its soundness is justified by the validity of the following proof obligation, which is automatically generated by our framework and to be proved by the user: \[\cdots \land(\lforall{x}  (-A \le f(x) \le F)) \land \cdots \limply f(x) \le F.\] Here, the right-hand-side of the implication is simply the definition of $\TrainFxMax$, where $\TrainFxMax$ has been substituted into the assignment's right-hand-side. The left-hand-side consists of the global assumptions, from which only the relevant ones are shown here. The second assignment $\TrainFxMax \slassign \slbest i: \TrainFxMax_i + k|x - x_i|$ indicates that any previously established local bound $\TrainFxMax_i$ induces a new bound $\TrainFxMax_i + k|x-x_i|$ for the current state. The associated proof obligation is: \[  (\lforall x \lforall y |f(x) - f(y)| \le k |x - y|) \land (f(x_i) \le \TrainFxMax_i) \limply f(x) \le \TrainFxMax_i + k|x - x_i|, \] where all irrelevant assumptions have been removed for clarity. To a first approximation, the semantics of this assignment is to compute one such bound for every element in the agent's history and assign the tightest one to $\TrainFxMax$ if it beats its current value. The third assignment $\TrainFxMax \slassign \slaggregate i: \omega_i + k|x - x_i| \sland \eta_i$ is most interesting and the one that performs statistical inference from observations. To understand it better, let us start from the associated proof obligation:
\begin{equation}\label{eq:train-local-aggregate-obligation}
(\lforall x \lforall y |f(x) - f(y)| \le k |x - y|) \land (\omega_i = f(x_i) - \eta_i) \limply (f(x) \le \omega_i + k|x-x_i| + \eta_i).
\end{equation}
The validity of this obligation establishes the inequality $f(x) \le \omega_i + k|x-x_i| + \eta_i$ for every triple $(x_i, \omega_i, \eta_i)$ in the agent's history. This does not directly give us a usable bound since $\omega_i$ and $\eta_i$ are random variables, only the first one of which is observed. However, given a tolerance-level of $\Eps$, we can use the standard tail bound on Gaussians to establish a concrete upper-bound on $f(x)$ that holds with probability at least $1 - \Eps$. Here, we get the upper-bound $b_i \equiv \omega_i + k|x-x_i| + \sigma \cdot z_\Eps$, where  $z_\Eps \equiv \sqrt{2} \cdot \ErfInv(1 - 2\Eps)$ and $\ErfInv$ is the inverse of the \emph{error function}~\cite{bertsekas2008introduction}. Indeed:
\[ \Prob\{f(x) > b_i \} \,\le\, \Prob\{ \omega_i + k|x-x_i| + \eta_i > b_i \} \,=\, \Prob \{ \eta_i > \sigma \cdot z_\Eps \} \,=\, \Eps. \]
The first inequality above is a consequence of Equation~\ref{eq:train-local-aggregate-obligation}, the middle equality holds by substituting the definition of $b_i$, and the last equality holds by definition of the error function.

\newcommand{\OverviewAggBound}{b_{\lambda, \Eps}}

Such a probabilistic bound may be acceptable in cases where measurement noise is small. However, when $\sigma$ is large, it may be unacceptably conservative. Fortunately, we can get tighter bounds by aggregating multiple independent observations together. More precisely, consider some nonnegative coefficients $\lambda_i$ that sum up to one. Then, one can show that the following is a bound on $f(x)$ with probability at least $1 - \Eps$:
\begin{equation}\label{eq:overview-train-local-aggr}
  \OverviewAggBound \ \equiv \ \sum_i \lambda_i \cdot (\omega_i + k|x-x_i|) \ + \ \sqrt{\sum_i \lambda_i^2} \cdot \sigma \cdot z_\Eps.
\end{equation}
The proof is similar to the one for Equation~\ref{eq:train-local-aggregate-obligation}. We have $f(x) > b \limplyI \omega_i + k|x-x_i| + \eta_i > b$ for all $i$ and $b$. Thus, taking a convex combination, we have $f(x) > b \limplyI \sum_i \lambda_i (\omega_i + k|x-x_i| + \eta_i) > b$ for all $b$. As a consequence:
\begin{equation}\label{eq:overview-train-local-aggr-proof}
  \Prob\{f(x) > \OverviewAggBound \} \ \le\  \Prob\,\biggl\{  \,\sum_i \lambda_i (\omega_i + k|x-x_i| + \eta_i) > \OverviewAggBound \, \biggr\} \ = \  \Prob\,\biggl\{ \,\sum_i \lambda_i\eta_i > \sigma' z_\Eps\, \biggr\} \,=\, \Eps,
\end{equation}
where $\sigma' \!\equiv\! \sigma \sqrt{\sum_i \lambda_i^2}$ is the standard deviation of $\sum_i \lambda_i \eta_i$, which is a zero-mean Gaussian provided that the choice of $\lambda$ is uninformed by -- and thus \emph{independent} from -- the values of $\eta_i$ and $\omega_i$.

The bound obtained in Equation~\ref{eq:overview-train-local-aggr} is the sum of two terms. Minimizing the first term requires putting all the weight on the \emph{closest} measurement, where $k|x-x_i|$ is smallest. On the other hand, minimizing the second term requires spreading the weights uniformly between measurements, resulting in a value of $(\sigma \cdot z_\Eps) / \sqrt{n}$, where $n$ is the number of considered measurements. Finding the right tradeoff can be subtle and situation-dependent. Similarly, the value chosen for $\Eps$ at every control cycle must be substracted from a global \emph{safety budget}, whose proper management constitutes an important challenge. Fortunately, our framework does not force such choices on the shield designers. Rather, it enables the specification of nondeterministic inference strategies, where the choice of $\Eps$ and $\lambda$ is determined at each control cycle by a non-soundness-critical \emph{inference policy} that can be learned similarly to the \emph{control policy}.

The \textsc{aggregate} construct from our inference strategy language allows generalizing the reasoning above to a large class of bounds and distributions. Provided a symbolic bound that can be expressed as the sum of an \emph{observable} component and of a \emph{noise} component (syntactically separated using the \textsc{and} keyword), \textsc{aggregate} computes a weighted average of multiples instances of such bounds, using probabilistic tail inequalities to handle the aggregated noise term. More generally, a strategy written in our proposed inference languages can be compiled into one proof obligation per inference assignment, along with an inference module that integrates with an inference policy at runtime. %

\subsection{System Overview}

\begin{figure}
  \includegraphics[width=0.95\textwidth]{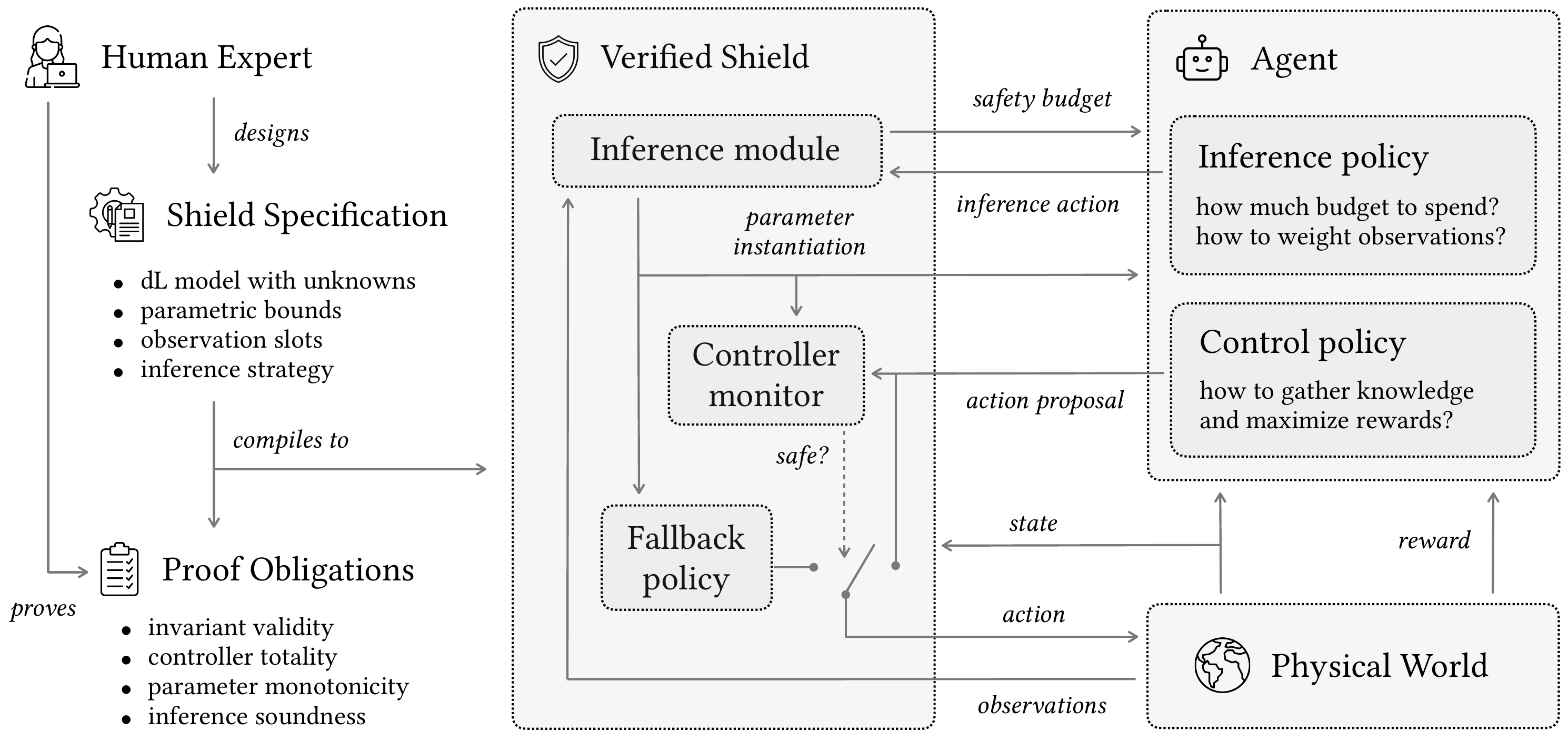}
  \caption{Adaptive Shielding Overview Diagram}\label{fig:overview-diagram}
  \Description[]{}
\end{figure}

Figure~\ref{fig:overview-diagram} provides a diagrammatic summary of our proposed framework for designing adaptive shields. In this framework, human experts are offered a domain-specific language for specifying shields in the form of parametric $\dL$ models coupled with nondeterministic inference strategies. A specification written in this language can be automatically compiled into a shield and into a series of proof obligations that are sufficient in establishing its soundness. All obligations are $\dL$ formulas, which can be discharged automatically in many cases but also proved interactively in a proof assistant such as KeYmaera X~\cite{DBLP:conf/cade/FultonMQVP15}. Obligations include the following:
\begin{itemize}
  \item The proposed invariant must be preserved when running the controller and the plant in sequence, assuming that all assumptions and bounds hold initially.
  \item The proposed invariant and the global bounds must imply the postcondition.
  \item The nondeterministic controller must be \emph{total}, in the sense that at least one action should be available in any state where the invariant and bounds are true.
  \item All inference assignments must be sound.
\end{itemize}
In the particular case where no functional unknowns are used and all differential equations admit an analytical solution expressible as a multivariate polynomial, all generated proof obligations are decidable. In general, manual proof assistance may be required for certain obligations. Notably, $\dL$ has proved effective at reasoning about \emph{unsolvable} differential equations~\cite{platzer2018differential,kabra2022verified}.

The generated shield consists of three components. The first one is an \emph{inference module} that is extracted from the expert-defined, nondeterministic inference strategy. The other two are a \emph{controller monitor} and a \emph{fallback policy}, both of which are extracted from the nondeterministic $\dL$ controller. At runtime, an untrusted agent interacts with the physical world under the protection of the shield, which overrides any proposed action that is not validated by the controller monitor using its fallback policy. The inference module is tasked to instantiate the model's bound parameters, which the controller monitor and the fallback policy depend on. The inference module is guided by the agent's inference policy, which provides hints on how the safety budget should be spent and how observations should be weighted when building aggregates. The safety budget is initialized with the system's tolerated probability of failure, which is typically a small value such as $10^{-6}$. In cases where not enough budget remains to honor the inference policy's recommendation, the inference module skips the associated inference assignment.

Importantly, the inference policy must \emph{not} depend on the value of the observations processed by the inference module and must only base its decisions on the \emph{availability} of these observations and on features of their associated states. Intuitively, this restriction prevents the agent from $p$-hacking its way to unsafety~\cite{significant}. Indeed, it would be unsound for the agent to make several measurements of the same quantity and then discard all measurements but the most favorable one. Another requirement enforced by the inference module is that the same observation cannot be reused across control cycles. Indeed, allowing the reuse of observations across control cycles enables an indirect form of cherry-picking since the next state after each control cycle may be influenced by the observations made during this cycle and thus carry some amount of information about it.

\section{Background}

\subsection{Differential Dynamic Logic}\label{sec:dL-semantics}

\newcommand{\dLTermCompDefs}{\sim\  \in \{= \BnfOpSep < \BnfOpSep \le \BnfOpSep \ge \BnfOpSep >\}}
\newcommand{\dLBinopDefs}{\AnyBinOp \in \{+ \BnfOpSep \times \BnfOpSep \min \BnfOpSep \max\}}

\begin{figure}
  \EquationsFigureSize
  \begin{align*}
    \text{Hybrid Program:}&&
    \alpha,\beta &\BnfDef
      x \dlassign e \BnfOr
      x \dlassign * \BnfOr
      ?P \BnfOr
      x'=f(x) \dldom Q \BnfOr
      \alpha \cup \beta \BnfOr
      \alpha \seqI \beta \BnfOr
    \alpha^* \\
    \text{Formula:}&&
    P, Q &\BnfDef
      \theta_1 \sim \theta_2 \BnfOr
      \lforall x P \BnfOr
      \lexists x P \BnfOr
      \lboxI{\alpha}{P} \BnfOr
      \ldiamondI{\alpha}{P} \BnfOr
      \!\lnot P\! \BnfOr
      P \lor Q \BnfOr
      P \land Q \BnfOr
      P \limplyI Q \\
    \text{Term:}&&
    \theta &\BnfDef
      r \BnfOr x \BnfOr f(\theta_1,\dots,\theta_n) \BnfOr |\theta| \BnfOr
      \theta_1 \AnyBinOp \theta_2
  \end{align*}
  \begin{gather*}
  \text{Number literal: \ } r \in \Reals \qquad
  \text{Variable: \ } x \in \VarsAll \qquad
  \text{Function symbol: \ } f \in \SymbsAll \\
  \text{Term comparison: } \dLTermCompDefs \qquad
  \text{Arithmetic operator: \ } \dLBinopDefs
  \end{gather*}
  \begin{gather*}
    \Funs \,\equiv\, {\textstyle \bigcup_{n \in \Nats}} \, (\Reals^n \to \Reals) \quad
    \InterpsAll \,\equiv\, \SymbsAll \to \Funs \quad
    \StatesAll \,\equiv\, \VarsAll \to \Reals \\
    \sem{\alpha} : \InterpsAll \to \Subsets{\StatesAll \times \StatesAll} \quad
    \sem{P} : \Subsets{\InterpsAll \times \StatesAll} \quad
    \sem{\theta} : \InterpsAll \times \StatesAll \to \Reals
  \end{gather*}
  \vspace{-0.4cm}
  \caption{Syntax and semantics of $\dL$. $\Subsets{\cdot}$ denotes the powerset operator.}\label{fig:dl-syntax-semantics}
  \Description[]{}
\end{figure}

Differential dynamic logic ($\dL$)~\cite{DBLP:journals/jar/Platzer08,DBLP:journals/jar/Platzer17} is designed specifically to verify hybrid systems with both discrete and continuous dynamics.
The syntax and semantics of $\dL$ is summarized in Figure~\ref{fig:dl-syntax-semantics}. A \emph{state} is defined as a mapping from variable names to real values. An \emph{interpretation} is defined as a mapping from function symbols to real functions of proper arity (possibly zero). The semantics of a \emph{hybrid program} $\alpha$ can be defined by induction on $\alpha$. For $I$ an interpretation, $\sem{\alpha} (I)$ denotes the set of all pairs of states $(s_1, s_2)$ such that $s_2$ is reachable from $s_1$ by executing $\alpha$. We provide an intuitive summary of such semantics in Table~\ref{tab:hps} but formal definitions can be found in the literature~\cite{DBLP:journals/jar/Platzer17}. The semantics $\sem{P}$ of a \emph{formula} $P$ is defined as the set of all $(I, s)$ pairs such that $P$ is true in state $s$ and under interpretation $I$. It follows naturally from first-order logic. In addition, $[\alpha]P$ is true if and only if $P$ is true after all runs of the hybrid program $\alpha$ and $\langle \alpha \rangle P$ is true if and only if there exists a run of the hybrid program $\alpha$ after which $P$ is true. Atomic formulas consist in comparisons of \emph{terms}. The semantics $\sem{\theta}$ of a term $\theta$ maps a pair of an interpretation and a state to a real value. A formula $P$ is said to be \emph{valid} (written $\Fvalid P$) if it is true in all states and interpretations.

\subsection{Notations for Valuations}

Given a set of variables $V$, a \emph{valuation} of $V$ is defined as a function $\rho : V \to \Reals$ that maps each variable in $V$ to a value. Moreover, a \emph{partial valuation} of $V$ is defined as a {partial} function $\rho': V \parto \Reals$ that attributes values to a subset $\dom \rho'$ of variables from $V$ (note the special arrow symbol for partial functions). We denote $\EmptyValuation$ an empty valuation. Moreover, if $v$ and $v'$ are valuations of two disjoint subsets of variables $V$ and $V'$, we write $v \CatValuation v'$ the valuation of $V \cup V'$ induced by $v$ and $v'$.

\subsection{Reinforcement Learning}

\newcommand{\RLState}{{S}}
\newcommand{\RLAction}{{A}}

Reinforcement learning (RL) is about learning to make decisions in an environment in such a way to maximize rewards. Environments are typically represented using Markov Decision Processes (MDPs)~\cite{sutton2018reinforcement}. An MDP is defined by a tuple $\tuple{\RLState, \RLAction, P, R, \gamma}$ where $\RLState$ is the state space, $\RLAction$ is the action space, and $P: \RLState \times \RLAction \to \Distrs{\RLState}$ is the transition function, which maps any state-action pair to a probability distribution over new states. $R: \RLState \times \RLAction \times \RLState \to \Reals$ is the reward function, which describes the reward associated with a specific state-action pair and a specific next state. Finally, $\gamma\in (0, 1)$ is the factor by which future rewards are discounted at each time step. The goal of RL is to find an optimal policy $\pi:\RLState \to \Distrs{\RLAction}$ for the agent to maximize the expected discounted sum of future rewards: $\mathbb{E}_\pi\left[\sum_{i=0}^{\infty}\gamma^iR(s_i, a_i)\right]$. In the most common \emph{model-free} RL setting, $P$ and $R$ are unknown and the agent accesses samples of these by directly interacting with the environment.

\subsection{Safe Reinforcement Learning via Shielding}
\label{sec:safe-rl-via-jsc}

This work builds on the Justified Speculative Control framework (JSC)~\cite{DBLP:conf/aaai/FultonP18}, which leverages models written in $\dL$ to shield reinforcement learning agents by only allowing provably-safe control actions. JSC crucially relies on the ability to extract monitors~\cite{DBLP:journals/fmsd/MitschP16} from nondeterministic $\dL$ controllers. To define those monitors~\cite{DBLP:journals/fmsd/MitschP16}, we need to bridge $\dL$ models with reinforcement learning environments. Thus, consider a $\dL$ model of the kind defined in Equation~\ref{eq:example-model-def}. Such a model induces a state space $\ProgState \equiv \FV(\JSCModel) \to \Reals$, where a state maps each free variable occuring in the model to a real value. In addition, the controller $\Ctrl$ induces an action space that can be defined inductively:
\begin{gather*}
    \CtrlActionSem{\alpha \cup \beta} \!\equiv\! \CtrlActionSem{\alpha} \uplus \CtrlActionSem{\beta}, \ \
    \CtrlActionSem{\alpha \seqI \beta} \!\equiv\! \CtrlActionSem{\alpha} \times \CtrlActionSem{\beta}, \ \
    \CtrlActionSem{x \dlassign *} \!\equiv\! \Reals, \ \
    \CtrlActionSem{x \dlassign e} \!\equiv\! \CtrlActionSem{?Q} \!\equiv\! \UnitSet.
\end{gather*}
In this definition, $X \uplus Y \equiv \{(\SumLeft, x) \!:\! x \!\in\! X\} \cup \{(\SumRight, y) \!:\! y \!\in\! Y\}$ denotes the \emph{disjoint union} of sets $X$ and $Y$ and $\UnitSet \equiv \{\Unit\}$ denotes a set with a single element. Intuitively, an action records a sequence of choices that uniquely characterizes a run of the controller.
For example, the discrete controller defined in Equation~\ref{eq:example-model-ctrl} has an action space $\CtrlActionSem{\Ctrl} = \UnitSet \uplus (\UnitSet \times \UnitSet)$, which is isomorphic to a 2-element set containing a \textit{braking} and an \textit{accelerating} action.
The continuous controller used in Figure~\ref{fig:overview-train-global} has an action space $\CtrlActionSem{\Ctrl} = \Reals \times \UnitSet$.
For any state $s \in \ProgState$ and action $a \in \CtrlActionSem{\Ctrl}$, we write $\CtrlExeSem{\Ctrl}(s, a)$ for the state that results from executing $\Ctrl$ while performing the sequence of choices encoded in $a$. A formal definition is provided in Appendix~\ref{ap:defining-action-ctrl}. From the structure of a nondeterministic controller, one can extract a \emph{controller monitor} that takes as an input a state and an action and returns a Boolean value indicating whether or not the action is permissible.

\begin{lemma}[Existence of a controller monitor and fallback policy] \label{lem:ctrl-monitor} Let $\Ctrl$ be a nondeterministic controller in the form of a $\dL$ hybrid program that is free of loops, modalities, differential equations, quantifiers and function symbols of nonzero arity. Then, there exists a \emph{computable} function $\CtrlMonitorSem{\Ctrl} : (\FV(\Ctrl) \toI \Reals) \times \CtrlActionSem{\Ctrl} \to \{\True, \False\}$ that we call \emph{controller monitor} and such that for all state $s$ and action $a$, $\CtrlMonitorSem{\Ctrl}(s, a) = \True$ if and only if $(s, \,\CtrlExeSem{\Ctrl}(s, a)) \in \sem{\Ctrl}(\EmptyValuation)$. In addition, given any formula $\Inv$ such that $\Fvalid \Inv \limplyI \ldiamondI{\Ctrl}{\,\True}$ (i.e. there exists an outgoing controller transition for all states in $\Inv$), there exists a computable function $\CtrlFallbackSem{\Ctrl} : (\FV(\Ctrl) \toI \Reals) \to \CtrlActionSem{\Ctrl}$ that we call \emph{fallback policy} and such that for all state $s$ such that $(\EmptyValuation, s) \in \sem{\Inv}$, $(s, \,\CtrlExeSem{\Ctrl}(s, \,\CtrlFallbackSem{\Ctrl}(s)) \in \sem{\Ctrl}(\EmptyValuation)$.
\end{lemma}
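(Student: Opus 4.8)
The plan is to construct both functions by structural induction on $\Ctrl$, exploiting that the syntactic restrictions make the controller a finite, loop-free, first-order program. The key enabling observation is that, since $\Ctrl$ contains no modalities, no quantifiers and no function symbols of nonzero arity, every term occurring in $\Ctrl$ denotes a computable real-valued function of the valuation of $\FV(\Ctrl)$, and every test $?P$ denotes a \emph{decidable} predicate on such valuations (a Boolean combination of comparisons between computable terms built from $+$, $\times$, $\min$, $\max$ and $|\cdot|$). This is precisely what the hypotheses buy us.

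For the monitor I would define $\CtrlMonitorSem{\Ctrl}$ by recursion on the grammar, in lockstep with the execution function $\CtrlExeSem{\Ctrl}$. In the base cases, an assignment $x \dlassign e$ and a nondeterministic assignment $x \dlassign *$ always yield a valid transition, so their monitors return $\True$ (on the unique, resp.\ any, action) while their execution updates $x$ accordingly; a test $?P$ returns the decidable truth value of $P$ in the current valuation and leaves the state unchanged. For a sequence, an action is a pair $(a,b)$ and I would set $\CtrlMonitorSem{\alpha \seqI \beta}(s,(a,b)) \equiv \CtrlMonitorSem{\alpha}(s,a) \land \CtrlMonitorSem{\beta}(\CtrlExeSem{\alpha}(s,a), b)$, threading the intermediate state produced by executing $\alpha$ into the monitor of $\beta$. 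For a choice $\alpha \cup \beta$, the action carries an injection tag ($\SumLeft$ or $\SumRight$) selecting a branch of $\CtrlActionSem{\alpha} \uplus \CtrlActionSem{\beta}$, and the monitor dispatches to the selected branch. Every clause is manifestly computable given base-case decidability, so $\CtrlMonitorSem{\Ctrl}$ is computable.

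The statement I would actually prove by induction is the sharper correspondence that an action records exactly one candidate run: $\CtrlMonitorSem{\Ctrl}(s,a) = \True$ iff the sequence of choices recorded in $a$ drives $\Ctrl$ through a genuine, test-passing run, in which case its endpoint is $\CtrlExeSem{\Ctrl}(s,a)$ and the pair $(s, \CtrlExeSem{\Ctrl}(s,a))$ lies in $\sem{\Ctrl}(\EmptyValuation)$; conversely a rejected action is one whose selected branch contains a test that fails after the preceding assignments, so that branch contributes no run. Soundness is immediate by induction, the only delicate step being that, once the action fixes all nondeterminism, the intermediate state witnessing the relational composition of $\sem{\alpha}$ and $\sem{\beta}$ is \emph{exactly} the deterministic state $\CtrlExeSem{\alpha}(s,a)$ threaded by the monitor. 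I expect the main obstacle of this part to be exactly this bookkeeping: matching the injection tags of the action space with the branches of the union semantics, and reconciling the deterministic execution function with the relational, existentially-quantified semantics of sequential composition (which is why I would carry the run-level correspondence, rather than mere reachability of the endpoint, through the induction).

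For the fallback policy, the hypothesis $\Fvalid \Inv \limply \ldiamondI{\Ctrl}{\,\True}$ guarantees that from every state with $(\EmptyValuation, s) \in \sem{\Inv}$ at least one action is accepted by the monitor; the task is to \emph{compute} one. Since $\Ctrl$ is loop-free it has only finitely many syntactic branches, each a straight-line sequence of (possibly nondeterministic) assignments and tests obtained by resolving every $\cup$. I would enumerate these finitely many branches and, for each, view feasibility from $s$ as a satisfiability question: does there exist a choice of reals for the nondeterministic assignments making every test on the branch true? Substituting the assignments forward turns this into an existentially-quantified formula of real arithmetic (with $\min$, $\max$ and $|\cdot|$, which are first-order definable), so feasibility is decidable and a witnessing tuple of reals can be extracted effectively. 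The fallback returns the structured action assembled from the branch tags and the extracted witnesses; by the diamond hypothesis at least one branch is feasible whenever $s$ satisfies $\Inv$, and by the monitor correspondence the returned action is accepted, hence yields a transition in $\sem{\Ctrl}(\EmptyValuation)$. The principal obstacle here is purely the effectiveness of witness-finding for $x \dlassign *$, which rests on the decidability of the first-order theory of real-closed fields (quantifier elimination) together with a faithful encoding of the non-polynomial primitives $\min$, $\max$ and $|\cdot|$ within it.
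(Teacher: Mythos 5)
Your proposal is correct and takes essentially the same approach as the paper: the monitor is the same structural recursion that threads $\CtrlExeSem{\alpha}$ through sequential composition and dispatches on injection tags for $\cup$ (Figure~\ref{fig:deriving-cm}), and the fallback is obtained, as in Appendix~\ref{ap:generating-fallbacks}, by reducing feasibility to decidable existential real arithmetic (with $\min$, $\max$, $|\cdot|$ encoded first-order) and extracting a witness that is reassembled into a structured action. The only cosmetic difference is that you enumerate the finitely many $\cup$-branches explicitly and solve one existential query per branch, whereas the paper encodes all nondeterminism (branch tags and $x \dlassign *$ values) as fresh variables $\vec u$ in a single modified program $\alpha_{\vec u}$, rewrites $\ldiamondI{\alpha_{\vec u}}{\True}$ into an equivalent quantifier-free formula via the $\dL$ axioms, and makes a single SMT query whose model maps back to an action.
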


In our train example from Figure~\ref{fig:overview-train-global} (with a \emph{continuous} control input), the action space $\CtrlActionSem{\Ctrl}$ is isomorphic to $\mathbb{R}$. An action is a commanded acceleration $u$ and the runtime monitor $\CtrlMonitorSem{\Ctrl}$ simply evaluates the two tests featured in the controller definition for the selected value of $u$: it checks that both formula $(-B \!\le\! u \!\le\! A)$ and formula \( (x + vT + (\TrainThetaAcc)T^2/2 + (v + (\TrainThetaAcc)T)^2/2(\TrainThetaBrakingRate) \le e) \) evaluate to \emph{true} in the current state and for the current values of bound parameters $\TrainThetaSlopeMin$, $\TrainThetaSlopeMax$ and $\TrainThetaBiasMax$.
In our train example from Figure~\ref{fig:overview-train-local} (with a \emph{binary} control input), the action space is isomorphic to $\{\textsf{brake}, \textsf{accelerate}\}$. The braking action is always allowed. To determine whether or not to allow acceleration, the monitor evaluates the test from the associated branch of the nondeterministic controller from Figure~\ref{fig:overview-train-local}. In general, controller monitors can be derived from shield specifications via a simple syntactic transformation. We formalize this transformation and provide a proof of Lemma~\ref{lem:ctrl-monitor} in Appendix~\ref{ap:ctrl-monitor}. The KeYmaera X toolchain features a tool called ModelPlex~\cite{DBLP:journals/fmsd/MitschP16} for automatically deriving executable controller monitors from $\dL$ models.

The existence of a fallback policy is guaranteed by the assumption $\Fvalid \Inv \!\!\limply\!\! \ldiamondI{\Ctrl}{\,\True}$, which indicates that there exists \emph{at least} one way to resolve the nondeterministic choices in $\Ctrl$ in such a way to pass all tests (i.e. not trigger any monitor alert). The corresponding action can be computed systematically by solving a decidable SMT problem, hence the existence of a decidable fallback policy (Appendix~\ref{ap:generating-fallbacks}). However, for the sake of monitoring efficiency, tools like ModelPlex require the user to specify an \emph{explicit} fallback policy on a case-by-case basis. Doing so is never a problem in practice since proving $\Fvalid \Inv \!\!\limply\!\! \ldiamondI{\Ctrl}{\,\True}$ requires characterizing such a policy anyway. In particular, an explicit fallback policy can be automatically extracted from a constructive proof of this formula~\cite{bohrer2020constructive}. In our train examples, a possible fallback policy is to brake with maximal force, which corresponds to action $-B$ in the continuous case and action ``\textsf{brake}'' in the discrete case.

\section{Adaptive Shielding}\label{sec:adaptive-shielding}

\subsection{Shield Specifications}\label{sec:shield-spec}

We define a language to specify adaptive shields via parametric safety models. Models can assume a set of parametric bounds on a number of unknown quantities. The derived runtime controller is parametric in these same bounds, which are maintained and improved at runtime by a derived inference module.  Examples of \emph{shield specifications} are available in Figures~\ref{fig:overview-train-global}~and~\ref{fig:overview-train-local}. Formally, a \emph{shield specification} is defined by a tuple of the following elements:

\makeMathKeywordText
\begin{description}
  \item[$\Const$:] a set of \emph{constants}, which are $\dL$ symbols of arity zero whose value is known at runtime. %
  \item[$\Unknown$:] a set of $unknowns$, which are $\dL$ symbols of fixed arity, whose value is \emph{not} known at runtime and can be \emph{estimated} by the inference module using observations.
  \item[$\Assum$:] a set of \emph{assumptions}, which are $\dL$ formulas representing global assumptions about the constants and unknowns. Assumptions can feature symbols but no free variables. When clear from the context, we also write $\Assum$ for the \emph{conjunction} of all assumptions.
  \item[$\StateVar$:] a set of \emph{state variables}, which are $\dL$ variables used to represent model state. In our concrete syntax, the set of state variables is inferred as the set of all mentioned free variables that are not parameters, observation variables or noise variables (see below).
  \item[$\Param, \Dir, \Bound$:] a set of \emph{parametric bounds}. $\Param$ is a set of $\dL$ variables called \emph{bound parameters} (or \emph{parameters} for short). Each parameter $p$ is mapped to a $\dL$ formula $\Bound_p$ that is either \emph{monotone} or \emph{anti-monotone} with respect to $p$, depending on \emph{bound direction} $\Dir_p \in \{ \UpperAnnot, \LowerAnnot \}$. In our concrete syntax, the bound direction can be omitted whenever easily inferrable. For $\theta$ a $\dL$ term, we write $\BoundWith{p}{\theta}$ for the result of substituting $\theta$ for $p$ in $\Bound_p$. If $\Dir_p = \UpperAnnot$, we call $\Bound_p$ an \emph{upper-bound} and we mandate that it is monotone, meaning that $\Fvalid \forall x y \; (x  \le y \limplyI \BoundWith{p}{x} \limplyI \BoundWith{p}{y})$. If $\Dir_p = \LowerAnnot$, we call $\Bound_p$ a \emph{lower-bound} and we mandate that it is anti-monotone: $\Fvalid \forall x y \; (x \le y \limplyI \BoundWith{p}{y} \limplyI \BoundWith{p}{x})$. $\Bound_p$ can feature constants, unknowns and state variables. It cannot feature any parameter other than $p$. It is said to be \emph{local} if it contains a free occurence of a state variable. Otherwise, it is said to be \emph{global}. The associated parameter $p$ is also called \emph{local} or \emph{global} accordingly. We write $\LocalParam$ the set of \emph{local} parameters and $\GlobalParam$ the set of \emph{global} parameters. We write $\GlobalBound \equiv \bigwedge_{p \in \GlobalParam} \Bound_p$ for the conjunction of {all} global bounds. Finally, when clear from the context, we use the notational shortcut $\Bound \equiv \bigwedge_{p \in \Param} \Bound_p$ for the conjunction of all bounds. For $b \in \Param \parto \Reals$, we also write $\BoundConj{b} \equiv \bigwedge_{p \in \dom b} \BoundWith{p}{b(p)}$. %
  \item[$\Ctrl$:] a \emph{controller}, in the form of a (nondeterministic) $\dL$ hybrid program that is free of loops, modalities, differential equations and quantifiers. The controller can involve constants, parameters and state variables but \emph{no unknowns}. Only state variables can be modified. For $c \in \Const \toI \Reals$ and $b \in \Param \toI \Reals$, we write $\CtrlWith{c, b}$ for the result of substituting all constants and parameters in $\Ctrl$ by their value according to $c$ and $b$. Thus, $\CtrlWith{c, b}$ has no symbols and its free variables are all in $\StateVar$. For any $(c, b)$ pair, Lemma~\ref{lem:ctrl-monitor} guarantees that a controller monitor and a fallback policy can be extracted from $\CtrlWith{c, b}$.
  \item[$\Plant$:] a $\dL$ hybrid program that represents the \emph{physical environment}. As opposed to the controller, it can feature unknowns but no parameters. Only state variables can be modified.
  \item[$\Safe$:] a $\dL$ formula that represents the \emph{safety constraint} under which the system must operate. It can involve constants, unknowns and state variables but no parameters, since the safety condition cannot change over time as knowledge is gathered.
  \item[$\Inv$:] an \emph{invariant}, in the form of a $\dL$ formula. The invariant can feature constants, state variables and unknowns but only \emph{global} parameters. Local parameters are not allowed in the invariant. The following proof obligations are generated that involve the invariant:%
    \begin{enumerate}
      \item \label{obl:inv-implies-post} The invariant must imply the postcondition:
        $\Fvalid \Assum \wedge \GlobalBound \wedge \Inv \limplyI \Safe$.
      \item \label{obl:inv-preserved} The invariant must be preserved:
          $\Fvalid \Assum \wedge \Bound \wedge \Inv \limplyI \lbox{\Ctrl \seqI \Plant}{\Inv}$.
      \item \label{obl:ctrl-total} There always exists a safe controller action: $\Fvalid \Assum \wedge \Bound \wedge \Inv \limplyI \ldiamond{\Ctrl}{\True}$.
      \item \label{obl:inv-monotone} The invariant must be \emph{monotone} with respect to upper-bound parameters and \emph{anti-monotone} with respect to lower-bound parameters.
    \end{enumerate}
  \item[$\NoiseVar, \Noise$:] a set of \emph{noise variable declarations}. $\NoiseVar$ is a set of $\dL$ variables that can be used in the definition of observations and that model samples from \emph{mutually independent} random variables. Each  {noise variable} $\eta \in \NoiseVar$ is mapped to an expression $\Noise_\eta$ that denotes a probability distribution. $\NoiseVar_\eta$ can be either $\slnormal(\theta_1, \theta_2)$ for a normal distribution, $\sluniform(\theta_1, \theta_2)$ for a uniform distribution or $\slbernouilli(\theta_1)$ for a Bernouilli distribution. In all cases, $\theta_1$ and $\theta_2$ are $\dL$ terms that can feature constants and state variables. We write $\sem{\Noise} : (\Const \toI \Reals) \to \NoiseVar \to \Distr(\Reals)$ for the semantic denotation of $\Noise$.
  \item[$\ObsVar, \Obs$:] a set of \emph{observations}. $\ObsVar$ is a set of $\dL$ variables called \emph{observation variables}. Each observation variable $\omega$ is mapped to a defining $\dL$ term $\Obs_\omega$ that may feature constants, unknowns, state variables and noise variables (but no bound parameter).
 \item[$\Inference$:] an \emph{inference strategy}, which defines an inference module that issues concrete values for bound parameters at runtime, using an external \emph{inference policy} for guidance.
\end{description}
\makeMathKeywordMath
The inference strategy can be defined in a custom, domain-specific language that is explained in details in Section~\ref{sec:inference-strategy-language}. Before we dive into this language though, we characterize the abstract requirements that an inference module must fulfill in Section~\ref{sec:symbolic-inference-bounds}. The soundness of our proposed inference strategy language can then be established against these requirements.

\subsection{Symbolic Bound Instantiations}\label{sec:symbolic-inference-bounds}

An inference strategy $\Inference$ induces an action space $\InfActionSem{\Inference}$ for the agent's inference policy, which is defined rigorously in Section~\ref{sec:inference-strategy-language}. In particular, inference actions associated with a single $\textsc{aggregate}$ assignment are $(\Eps, \lambda)$ pairs, as seen in Section~\ref{sec:overview-inference}. An inference strategy denotes a function $\sem{\Inference} : \InfActionSem{\Inference} \to \List{\Param \times \SymbBoundExpr \times \EpsType}$ that maps an inference action to a list of {symbolic inference assignments}. A \emph{symbolic inference assignment} consists of a triple $(p, e, \Eps)$ of a bound parameter $p$, a \emph{symbolic bound instantiation} $e$ and a failure probability $\Eps$.

To illustrate the concept of a \emph{symbolic bound instantiation}, consider the following example statement from Section~\ref{sec:overview-inference}: ``$\TrainFxMax \slassign \slaggregate i: \omega_i + k|x - x_i| \sland \eta_i$''. Provided a $(\Eps, \lambda)$ pair and a history of states and observations, the right-hand-side of this statement can be evaluated into a concrete, numerical value proposal for $\TrainFxMax$. This is done in two stages, the first one of which being the computation of a \emph{symbolic bound instantiation} $e$ for $\TrainFxMax$. For example, if $\Eps=10^{-8}$, $\lambda_2 = 0.3$, $\lambda_5 = 0.7$ and $\lambda_i = 0$ for all $i \notin \{2, 4\}$, we obtain the following symbolic bound instantiation:
\begin{equation}\label{eq:symb-inst-example}
  e \,\equiv\, 0.3 \times (\omega_2 + k|x-x_2|) + 0.7 \times (\omega_5 + k|x-x_5|) + \InvCCDFwith{\eta_2,\eta_5 \sim \slnormal(0, \,\sigma^2)}(0.3 \eta_2+0.7\eta_5, 10^{-8}).
\end{equation}
In the expression above, the $\InvCCDF$ construct provides a symbolic representation of the \emph{inverse tail function} of a probability distribution, also called its \emph{inverse complementary cumulative distribution function} -- hence the notation. Given a random expression and a tolerance $\Eps$, $\InvCCDF$ returns the best upper-bound obtainable on this expression that is true with probability $1-\Eps$ at least. For example, we have $\InvCCDFwith{X \sim \sluniform(0, 1)}(X + 1, \Eps) = 2 - \Eps$ for all $\Eps \in [0, 1]$ since $\Prob_{X \sim \sluniform(0, 1)}\{X + 1 > 2 - \Eps\} = \Eps$. Details on how to compute or approximate the $\InvCCDF$ operator numerically are available in Appendix~\ref{sec:eval-tail-expressions}.

Provided some values for the constant symbols, for the current state and for past states and observations from the agent's history, the symbolic bound instantiation from Equation~\ref{eq:symb-inst-example} can be in turn evaluated into a real number. There are two reasons for performing such staging. First, it makes it very clear that $\sem{\Inference}$ has no access to the actual observation values when building $e$, which is critical for soundness (as previously discussed in Section~\ref{sec:overview-inference}). Second, this pushes the orthogonal challenge of evaluating $\InvCCDF$ outside of the core inference machinery.

\subsubsection{Formal syntax and semantics}

A sketch of a formal syntax and semantics for symbolic bound instantiations is provided in Figure~\ref{fig:tail-expressions}. A symbolic bound instantiation $e$ can be a $\dL$ term $\theta$, the sum of two other symbolic bound instantiations, an application of the $\InvCCDF$ operator or a \emph{guarded expression} $(e \WhenSymbBE P)$, where $P$ is a quantifier-free, modality-free $\dL$ formula. Importantly, the evaluation of a symbolic bound instantiation can fail and return a special value $\OptNone$. It does so when attempting to access the value of an undefined variable or symbol, or when evaluating $(e \WhenSymbBE P)$ while $P$ evaluates to $\False$. Support for $\OptNone$ is important since our framework does not mandate observations to be available at each cycle, making some observation variables possibly unassigned. In addition, ``$\WhenSymbBE$'' is necessary for our inference strategy language to support conditionally sound bounds, which we use in the example from Figure~\ref{fig:overview-train-global} and discuss further in Section~\ref{sec:inference-strategy-language}.

\subsubsection{Indexed variables}

Finally, symbolic bound instantiations can contain special $\dL$ variables, whose name contains an integer index (i.e. $\omega_2$). Semantically, there is nothing special about such variables and we can just regard them as normal variables following a naming convention. The only thing that matters is that for any $x \in V$ and $i \in \Nats$, $x_i \notin V$ where $V \equiv \StateVar \cup \Param \cup \ObsVar \cup \NoiseVar$. We also introduce some notations to ease the manipulation of indexed variables. For $v$ a partial valuation over $V$ and $i \in \Nats$, we write $\TagWithIndex{v}{i} \equiv \{x_i \mapsto r : v(x) = r \}$ the same valuation in which all domain variables are tagged with index $i$. For example, if $x, y \in V$ and $v = \{x \mapsto 0, y \mapsto 1\}$, then $\TagWithIndex{v}{2} = \{x_2 \mapsto 0, y_2 \mapsto 1\}$. In addition, for $P$ a formula with variables in $V$, we write $\TagWithIndex{P}{i}$ the formula that results from $P$ after tagging every free variable with index $i$. Finally, we also allow the use of symbolic names as indices. For example, in the inference assignment ``$\slaggregate i: \omega_i + k|x - x_i| \sland \eta_i$'', $\omega_i$ is a normal $\dL$ variable whose name follows a particular convention. We introduce a special syntactic transformation for ``instantiating'' symbolic indices in formulas. For example, if $P \equiv x_i + x_j > 0$, we write $\FSubst{P}{i,j}{2,3} \equiv x_2 + x_3 > 0$.

\begin{figure}
  \EquationsFigureSize
  \begin{gather*}
    \delta \BnfDef \slnormal(\theta_1, \theta_2) \BnfOr \sluniform(\theta_1, \theta_2) \BnfOr \slbernouilli(\theta) \qquad
    e \BnfDef \theta \BnfOr e_1 + e_2 \BnfOr e \WhenSymbBE P  \BnfOr \InvCCDFwith{\eta_1 \sim \delta_1 \cdots \eta_n \sim \delta_n}(\theta_1, \theta_2) %
  \end{gather*}
  \begin{align*}
    \SymbBESem{e} &\aligncolon (\SymbsAll \to \Funs) \times (\VarsAll \parto \Reals) \to \OptReals \\
    \SymbBESem{\InvCCDFwith{\eta_1 \sim \delta_1 \cdots \eta_n \sim \delta_n}(\theta_1, \theta_2)}(I, v) &\in
       \Bigl\{\,b : \Prob_{\eta_i \sim \sem{\delta_i}(I, v)}\bigl(\sem{\theta_1}(I, v \!\CatValuation\! (\CatValuation_i \eta_i)) > b \bigr) \le \sem{\theta_2}(I,v)\,\Bigr\} \cup \{\OptNone\}
  \end{align*}
  \vspace{-0.4cm}
  \caption{Syntax and semantics of symbolic bound instantiations. We only show the contract that $\InvCCDF$ must fulfill for soundness. Concrete implementations must optimize for small, real return values.  }\label{fig:tail-expressions}
  \Description[]{}
\end{figure}

\subsection{Soundness of Symbolic Inference Assignments}

As mentioned in the previous section, an inference strategy denotes a function $\sem{\Inference} : \InfActionSem{\Inference} \to \List{\Param \times \SymbBoundExpr \times \EpsType}$ that maps an inference action to a list of \emph{symbolic inference assignments}. Our proposed inference strategy language is \emph{sound} in the sense that any strategy expressed with it produces \emph{sound} assignments by construction. In turn, a symbolic inference assignment $(p, e, \Eps)$ is \emph{sound} if $e$ evaluates to a ``correct'' instantiation of $p$ with probability at least $1-\Eps$. To make this definition more precise, we need a couple of preliminary definitions:
\begin{itemize}
  \item A \emph{state} is defined as a valuation of the state variables: $\ProgState \equiv \StateVar \to \Reals$.
  \item A \emph{bound instantiation} is a partial valuation of the parameters: $\Inst \equiv \Param \parto \Reals$.
  \item An \emph{observation availability set} is a subset of observation variables indicating what observation were made at one point in time: $\ObsAvail \equiv \Subsets{\ObsVar}$.
\end{itemize}
We can now define the notion of \emph{soundness} for symbolic inference assignments. Intuitively, $(p, e, \Eps)$ is \emph{sound} if and only if for any possible history of states, observations and bound instantiations and for any intepretation of the model unknowns, updating an instantiation $b$ by assigning the value of $e$ to $p$ preserves the truth of $\BoundConj{b}$ with probability at least $1-\Eps$. The challenge in formalizing this intuition is to precisely characterize what counts as a \emph{possible} history. We do so in Definition~\ref{def:sound-inf-assignment}, building a valuation $v$ that stands for an arbitrary, consistent history.
\begin{definition}[Sound inference assignment]\label{def:sound-inf-assignment}
  A symbolic inference assignment $(p, \BSymb, \Eps)$ is said to be \emph{sound} if and only if for any $c \in \Const \toI \Reals$, $u \in \Unknown \toI \Funs$, $n \in \Nats$, $(s_1 \dots s_{n}) \in \ProgState^{n}$, $(z_1 \dots z_{n}) \in \ObsAvail^{n}$ and $(b_1 \dots b_{n}) \in \Inst^{n}$ such that $(c \CatValuation u, s_i) \in \sem{\Assum \land \Inv \land \BoundConj{b_i}}$ for all $i \le n$, we have:
  \[\Prob\left\{ (c \CatValuation u, s_{n}) \notin \sem{\BoundConj{\UpdateMap{b_n}{p}{r}}} \right\} \le \Eps \]
   where
    $r = \sem{e}(c, v)$,
    $v = \bigl(s_n \CatValuation b_n \bigr) \cup \bigl(\BigCatValuation_{1 \le i \le n} \TagWithIndex{s_i}{i} \CatValuation \TagWithIndex{b_i}{i} \CatValuation \TagWithIndex{o_i}{i}\bigr)$,
    $o_i = \{\omega \mapsto \sem{\Obs_\omega}(c \CatValuation u, s_i \CatValuation \eta_i) : \omega \in \Param \cap z_i \}$ for all $1 \le i \le n$,
    and $\eta_1 \cdots \eta_{n} \sim \sem{\Noise}(c)$ i.i.d.
\end{definition}

\subsection{Compatible Environments}

A shield acts as a buffer between an untrusted agent and a \emph{compatible} environment. By \emph{compatible}, we mean that the environment must conform to the assumptions made by the shield specification, without which no safety guarantee can be provided. The following definition formalizes this notion. %

\begin{definition}[Compatible environments]\label{def:compat-env}
  Consider a shield specification $\tuple{\Const, \dots, \Inference}$. A \emph{compatible environment} is a tuple $\tuple{\tuple{S, A, P, R, \gamma}, \StateMapping, \ActionMapping, \RevActionMapping, c, u, \ObsAvailFun, \ObsMeasurementFun}$ where:
  \begin{itemize}
    \item $\tuple{S, A, P, R, \gamma}$ is a Markov Decision Process.
    \item $\StateMapping: S \to \ProgState$ is a \emph{state mapping} function that maps environment states to shield states.
    \item $\ActionMapping: A \to \CtrlActionSem{\Ctrl}$ is a surjective \emph{action mapping}.
    \item $\RevActionMapping : \CtrlActionSem{\Ctrl} \to A$ is a reverse action mapping such that $\ActionMapping \circ \RevActionMapping = \IdFun$.
    \item $c : \Const \to \Reals$ is an interpretation of the shield's constant symbols.
    \item $u : \Unknown \to \Funs$ is an interpretation of the shield's unknown symbols.
    \item $\alpha : S \to \Subsets{\ObsVar}$ is an \emph{observation availability function}.
    \item $\mu : S \to \Distrs{\ObsVar \to \Reals}$ is an \emph{observation measurement function}.
  \end{itemize}
  In addition, the following should hold:
  \begin{enumerate}
    \item \label{item:correct-model} The shield's plant correctly models the environment's transition function: for all $s, s' \in S$, $a \in A$ and $b \in \Inst$, assuming that
    \begin{inparaenum}[\it (1)]
      \item $s' \in \Support{P(s, a)}$ (i.e $(s, a, s')$ is a valid transition),
      \item $(c \CatValuation u, \StateMapping(s) \CatValuation b) \in \sem{\Assum \land \Bound \land \Inv}$, and
      \item $(\StateMapping(s) \CatValuation b,  s'' \CatValuation b) \in \sem{\Ctrl}(c)$ where $s'' \equiv \CtrlExeSem{\CtrlWith{c, b}}(\StateMapping(s), \ActionMapping(a))$,
    \end{inparaenum} then we have $(s'', \StateMapping(s')) \in \sem{\Plant}(c \CatValuation u)$.
    \item \label{item:correct-observation} \label{} The observation measurement function behaves consistently with $\Noise$ and $\Obs$: for all $s \in S$ and $o \in \ObsVar \toI \Reals$, then $\Prob_{\eta \sim \sem{\Noise}(c)}\{ \sem{\Obs}(c \CatValuation u, \StateMapping(s) \CatValuation \eta) = o\} = \ObsMeasurementFun(s)(o)$.
  \end{enumerate}
\end{definition}

A compatible environment extends a standard RL environment with a formal mapping between this environment and a shield specification's underlying model. It also defines some ground-truth values for all unknown symbols, along with two observation functions for acquiring knowledge about those. We choose to have two separate functions, where $\alpha$ indicates what observations are available in any given state and $\mu$ performs an actual measurement of those observations. Such a separation is convenient since the inference policy is allowed to access information about observation availability but \emph{cannot} access actual observation values.

\subsection{Shielded Environments and Main Safety Theorem}\label{sec:shielded-env}

We define our main shielding algorithm by describing how a shield acts as a \emph{wrapper} around compatible environments, resulting into \emph{shielded environments} that are amenable to RL themselves and in which no unsafe state is reachable by construction.

\begin{algorithm}
  \caption{Sampling a transition in a \emph{shielded MDP}}\label{alg:shield}
  \begin{algorithmic}[1]
    \Procedure{shielded\_transition\,}{$\hat s, \hat a$}
      \State $s, h, \GlobBounds, \EpsRem \gets \hat s$ \Comment Decompose $\hat s \in \ShieldedState$
      \State $\CtrlAc, \InfAc \gets \hat a$ \Comment Decompose $\hat a \in \ShieldedAction$
      \State $\InfResult \gets \sem{\Inference}(\InfAc)$ \Comment Get list of symbolic bound assignments $\InfResult$ \label{line:inf-start} \label{line:algo-start}
      \State $\Omega = \{ \omega_i \in \FV(\BSymb) : (p, \BSymb, \Eps) \in \InfResult, \, \omega \in \ObsVar \}$ \Comment Determine observations to make
      \State $n \gets |h| + 1$ \Comment Get index of history element being built
      \State $v \gets \StateMapping(s) \CatValuation \TagWithIndex{\StateMapping(s)}{n} \CatValuation \GlobBounds \CatValuation \TagWithIndex{\GlobBounds}{n}$ \Comment Initialize a valuation for $\InfResult$
      \For{ $i \in \{i : \omega_i \in \Omega \}$ } \Comment For every relevant step $i$ in history
        \State $s', z', \LocBounds' \gets h_i$ %
        \State $o \gets \AlgSample{\mu(s')}$ \Comment Measure observations for step $i$ \label{line:lazy-measurement}
        \State $v \gets v \CatValuation \{ \omega_i \mapsto o(\omega) : \omega_i \in \Omega, \, \omega \in z' \}$ \Comment Update $v$ with \underline{accessible} measurements
        \State $v \gets v \CatValuation \TagWithIndex{\StateMapping(s')}{i} \CatValuation \TagWithIndex{\LocBounds'}{i}$ \Comment Update $v$ with history element $i$
        \State $h \gets \UpdateMap{h}{i}{(s', \EmptyValuation, \, \LocBounds')}$ \Comment Ensure observations are not reused
      \EndFor
      \For{$(p, e, \Eps) \in \InfResult$} \Comment For every symbolic bound assignment in $\InfResult$ \label{line:start-exec-inference}
        \If{$\Eps > \EpsRem$} \Comment{If not enough budget remains}
          \State \textbf{continue} \Comment{We skip the assignment}
        \EndIf
        \State $\EpsRem \gets \EpsRem - \Eps$ \Comment Update the remaining safety budget
        \State $r \gets \sem{e}(c, v)$ \Comment{Evaluate symbolic instantiation with $v$}
        \If{$r \ne \OptNone$ \textbf{and} ($p \notin \dom v$ \textbf{or} $r < v(p)$)} \Comment{Whenever a tighter value is obtained}
          \State $v \gets \UpdateMap{\UpdateMap{v}{p}{r}}{\TagWithIndex{p}{n}}{r}$ \Comment{Update $v$ with the new tighter bound}
        \EndIf
      \EndFor
      \State $\LocBounds \gets \{p \mapsto v(p) : p \in \LocalParam\}$ \
      \State $\GlobBounds \gets \{p \mapsto v(p) : p \in \GlobalParam\}$
      \State $\AlgAssert{\dom \LocBounds = \LocalParam}$ \Comment Ensure that all local parameters are set \label{line:assert-local}
      \State $h \gets h \cdot (s, \, \ObsAvailFun(s), \, \LocBounds)$ \Comment Add current state to the inference history \label{line:inf-end}
      \State $\text{safe} \gets \CtrlMonitorSem{\CtrlWith{c, \,\GlobBounds \CatValuation \LocBounds}}$ \Comment Instantiate the control monitor \label{line:mon-start}
      \State $\text{fallback} \gets \CtrlFallbackSem{\CtrlWith{c, \,\GlobBounds \CatValuation \LocBounds}}$ \Comment Instantiate the fallback policy
      \If{\textbf{not }$\text{safe}(\StateMapping(s), \, \ActionMapping(\CtrlAc))$} \Comment When the control action is deemed unsafe
        \State $\CtrlAc \gets \RevActionMapping\bigl(\text{fallback}(\StateMapping(s))\bigr)$ \Comment{Override it with a fallback action} \label{line:mon-end} \label{line:fallback}
      \EndIf
      \State $s \gets \AlgSample{P(s, \CtrlAc)}$ \Comment Perform the action in the underlying MDP \label{line:perform-underlying}
      \State $\Return \  (s, h, \GlobBounds, \EpsRem)$ \Comment Return a new value of $\hat s$
    \EndProcedure
    \medskip
  \end{algorithmic}
\end{algorithm}

\begin{definition}[Shielded environment]\label{def:shielded-env}
Consider a shield specification $\tuple{\Const, \dots, \Inference}$ and a compatible environment $\tuple{\tuple{S, A, P, R, \gamma}, \dots, \ObsMeasurementFun}$ (see Definition~\ref{def:compat-env}). We define the \emph{induced shielded environment} as an MDP $\tuple{\hat S, \hat A, \hat P, \hat R, \gamma}$ where:
\begin{itemize}
  \item A \emph{state} $\hat s = (s, h, \GlobBounds, \EpsRem)$ consists of a state $s$ from the original environment, a history $h$, an instantiation of global parameters $\GlobBounds$ and a remaining safety budget $\EpsRem \ge 0$. In turn, a history is a list of $(s', z, \LocBounds)$ triples where $s'$ is a state, $z$ is an observation availability set and $\LocBounds$ is a local parameter instantiation. Formally, $\ShieldedState \equiv S \times \Hist \times (\GlobalParam \to \Reals) \times [0, 1]$ and $\Hist \equiv \List{S \times \Subsets{\ObsVar} \times (\LocalParam \to \Reals)}$.
  \item An \emph{action} $\hat a = (\CtrlAc, \InfAc)$ is defined as a pair of a \emph{control action} and of an \emph{inference action}. Formally, $\ShieldedAction \equiv \CtrlActionSem{\Ctrl} \times \InfActionSem{\Inference}$.
  \item The \emph{transition function} $P : \hat S \times \hat A \to \Distrs{\hat S}$ is defined as in Algorithm~\ref{alg:shield}.
  \item The \emph{reward function} is lifted from the original MDP: if $\hat s = (s, h, \GlobBounds, \EpsRem)$, $\hat a = (\CtrlAc, \InfAc)$ and $\hat s' = (s', h', \GlobBounds', \EpsRem')$, then $\hat R(\hat s, \hat a, \hat s') = R(s, \CtrlAc, s')$.
\end{itemize}
\end{definition}

Note that in our definition, a state of the shielded MDP does not contain any information about past observation values. Keeping such knowledge from the agent is crucial for soundness since the agent could otherwise engage in cherry-picking. Instead, the history component of a state only indicates which past observations are available and not used already.

Algorithm~\ref{alg:shield} rigorously defines the process of sampling a transition from a shielded environment, indirectly defining the behavior of our proposed adaptive shield (see Figure~\ref{fig:overview-diagram} for a synthetic view). Lines~\ref{line:inf-start}~to~\ref{line:inf-end} describe the execution of the inference module, which computes valuations $\GlobBounds$ and $\LocBounds$ for global and local bound parameters respectively while updating the history $h$ and the remaining safety budget $\EpsRem$. Lines~\ref{line:mon-start}~to~\ref{line:mon-end} instantiate the controller monitor and fallback policy using $\GlobBounds$ and $\LocBounds$. If the action proposed by the agent is rejected by the controller monitor, it is overriden via the fallback policy. Finally, line~\ref{line:perform-underlying}, executes the resulting action in the real environment.

Note that we rely on a formalization trick where Algorithm~\ref{alg:shield} performs observation measurements lazily (Line~\ref{line:lazy-measurement}), sometimes on past states, rather than performing them eagerly and storing them. Although physically unrealistic, this formalization enables us to stay within the standard MDP framework that is dominant in RL. Concrete implementations would of course perform observations eagerly and ``secretely'' cache them, offering the same interface. An alternative choice would have been to define our shielding algorithm using framework of Partially-Observed Markov Decision Processes (POMDPs)~\cite{kochenderfer2015decision}. However, doing so would complicate our proofs and definitions for little gain. Finally, instantiating the controller monitor (Line~\ref{line:mon-start}) requires \emph{every} global and local parameter to be mapped to a real value. This trivially holds for global parameters, by definition of a shielded environment state (Definition~\ref{def:shielded-env}). However, this does not obviously hold for \emph{local} parameters, hence the assertion on Line~\ref{line:assert-local}. It is the inference strategy itself that must guarantee the validity of this assertion, which it does by statically enforcing that every local parameter is provided a \emph{default} value that is independent from observations or other local parameters. In the example from Figure~\ref{fig:overview-train-local}, the first assignment $\TrainFxMax \slassign F$ serves this role. We can now state our main safety theorem, for which a proof sketch is provided in Appendix~\ref{ap:main-theorem-proof}.

\begin{theorem}[Safety of shielded environments]\label{thm:main-safety}
  Consider a shield specification $\tuple{\Const, \dots}$ and a compatible environment $\tuple{\tuple{S, \dots}, \dots, \ObsMeasurementFun}$ (see Definition~\ref{def:compat-env}). Let $\hat E$ be the resulting shielded MDP (see Definition~\ref{def:shielded-env}). Then, given a safety budget $\Eps \in [0, 1]$, an initial global bound instantiation $b \in \GlobalParam \toI \Reals$ and an initial state $s \in S$ such that $(c \CatValuation u, \StateMapping(s) \CatValuation b) \in \sem{\Assum \land \Bound \land \Inv}$, any trace in $\hat E$ starting with state $\hat s_0 = (s, \EmptyList, b, \Eps)$ is \emph{safe} with probability at least $1 - \Eps$, meaning that $(c \CatValuation u, \StateMapping(s_t)) \in \sem{\Safe}$ for all state $\hat s_t = (s_t, \,\dots)$ in the trace.
\end{theorem}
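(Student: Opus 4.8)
The plan is to separate the argument into a \emph{deterministic} invariant-preservation core and a \emph{probabilistic} budget-accounting layer. Concretely, I would define a single ``good event'' $\mathcal{G}$ as the event that \emph{every} inference assignment actually executed along the trace (those passing the budget check at line~\ref{line:start-exec-inference} of Algorithm~\ref{alg:shield}) succeeds, in the sense of preserving the truth of the corresponding bound conjunction at the state where it is applied. The theorem then reduces to two facts: (i) under $\mathcal{G}$, the predicate $(c \CatValuation u, \StateMapping(s_t) \CatValuation \GlobBounds_t) \in \sem{\Assum \land \GlobalBound \land \Inv}$ holds for \emph{all} $t$, which by Obligation~\ref{obl:inv-implies-post} entails $(c \CatValuation u, \StateMapping(s_t)) \in \sem{\Safe}$; and (ii) $\Prob\{\lnot\mathcal{G}\} \le \Eps$.

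For the deterministic core (i), I would argue by induction on $t$ that $\Phi_t \equiv (c \CatValuation u, \StateMapping(s_t) \CatValuation \GlobBounds_t) \in \sem{\Assum \land \GlobalBound \land \Inv}$ is preserved by one iteration of Algorithm~\ref{alg:shield}, conditioned on $\mathcal{G}$. The base case is exactly the hypothesis on $\hat s_0$. For the step I would trace through the phases of the algorithm: first, under $\mathcal{G}$ the executed assignments yield an instantiation $b = \GlobBounds' \CatValuation \LocBounds$ with $(c \CatValuation u, \StateMapping(s_t) \CatValuation b) \in \sem{\Bound}$; since parameters are only ever tightened (the update guard $r < v(p)$) and $\Inv$ held with the looser $\GlobBounds_t$, Obligation~\ref{obl:inv-monotone} gives $\Inv$ at the tightened $\GlobBounds'$, so $\Assum \land \Bound \land \Inv$ holds at $\StateMapping(s_t) \CatValuation b$ (the assertion on line~\ref{line:assert-local} guarantees $b$ is total on $\Param$). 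Next, Lemma~\ref{lem:ctrl-monitor} ensures that the action ultimately performed---the agent's action when the monitor accepts it, or else the fallback, whose existence is licensed by Obligation~\ref{obl:ctrl-total}---realizes a genuine controller transition $(\StateMapping(s_t) \CatValuation b, s'' \CatValuation b) \in \sem{\Ctrl}(c)$. Compatibility condition~\ref{item:correct-model} of Definition~\ref{def:compat-env} then turns the sampled environment step into a plant transition $(s'', \StateMapping(s_{t+1})) \in \sem{\Plant}(c \CatValuation u)$, so that $\StateMapping(s_t) \to \StateMapping(s_{t+1})$ is a full $\Ctrl \seqI \Plant$ run. Obligation~\ref{obl:inv-preserved} re-establishes $\Inv$ at $\StateMapping(s_{t+1})$; since global bounds contain no state variables, $\GlobalBound$ survives the plant unchanged, yielding $\Phi_{t+1}$ with $\GlobBounds_{t+1} = \GlobBounds'$.

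For the probabilistic layer (ii), I would write $\mathcal{G}$ as a countable intersection of per-assignment success events and bound $\Prob\{\lnot\mathcal{G}\}$ by a union bound. The key step is to invoke Definition~\ref{def:sound-inf-assignment} for the $k$-th executed assignment \emph{conditioned on} the history being consistent, i.e.\ on all earlier assignments having succeeded so that the recorded $(s_i, b_i)$ satisfy $(c \CatValuation u, s_i) \in \sem{\Assum \land \Inv \land \BoundConj{b_i}}$; the valuation $v$ built in Algorithm~\ref{alg:shield} is designed to match the one in Definition~\ref{def:sound-inf-assignment} exactly, so each conditional failure probability is at most its $\Eps_k$. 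The budget check guarantees $\sum_k \Eps_k \le \Eps$ along any trace (even an infinite one), so the union bound gives $\Prob\{\lnot\mathcal{G}\} \le \sum_k \Eps_k \le \Eps$. I expect the main obstacle to be making this conditional union bound fully rigorous in the presence of adaptivity: the inference policy chooses $(\Eps, \lambda)$ online and the assignment structure depends on the random history, so I must verify that the noise samples $\eta_i$ remain i.i.d.\ draws from $\sem{\Noise}(c)$ that are \emph{independent} of the weights and tolerances used to aggregate them. This is precisely what the structural restrictions enforce---the inference policy sees only observation \emph{availability} $\ObsAvailFun(s)$ and never observation \emph{values}, while line~\ref{line:lazy-measurement} together with the history-clearing update $h \gets \UpdateMap{h}{i}{(s', \EmptyValuation, \LocBounds')}$ ensures each noise sample is measured once and never reused across cycles---so the argument reduces to checking that these restrictions license the independence assumption underlying Definition~\ref{def:sound-inf-assignment} and Equation~\ref{eq:overview-train-local-aggr-proof}.
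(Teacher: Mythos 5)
Your proposal is correct and takes essentially the same route as the paper, which packages your deterministic inductive core and your per-assignment failure accounting into a single \emph{consistency preservation} lemma (a consistent shielded state transitions to a consistent one with probability at least $1 - (\EpsRem - \EpsRem')$, so the budget telescopes along the trace), using exactly the ingredients you cite: Theorem~\ref{thm:inf-soundness} together with the measurement-compatibility condition, the monotonicity obligation~(\ref{obl:inv-monotone}), Lemma~\ref{lem:ctrl-monitor} with totality~(\ref{obl:ctrl-total}), plant compatibility~(\ref{item:correct-model}), invariant preservation~(\ref{obl:inv-preserved}), and finally Obligation~(\ref{obl:inv-implies-post}) to pass from the invariant to $\Safe$. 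Your explicit good-event/union-bound phrasing, and your flagging of the independence of the noise samples from the adaptively chosen $(\Eps, \lambda)$, merely spell out the sequential conditioning that the paper's per-step lemma leaves implicit.
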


\subsection{Learning in a Shielded Environment}\label{sec:learning-in-shielded-env}

Adaptive shields such as defined in Section~\ref{sec:shielded-env} can be used to protect \emph{any} agent from acting unsafely in its environment, whether or not it is capable of learning and regardless of how it is implemented. This is emphasized by our choice of formalizing shields as \emph{environment wrappers}, which are agent-agnostic by construction. Still, for the sake of concreteness, it is useful to illustrate the use of Algorithm~\ref{alg:shield} and Theorem~\ref{thm:main-safety} in two typical reinforcement-learning scenarios:

\begin{description}
  \item[Learning in a fixed environment.] \label{learning:fixed} A shielded agent is placed in a fixed, \emph{compatible} environment. It ignores the value of \emph{unknowns} but those are kept constant across training. A total safety budget is defined for the inference module, which must be small enough to make crashes unlikely (as per Theorem~\ref{thm:main-safety}). Initially, the parameter bounds provided by the inference module are very loose and so the shield's action monitor is equally conservative. However, at each control cycle, the inference module spends some safety budget to improve those bounds, thereby allowing the agent to take increasingly aggressive exploratory moves. When the safety budget is fully spent, the inference module is not allowed to update parameter bounds anymore. However, the agent is still guaranteed to remain safe indefinitely, and in particular as it finishes to optimize its control policy. Section~\ref{sec:experiments} demonstrates this setting by showcasing a train that learns to navigate on a \emph{fixed} circuit.
  
  \item[Meta-learning across a family of environments.] \label{learning:meta} In many practical scenarios, autonomous cyber-physical systems must be capable of \emph{quickly} adapting to changing environments without training a dedicated policy from scratch. It is thus useful to train an agent across a \emph{family} of environments so that it can generalize across them. Such a \emph{meta-learning} setting integrates particularly well with our shielding framework. Indeed, we can train a shielded agent through a series of episodes, each of which takes place in a possibly different environment that is nonetheless compatible with the agent's shield (for possibly different values of the unknowns). At the start of each episode, the inference module is reinitialized and a fixed, \emph{per-episode} safety budget is granted. In addition to the state information, the agent is given access to the current bound parameters and the remaining safety budget at all times. It must learn to gather suitable information about its environment and adapt its actions accordingly \emph{within each episode}. In particular, this allows a form of \emph{active sensing}, where the agent learns to act in such a way to receive useful information from the inference module. This also allows optimizing the behavior of the inference module itself by learning \emph{inference policies}, as we discuss in Section~\ref{sec:inference-strategy-language}. In the meta-learning setting, Theorem~\ref{thm:main-safety} is applied at \emph{each} episode and so a global safety bound can be obtained by multiplying the \emph{per-episode} safety budget by the number of training episodes. Section~\ref{sec:experiments} explores this setting in three different case studies, including one with a train controller that must learn to quickly adapt to different circuits with different slope profiles.
\end{description}

\section{The Inference Strategy Language}\label{sec:inference-strategy-language}

We define the syntax and semantics of our \emph{inference strategy language} in Figure~\ref{fig:inf-lang-syntax-semantics}. An inference strategy is defined as a sequence of inference assignments. Each kind of assignment defines its own action space and the action space for a whole strategy is the product of the action spaces of individual assignments.  Three kinds of assignments are supported.
Any assignment can be attached a \textsc{when}-clause that restricts its applicability, as illustrated in Figure~\ref{fig:overview-train-global} and Appendix~\ref{ap:train-global-inf-strategy}.
A direct assignment ``$p \slassign \theta$\,'' does not require any input from the inference policy and executing it costs no budget. It yields a single symbolic bound instantiation, which is $\theta$ itself. An assignment of the form ``$p \slassign \slbest i_1, \cdots, i_n : \theta$\,'' expects as an input a list of $n$-tuples of history indices and yields one bound instantiation $\FSubst{\theta}{i_1, \cdots, i_n\,}{\,j_1, \cdots, j_n}$ for each tuple $(j_1, \cdots, j_n)$. An alternate design would expect no input and consider \emph{every} possible combination of indices through the current history. However, the number of such combinations may grow intractably large, which explains our current pragmatic choice. %

\paragraph{Semantics of \textsc{aggregate}}
An assignment of the form ``$p \slassign \slaggregate i_1, \cdots, i_n : \theta_1 \sland \theta_2$'' takes as an input an $(\Eps, D)$ pair, where $\Eps$ indicates how much safety budget should be spent and $D$ is a finite distribution over all $n$-tuples of history indices. More precisely, $D$ is a sequence of $(\lambda, j)$ pairs where $j$ indexes a combination of measurements from history and $\lambda$ is a positive weight attached to it ($\sum_{\lambda, j \in D} \lambda = 1$). It yields a single bound instantiation that performs a weighted average over all instances of the \emph{observable bound component} $\theta_1$, while using the $\InvCCDF$ operator to handle the \emph{noise component} $\theta_2$. As a requirement, $\theta_1$ can contain observation variables but no noise variables, while $\theta_2$ can contain noise variables but no observation variables. The semantics of \textsc{aggregate} generalizes our reasoning from Section~\ref{sec:overview-inference} (see Equation~\ref{eq:overview-train-local-aggr-proof}). Indeed, let us assume that
$\BoundWith{p}{\FSubst{(\theta_1 + \theta_2)}{i}{j}}$ is true for any instantiation $j \equiv (j_1, \dots, j_n)$ of $i \equiv (i_1, \dots, i_n)$, as guaranteed by the generated proof obligation. Also, let us assume without loss of generality that $p$ is an \emph{upper}-bound. Since $\Bound_p$ is monotone in $p$ and thus convex, we know that $\BoundWith{p}{b^*}$ is true where
$ b^* \equiv \sum_{\lambda, j \in D} \lambda \FSubst{(\theta_1 + \theta_2)}{i}{j}. $
However, $b^*$ is not known at runtime since $\theta_2$ features noise variables that are not observed directly. Instead, our semantics yields instantiation $b \equiv \sum_{\lambda,j \in D} \lambda  \FSubst{\theta_1}{i}{j} + \delta$, where $\delta \equiv \InvCCDF(\sum_{\lambda, j \in D} \lambda \FSubst{\theta_2}{i}{j},\,\Eps).$
Since $\Bound_p$ is monotone, we have $b \ge b^* \limplyI \BoundWith{p}{b}.$ Contraposing, we get $\neg \BoundWith{p}{b} \limplyI b^* > b$. We can then establish the soundness of $(p, b, \Eps)$ with the exact same reasoning that was demonstrated in Equation~\ref{eq:overview-train-local-aggr-proof}:
\begin{equation*}
  \Prob(\neg \BoundWith{p}{b}) \,\le\, \Prob(b^* > b)
    \,=\, \Prob \Bigl( \,\sum_{i,\lambda \in D} \lambda \cdot \FSubst{\theta_2}{i}{j} \,>\, \delta\, \Bigl)
    \,\le\, \Eps.
\end{equation*}
Details on how to evaluate or approximate the $\InvCCDF$ operator are available in Appendix~\ref{sec:eval-tail-expressions}. In the particular case where $\theta_2$ is a linear combinations of Gaussian variables (with possibly state-dependent coefficients), closed-form solutions exist that involve the $\ErfInv$ function. Concentration inequalities (e.g. Chebyshev, Hoeffding, or Chernoff bounds) can be used to approximate it conservatively in the general case~\cite{bertsekas2008introduction}.

\begin{figure}
  \EquationsFigureSize
  \begin{align*}
    \InferenceStrategy \BnfDef\ & p_1 \dlassign e_1 \seq \dots \seq p_n \dlassign e_n \qquad
    (p_1, \dots, p_n \slassign e) \ \equiv\  (p_1 \slassign e \seq \dots \seq p_n \slassign e)
    \\
    e \BnfDef\ &
    \theta \slwhen G \BnfOr
    \slbest i_1, \dots, i_n : \theta \slwhen G \BnfOr
    \slaggregate i_1, \dots, i_n: \theta_1 \sland \theta_2 \slwhen G
  \end{align*}

  \begin{align*}
    \InfActionSem{p_1 \dlassign e_1 \seq \dots \seq p_n \dlassign e_n} &\DefEq  \InfActionSem{p_1 \dlassign e_1} \times \dots \times \InfActionSem{p_n \dlassign e_n} \\
    \InfActionSem{p \slassign e} &\DefEq \InfActionSem{e} \\
    \InfActionSem{\theta} &\DefEq \UnitSet \\
    \InfActionSem{\slbest i_1, \dots, i_n : \theta} &\DefEq \List{\Nats^n} \\
    \InfActionSem{\slaggregate i_1, \dots, i_n: \theta_1 \sland \theta_2} &\DefEq \EpsType \times \FDistr{\Nats^n} \\ \\
    \ValSem{\InferenceStrategy} &\;:\; \InfActionSem{\InferenceStrategy} \to \List{\Param \times \SymbBoundExpr \times
  \EpsType} \\
    \ValSem{p_1 \dlassign e_1 \seq \dots \seq p_n \dlassign e_n}((a_1, \dots, a_n)) &\DefEq
      \ValSem{p_1 \slassign e_1}(a_1) \ListCat \cdots \ListCat \ValSem{p_n \slassign e_n}(a_n) \\
    \ValSem{p \slassign e}(a) &\DefEq \ListComprehension{(p, e, \Eps)}{(e, \Eps) \in \ValSem{e}(a)} \\
    \ValSem{\theta \slwhen G}(\Unit) &\DefEq
      \ListSingleton{(\theta \WhenSymbBE G, \,0)} \\
    \ValSem{\slbest i_1, \dots, i_n : \theta \slwhen G}(J) &\DefEq
      \ListComprehension{(\FSubst{\theta}{i}{j} \WhenSymbBE \FSubst{G}{i}{j}, \, 0)}{j \in J}
  \end{align*}
  \begin{gather*}
    \ValSem{\slaggregate i_1, \dots, i_n: \theta_1 \sland \theta_2 \slwhen G}((\varepsilon, D)) \\[-2ex] \DefEq
    \ListSingleton{\
      \biggl(\,
        \biggl(\,
          \sum_{\lambda, j \,\in\, D} \! {\lambda \cdot \FSubst{\theta_1}{i}{j} }
          \ + \
          \InvCCDF\Bigl(\,
            \sum_{\lambda, j \,\in\, D} \! {\lambda \cdot \FSubst{\theta_2}{i}{j} }
            \,, \,\, \varepsilon \Bigl)
        \, \biggr)
        \, \WhenSymbBE
        \bigwedge_{\lambda, j \in D} \!\! \FSubst{G}{i}{j}
        \ , \ \ \Eps
      \,\biggr)
    \ }
  \end{gather*}
  \Description[]{}
  \caption{Syntax and semantics of the inference strategy language. Without loss of generality, all bounds are assumed to be upper-bounds (otherwise, substitute some parameter $p$ by $-p$). All \textsc{when}-clauses can be omitted when $G=\True$. We use the \emph{list concatenation} operator $\ListCat$ and the \emph{list comprehension} notation. $\FDistr{X}$ denotes the set of all distributions over set $X$ with finite support: $\FDistr{X} \equiv \{[(\lambda_1, x_1), \dots, (\lambda_n, x_n)] \,:\, n \in \Nats, \, x_1, \dots x_n \in X, \, \lambda_1, \dots, \lambda_n > 0, \, \sum_i \lambda_i = 1 \}$.}\label{fig:inf-lang-syntax-semantics}
\end{figure}

\paragraph{Proof obligations and soundness}
An inference strategy $\Inference$ induces a proof obligation in the form of a $\dL$ formula, $\ProofSem{\Inference}$, which is defined inductively in Figure~\ref{fig:inf-lang-obligations}. Provided that this obligation is valid, our language guarantees the soundness of all resulting inferences. This is expressed in Theorem~\ref{thm:inf-soundness}, for which a proof is provided in Appendix~\ref{ap:inf-lang-soundness-proof}.

\begin{figure}
  \EquationsFigureSize
  \begin{align*}
    \ProofSem{p_1 \dlassign e_1 \seq \dots \seq p_n \dlassign e_n} &\DefEq \ProofSem{p_1 \dlassign e_1} \wedge \dots \wedge \ProofSem{p_n \dlassign e_n} \\
    \ProofSem{p \slassign e} &\DefEq \Assum \wedge \Bigl(\bigwedge \VarDef{\FV(e)}\Bigr) \limply \ProofSem{e}(p) \\
    \ProofSem{\theta \slwhen G}(p) &\DefEq G \limply \BoundWith{p}{\theta} \\
    \ProofSem{\slbest i_1, \dots, i_n : \theta  \slwhen G}(p) &\DefEq G \limply \BoundWith{p}{\theta} \\
    \ProofSem{\slaggregate i_1, \dots, i_n: \theta_1 \sland \theta_2  \slwhen G}(p) &\DefEq G \limply \BoundWith{p}{\theta_1 + \theta_2}
  \end{align*}
  \begin{equation*}
    \VarDef{x_k} = \TagWithIndex{\VarDef{x}}{k} \quad
    \VarDef{p} = \Bound_p\  \quad
    \VarDef{\omega} = (\omega = \Obs_\omega) \quad
    \VarDef{v} = \Inv
  \end{equation*}
  \caption{Obligation induced by an inference strategy. Per convention, $p \in \Param$, $\omega \in \ObsVar$ and $v \in \StateVar$.}\label{fig:inf-lang-obligations}
  \Description[]{}
\end{figure}

\begin{theorem}[Soundness of the inference strategy language]\label{thm:inf-soundness}
  Let $\tuple{\Const,\dots,\Inference}$ a shield specification. If the formula $\ProofSem{\Inference}$ is valid, then $\sem{\Inference}(a)$ is a list of \emph{sound} inference assignments for any inference action $a \in \InfActionSem{\Inference}$.
\end{theorem}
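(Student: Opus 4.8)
The plan is to exploit that both $\ValSem{\cdot}$ and $\ProofSem{\cdot}$ decompose over the sequence of assignments: $\ValSem{p_1 \slassign e_1 \seq \dots}$ is the concatenation of the per-assignment lists, while $\ProofSem{p_1 \slassign e_1 \seq \dots}$ is the conjunction of the per-assignment obligations. Since validity of a conjunction yields validity of each conjunct, it suffices to show that for a single assignment $p \slassign e$, validity of $\ProofSem{p \slassign e} \equiv \bigl(\Assum \land \bigwedge \VarDef{\FV(e)}\bigr) \limply \ProofSem{e}(p)$ implies that every triple in $\ValSem{p \slassign e}(a)$ is sound per Definition~\ref{def:sound-inf-assignment}. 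First I would fix the universally-quantified data of that definition — $c, u, n$, the histories $(s_i),(z_i),(b_i)$ satisfying the consistency hypothesis, and a realization of the i.i.d.\ noise — and establish a bookkeeping lemma: for every concrete instantiation $j$ of the symbolic indices occurring in $e$, the antecedent of $\FSubst{\ProofSem{p \slassign e}}{i}{j}$ holds at interpretation $c \CatValuation u$ and at the valuation $v$ of Definition~\ref{def:sound-inf-assignment} extended with the tagged noise draws $\TagWithIndex{\eta_i}{i}$. Concretely, $\Assum$ holds because it mentions only symbols interpreted by $c,u$; the tagged bounds $\TagWithIndex{\Bound_p}{i}$ and tagged invariant $\TagWithIndex{\Inv}{i}$ hold because the hypothesis gives $(c \CatValuation u, s_i) \in \sem{\Assum \land \Inv \land \BoundConj{b_i}}$; and each $\VarDef{\omega_i} = (\omega_i = \TagWithIndex{\Obs_\omega}{i})$ holds by the definition of $o_i$ once the noise is present. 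Validity of the obligation then delivers $\FSubst{\ProofSem{e}(p)}{i}{j}$ at this valuation, for each $j$.

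For the direct and \textsc{best} forms the emitted triples carry $\Eps = 0$, and $\ProofSem{e}(p)$ instantiates to $\FSubst{G}{i}{j} \limply \BoundWith{p}{\FSubst{\theta}{i}{j}}$. Whenever the guard holds — otherwise $r = \OptNone$ and the update is a no-op, so no bad event arises — the bookkeeping lemma makes $\BoundWith{p}{r}$ true for the evaluated value $r = \sem{e}(c, v)$, and this holds \emph{pathwise}, for every noise realization, so the failure probability is exactly $0 \le \Eps$. The point worth stressing is that soundness here needs no tail reasoning precisely because validity of the obligation forces the proposed value to be a deterministic bound; a merely statistical bound that omitted a noise correction would make the obligation invalid and thus fall outside the theorem's hypothesis.

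The substance of the theorem is the \textsc{aggregate} case, which reproduces at full generality the calculation around Equation~\ref{eq:overview-train-local-aggr-proof}. Assuming without loss of generality that $p$ is an upper bound, monotonicity of $\Bound_p$ makes $\{x : \BoundWith{p}{x}\}$ an up-set of $\Reals$, hence convex. The bookkeeping lemma applied to each $(\lambda, j) \in D$ gives $\BoundWith{p}{\FSubst{(\theta_1 + \theta_2)}{i}{j}}$, so convexity yields $\BoundWith{p}{b^*}$ for $b^* \equiv \sum_{(\lambda,j) \in D} \lambda \FSubst{(\theta_1 + \theta_2)}{i}{j}$. The emitted value is $b \equiv \sum_{(\lambda,j) \in D} \lambda \FSubst{\theta_1}{i}{j} + \delta$ with $\delta \equiv \InvCCDF(\sum_{(\lambda,j) \in D} \lambda \FSubst{\theta_2}{i}{j}, \Eps)$; since $b \ge b^* \limply \BoundWith{p}{b}$ by monotonicity, contraposition gives $\neg \BoundWith{p}{b} \limply b^* > b \limply \sum_{(\lambda,j) \in D} \lambda \FSubst{\theta_2}{i}{j} > \delta$, and the defining contract of $\InvCCDF$ (Figure~\ref{fig:tail-expressions}) bounds the probability of this last event by $\Eps$.

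I expect the main obstacle to be the \emph{probabilistic alignment} underpinning that final step: one must verify that $\sum_{(\lambda,j)\in D} \lambda \FSubst{\theta_2}{i}{j}$, under the draws $\eta_1, \dots, \eta_n \sim \sem{\Noise}(c)$, is distributed exactly as the $\InvCCDF$ semantics quantifies over — i.e.\ each tagged noise variable carries its declared distribution with parameters evaluated at the matching history state, and distinct tagged noise variables are mutually independent (within a step by the $\NoiseVar$ semantics, across steps by the i.i.d.\ assumption). This is where the ``no $p$-hacking'' discipline is discharged: because $e$ is a fixed symbolic expression determined by the inference action \emph{before} any noise is drawn, the weights $\lambda$ and the level $\Eps$ are constants with respect to this randomness, so the noise sum is a genuine linear combination of independent terms with deterministic coefficients. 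The separation that lets $\theta_1$'s observed randomness be absorbed into $b^*$ while only $\theta_2$'s noise is subjected to a tail bound rests on the syntactic requirement that $\theta_1$ contains no noise variables and $\theta_2$ no observation variables; checking that this separation is respected by the semantics, and that the $\OptNone$/guard bookkeeping never silently drops a required conjunct of $\BoundConj{\cdot}$, is the remaining fiddly part.
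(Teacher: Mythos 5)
Your proposal is correct and follows essentially the same route as the paper's proof in Appendix~\ref{ap:inf-lang-soundness-proof}: reduce to a single assignment via the compositional semantics of $\ValSem{\cdot}$ and $\ProofSem{\cdot}$, fix the data of Definition~\ref{def:sound-inf-assignment}, dispatch the direct and \textsc{best} cases pathwise with $\Eps = 0$, and handle \textsc{aggregate} via convexity of the monotone bound $\Bound_p$, contraposition on $b \ge b^*$, and the defining contract of $\InvCCDF$. Your explicit treatment of the $\OptNone$/guard case and of extending the history valuation with tagged noise draws (needed to evaluate the antecedents $\omega_i = \TagWithIndex{\Obs_\omega}{i}$) only spells out bookkeeping the paper leaves implicit.
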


\paragraph{Learning inference policies} A crucial feature of our inference strategy language is its nondeterminism: an inference strategy does not fully specify the behavior of the inference module and key decisions about what measurements to aggregate, how much budget to spend and when are left to a separate \emph{inference policy}. As illustrated in Section~\ref{sec:overview-inference}, such a policy must make subtle tradeoffs. However, it is not soundness-critical and amenable to learning. Learning an inference policy is possible within the \emph{meta-learning} setting described in Section~\ref{sec:learning-in-shielded-env}. A \emph{per-episode} safety-budget is fixed and the agent must learn a way to best manage this budget within an episode, in a way that generalizes across a family of environments and allows maximizing returns. Section~\ref{sec:experiments} illustrates this idea in three different case studies.

\section{Experimental Evaluation}\label{sec:experiments}

We evaluate our framework on a series of case studies and demonstrate the claims:

\begin{enumerate}[label=C\arabic*]
  \item \label{claim:modelling-flexibility} Our framework allows expressing diverse types of model uncertainty in a unified way.
  \item \label{claim:shield-safe} Shielded agents always remain safe.
  \item \label{claim:adaptive-shield-efficient} Adaptive shields allow more aggressive control strategies than non-adaptive shields.
  \item \label{claim:inference-learned} Inference policies can be learned that generalize across settings.
  \item \label{claim:control-info-learned} Control policies can be learned that actively seek information as a prerequisite for success.
\end{enumerate}

All case studies are summarized in Table~\ref{tab:experiments}. Each case study trains a shielded reinforcement learning agent, either in a fixed environment or using the meta-learning setting described in Section~\ref{sec:learning-in-shielded-env}. We demonstrate the qualitative claim~\ref{claim:modelling-flexibility} by having each case study illustrate distinct modelling concepts, all of which are listed in Table~\ref{tab:experiments}. Claims~\ref{claim:shield-safe} and \ref{claim:adaptive-shield-efficient} are evaluated on all case studies. As shown in Table~\ref{tab:crash-stats-short} and validating Theorem~\ref{thm:main-safety}, shielded agents never encounter crashes. Unshielded agents reach unsafe states during training, but also \emph{after} training in all but one environment. Furthermore, disabling the inference module leads to inferior policies in all examples. Claims~\ref{claim:inference-learned}~and~\ref{claim:control-info-learned} are relevant in the \emph{meta-learning} setting, which is explored in three case studies whose results we summarize below. Full details on our experimental protocol, results, and on the shield used for each case study are available in Appendix~\ref{ap:experiments}.

\paragraph{Versatile Train.} This case study uses the shield from Figure~\ref{fig:overview-train-local}, which is inspired by published models of train control systems~\cite{platzer2009european,kabra2022verified}, with added support for runtime track slope estimation. Our key results are summarized in Figures~\ref{fig:example2-1}~and~\ref{fig:example2-2}. We show how different inference policies work best for different kinds of train tracks and noise models. A policy that aggregates observations in \emph{small} batches works better in a setting with irregular tracks (large $k$) and low sensor noise (small $\sigma$), while a policy that aggregates observations in \emph{large} batches works better in settings with regular tracks and high sensor noise. Importantly, inference policies can be learned that are at least competitive with the best hardcoded solution in both cases. 

\paragraph{Crossing the River.} This case study illustrates the concept of \emph{active sensing} and demonstrates claim~\ref{claim:control-info-learned} in a minimal setting. The goal is for a robot equipped with a lamp to cross a bridge at night. The robot must find the position of the bridge, which translates into a model where the unknown is in the {safety property} instead of the plant. The robot only observes the bridge when it gets close enough and its lamp is on. In addition to an inference policy, it must learn a control policy that seeks knowledge \emph{explicitly}. We illustrate this learned control policy in Figure~\ref{fig:example3-2}.

\paragraph{Revisiting ACASX} This case study revisits a classic of cyberphysical-system verification: the next-generation Airborne Collision Avoidance System (ACAS X)~\cite{jeannin2015formally}. However, rather than assuming a \emph{known} intruder trajectory or a \emph{random-walk} intruder model~\cite{kochenderfer2012next}, we propose a new uncertainty model where intruders are either ACAS-compliant (in which case they are also actively trying to avoid collision) or not (in which case they must be treated as adversarial). The agent must attempt to infer the compliance of the intruder so as to avoid drastic maneuvers whenever possible. We illustrate the effectiveness of our learned inference policy in Figure~\ref{fig:example4-2}.

\newcommand{\ExpTablePropSep}{}
\newcommand{\ExpTablePropName}[1]{#1}
\newcommand{\ExpCaseStudyName}[1]{\textsc{#1}}

\newcommand{\ExpCaseColWidth}{20}
\newcommand{\ExpPropColWidth}{11}
\newcommand{\ExpDescColWidth}{\the\numexpr92-\ExpCaseColWidth-\ExpPropColWidth\relax}

\newcommand{\ExpTableFeatsLabel}{Concepts:}
\newcommand{\ExpTableDescrLabel}{Description:}
\newcommand{\ExpTableQuantLabel}{Quantitative:}
\newcommand{\ExpTableQualLabel}{Qualitative:}
\newcommand{\ExpTableResLabel}{Results:}

\newcommand{\MetaLearning}{\textsc{meta}}
\newcommand{\FixedSetting}{\textsc{fixed}}

\newcommand{\ExpTableEntry}[5]{%
  \multirow{4}{*}{\makecell[l]{\ExpCaseStudyName{#1}\\{\scriptsize#2}}}
  & \ExpTablePropName{\ExpTableFeatsLabel} &
  \emph{#3}
  \\ \ExpTablePropSep
  & \ExpTablePropName{\ExpTableDescrLabel} &
  #4
  \\ \ExpTablePropSep
  & \ExpTablePropName{\ExpTableResLabel} &
  #5
  \\
}

\begin{table}
  \small
  \renewcommand{\arraystretch}{1.1}
  \caption{Summary of our Experimental Evaluation. Each case study is annotated with references to the main experimental claims $\text{C}_i$ that it provides evidence for. A ``\FixedSetting'' annotation means that learning is conducted in a fixed environment, while a``\MetaLearning'' annotation means that the case study involves meta-learning (as defined in Section~\ref{sec:learning-in-shielded-env}). No shielded agent encountered an unsafe state in our experiments (Table~\ref{tab:crash-stats-short}).}\label{tab:experiments}
  \begin{tabular}{%
    p{0.\ExpCaseColWidth\linewidth}%
    p{0.\ExpPropColWidth\linewidth}%
    p{0.\ExpDescColWidth\linewidth}%
  }
    \toprule
    \textbf{Case Study}  &  \textbf{Property} & \textbf{} \\
    \midrule
    \ExpTableEntry{\hyperref[ap:sisyphean-train]{Sisyphean Train}}{%
     \ref{claim:modelling-flexibility},
      \ref{claim:shield-safe},
      \ref{claim:adaptive-shield-efficient},
      \FixedSetting
    }{Fixed environment, Concentration inequalities}{A train must go up-and-down a hill repeatedly to transport merchandise, protected by a variant of the shield from Figure~\ref{fig:overview-train-local} that assumes uniform measurement noise.}{Enabling the inference module yields a more efficient controller. Hoeffding bounds beat Chebyshev bounds for inference.}
    \midrule
    \ExpTableEntry{\hyperref[ap:versatile-train]{Versatile Train}}{%
      \ref{claim:modelling-flexibility},
      \ref{claim:shield-safe},
      \ref{claim:adaptive-shield-efficient},
      \ref{claim:inference-learned},
      \MetaLearning
    }{Meta-learning, Learned inference policy}{A train is exposed to a variety of track topologies and must learn to spend its safety budget wisely in unknown terrains.}{Different types of tracks call for different inference policies. Appropriate inference policies can be learned automatically.}
    \midrule
    \ExpTableEntry{\hyperref[ap:crossing-river]{Crossing the River}}{%
      \ref{claim:modelling-flexibility},
      \ref{claim:shield-safe},
      \ref{claim:adaptive-shield-efficient},
      \ref{claim:control-info-learned},
      \MetaLearning
    }{Mapping, Active sensing}{A robot equipped with a lamp must cross a bridge located at an unknown position along a river, at night. The lamp sheds light within a fixed radius.}{The robot successfully learns to approach the river, switch on its lamp and go along the river until it locates the bridge. Disabling the inference module prevents learning. A non-shielded agent frequently falls into the river, even \emph{after} training.}
    \midrule
    \ExpTableEntry{\hyperref[ap:acas]{Revisiting ACAS X}}{%
      \ref{claim:modelling-flexibility},
      \ref{claim:shield-safe},
      \ref{claim:adaptive-shield-efficient},
      \ref{claim:inference-learned},
      \MetaLearning
    }{Discrete model switching, Boolean inference}{%
    We consider a variant of the next-generation Airborne Collision Avoidance System (ACAS X)~\cite{jeannin2015formally}. The agent controls a plane and must learn to react to an intruder aircraft entering its airspace.%
    }{A provably-safe policy can be learned that attempts to infer whether or not the intruding aircraft flies in ACAS-compliant ways in order to avoid drastic maneuvers whenever possible. }
    \bottomrule
    \end{tabular}
    \renewcommand{\arraystretch}{1}
\end{table}

\begin{table}
  \small
  \caption{Return at testing time (average value of the last 100 episodes during 10,000 evaluation steps) and number of crashes during training and testing time, for different case studies and under different settings. The number of training steps and the maximum episode length for each setting are 80,000/100; 400,000/100; 400,000/100; 400,000/50; and 400,000/40 respectively. Agents are given a total safety budget of $10^{-3}$ in \emph{fixed-environment} settings (\textsc{Sisyphean Train}) and $10^{-7}$ per episode in \emph{meta-learning} settings (all others).}
  \label{tab:crash-stats-short}
  \begin{tabular}{lcccc}
      \toprule
      Setting & Method & Return & Crashes during training & Crashes during testing\\
      \midrule
      \primitiveinput{img/sisyphean_train/sisyphean_train_short.tex}
      \midrule
      \primitiveinput{img/versatile_train/versatile_train_ksigma_large_short.tex}
      \midrule
      \primitiveinput{img/versatile_train/versatile_train_ksigma_small_short.tex}
      \midrule
      \primitiveinput{img/crossing_the_river/crossing_the_river_short.tex}
      \midrule
      \primitiveinput{img/revisiting_ACAS_X/revisiting_ACAS_X_short.tex}
      \bottomrule
  \end{tabular}
\end{table}

\begin{figure}
  \vspace{0.2cm}
  \includegraphics{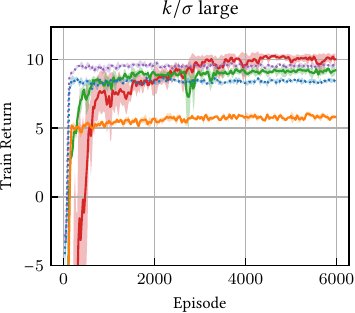}
  \hskip 15pt
  \includegraphics{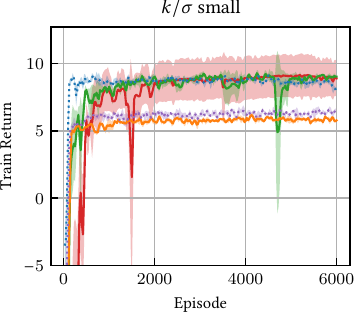}
  ~\\
  \includegraphics[trim=0 0 0 150,clip]{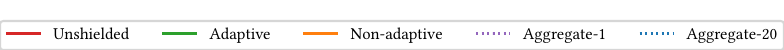}
  \caption{Returns for the \textsc{versatile train} case study, for two different combinations of $k$ and $\sigma$ (and averaged across three random seeds, with standard deviations represented as shaded areas around the mean). The train gets rewarded for reaching the station as quickly as possible (a fixed negative reward is issued at every step before the train reaches destination) and penalized for (unsafely) overshooting its target. Training returns are shown for an unshielded agent (red), an agent using our proposed adaptive shield (green), and a nonadaptive variant where the inference module is deactivated (orange). In addition, the \emph{Aggregate-$i$} agent is a variant of the \emph{Adaptive} agent that uses a hardcoded inference policy that spends a fixed budget to aggregate all available observations every $i$ steps. Different hardcoded inference policies perform best in different settings but the learned inference policy manages to match the best one in both cases.}
  \label{fig:example2-1}
  \Description[]{}
\end{figure}

\begin{figure}
  \includegraphics{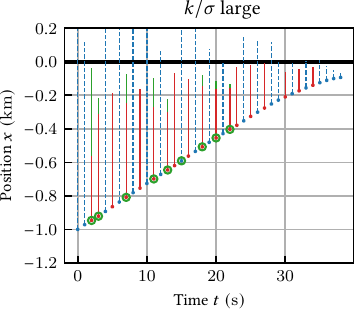}
  \hskip 15pt
  \includegraphics{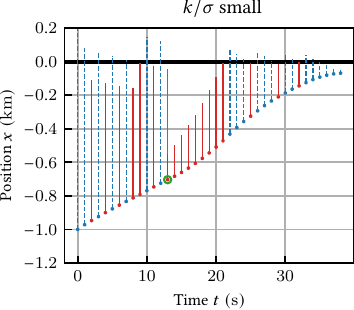}
  \caption{A visualization of the learned control and inference policies for the \textsc{versatile train} case study, on two random tracks and for two different combinations of $k$ and $\sigma$. We show the position of the train over time using dots. The train needs to stop before $x=0$ (bold black line). At each time step, a vertical segment indicates the estimated braking distance assuming that the train decides to accelerate for this time step. The segment is drawn in red if the train effectively decides to accelerate and in blue if it decides to brake. A green circle is shown whenever the inference module aggregates existing measurements to improve the local slope estimate. The resulting reduction in the shield's estimated braking distance is plotted as a green segment. In particular, we observe as expected that the learned inference policy aggregates measurements in bigger batches when $k/\sigma$ is small.}
  \label{fig:example2-2}
  \Description[]{}
\end{figure}

\begin{figure}
  \includegraphics{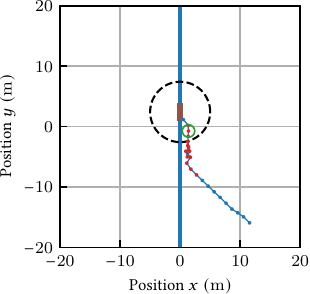}
  \hskip 15pt
  \includegraphics{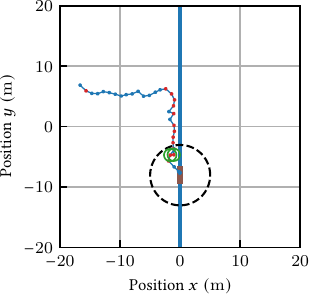}
  \caption{Visualization of the learned policy for the \textsc{crossing the river} case study. The river is drawn as a vertical blue line at $x=0$ and the bridge is drawn in brown. The range within which the agent can observe the bridge is plotted as a bold, black dashed circle. The agent's position is marked by a point for each control cycle, which is blue if the lamp is off and red if it is on. A green circle around the current position indicates some safety budget being spent. Regardless of its starting position, the agent learns to go near the river, activate the lamp when it gets close enough and then move along the river until it observes the bridge.}
  \label{fig:example3-2}
  \Description[]{}
\end{figure}

\begin{figure}
  \includegraphics{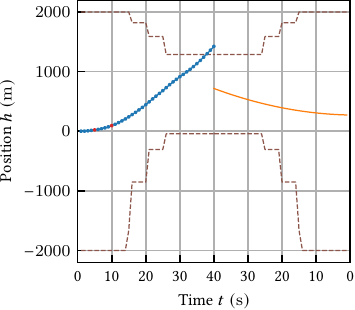}
  \hskip 15pt
  \includegraphics{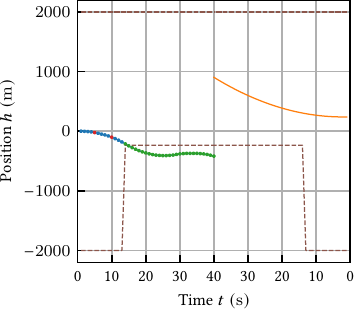}
  \caption{Learned policy for the \textsc{Revisiting ACAS X} case study, in two different scenarios. The scenario on the right features an ACAS-compliant intruder while the scenario on the left involves a non-compliant intruder. In both cases, the ownship's trajectory is shown on the left of the plot while the intruder's trajectory is shown on the right. The estimated upper and lower bounds for the intruder's final altitude at meeting time is shown for each time step as a dashed line. Red dots on the ownship's trajectory indicate the observation of information that is relevant to estimating the intruder's compliance. Green dots indicate that the intruder has been identified as compliant. When this happens, a less aggressive avoidance maneuver can be executed.}
  \label{fig:example4-2}
  \Description[]{}
\end{figure}

\section{Discussion}\label{sec:discussion}

\paragraph{Runtime overhead} Our framework pushes most of the burden of safety analysis offline via static proof obligations. Thus, it benefits from a small and predictable runtime overhead. This overhead results from running the inference module, monitoring the proposed control actions and executing the fallback policy when needed. The complexity of running a cycle of the inference module is linear in the size of the specification and in the size of the inference action being considered (e.g. the number of measurements being aggregated), assuming that the $\InvCCDF$ operator can be evaluated in linear time (which is true in particular in the Gaussian case or when using concentration inequalities, see Appendix~\ref{sec:eval-tail-expressions}). Running the controller monitor is inexpensive since it consists in evaluating a formula whose size does not exceed the size of the shield itself (see Appendix~\ref{ap:proof-obligations}). Finally, assuming that an explicit fallback policy is provided (see discussion in Section~\ref{sec:safe-rl-via-jsc}), the cost of computing fallback actions is small and predictable. In our experiments, the total overhead of shielding during training never exceeds 15\% (see Table~\ref{tab:overhead-stats}).

\paragraph{On the complexity of designing adaptive shields} There is an inherent complexity to the engineering task of designing adaptive monitors for rich, realistic model families, which typically necessitates the exploitation of domain-specific insights. Our framework allows designers to focus on this essential complexity, while eliminating the accidental complexity of enforcing end-to-end soundness guarantees that encompass statistical inference. For example, while our paper shows how tricky and error-prone statistical reasoning can be (e.g. reusing measurements is sound within an inference cycle but not across cycles), our framework exposes abstractions that fully protect users from such considerations and generates $\dL$ proof obligations that require no probabilistic reasoning.
Still, specifying formal models of hybrid systems and proving properties about them using interactive theorem proving is a nontrivial skill to acquire. Doing so is eased by the increasing availability of automation in tools such as KeYmaera X~\cite{DBLP:conf/cade/FultonMQVP15,DBLP:journals/fmsd/SogokonMTCP22}. We show the proof obligations generated by our tool for all case studies in Appendix~\ref{ap:proof-obligations}. Out of 32 obligations, 27 can be discharged fully automatically or with trivial human guidance (such as providing a single term for instantiating an existential variable). The others require more effort but can be proved in under an hour by an experienced KeYmaera X user. We provide an informal proof of the most challenging one (the proof of the train shield's invariant from Figure~\ref{fig:overview-train-local}) in Appendix~\ref{ap:train-invariant-proof}.

\section{Related Work}

\emph{Safe reinforcement learning} is widely studied~\cite{DBLP:journals/corr/abs-2205-10330,DBLP:journals/jmlr/GarciaF15,brunke2022safe}, including several approaches without formal verification. Several policy-search algorithms have been proposed that take into account safety constraints specified separately from the reward signal, but do not offer any guarantee that those constraints will not be violated during training or deployment~\cite{DBLP:conf/icml/AchiamHTA17,DBLP:conf/aaai/YangSTS21,DBLP:conf/iclr/TesslerMM19}. In contrast, this work is part of a family of approaches that are often called \emph{sandboxing} or \emph{shielding-based}~\cite{DBLP:conf/aaai/AlshiekhBEKNT18,DBLP:conf/aaai/FultonP18,DBLP:conf/icra/ThummA22}, and where the intended actions of an RL agent are monitored at runtime and overridden by appropriate fallbacks whenever they cannot be proved safe with respect to a model of the environment.

Virtually all existing approaches to shielding aim for full-automation of the shield computation, imposing hard tradeoffs in terms of safety, adaptivity, precision, expressivity and scalability. In circumstances where they run fast enough to be used online, methods based on reachability analysis~\cite{DBLP:conf/icra/ThummA22,DBLP:conf/cdc/KollerBT018,ivanov2019verisig} are naturally amenable to a form of adaptivity. However, they only offer bounded-horizon guarantees and precision comes at the expense of runtime efficiency. Methods based on LTL model-checking~\cite{DBLP:conf/aaai/AlshiekhBEKNT18,DBLP:conf/isola/KonighoferL0B20} or finite-MDP solving~\cite{pranger2021adaptive} can handle infinite time-horizons but require discretization of hybrid dynamics, often at a significant cost in terms of precision and safety. They also tend to scale poorly with increasing dimensionality. Methods based on Hamilton-Jacobi solving~\cite{DBLP:journals/tac/FisacAZKGT19} face similar challenges. In contrast, our proposed framework allows leveraging human ingenuity by extracting shields from nondeterministic, symbolic controllers that are proved safe using interactive theorem proving. It builds on the \emph{Justified Speculative Control} (JSC) framework~\cite{DBLP:conf/aaai/FultonP18}, which it generalizes in a crucial way to support adaptivity.

Our framework offers a uniquely expressive language for modelling environment uncertainty via arbitrarily constrained function symbols. For example, reachability analysis typically uses {bounded disturbance terms}~\cite{althoff2014online} to model environment uncertainty, which is sufficient to model the example from~Figure~\ref{fig:overview-train-global} but not those from Figures~\ref{fig:overview-train-local}~and~\ref{fig:acas}. To the best of our knowledge, these last two examples cannot be accommodated by any pre-existing approach. Another standard way of representing model uncertainty for the purpose of adaptive shielding is to use \emph{Gaussian processes} to model an unknown, state-dependent term added to the system's dynamics~\cite{DBLP:conf/aaai/ChengOMB19,DBLP:conf/eucc/BerkenkampS15,DBLP:conf/nips/BerkenkampTS017,DBLP:journals/tac/FisacAZKGT19}. This approach offers a different form of modelling flexibility, where assumptions about functional unknowns are implicitly encoded into prior kernels rather than hard logical constraints. However, this also makes the resulting safety guarantees harder to interpret and fundamentally dependent on assumptions that are nearly impossible to validate experimentally.

Another approach has been proposed to extend the JSC framework with a form of adaptivity, in which an agent starts with a finite set of plausible models and then progressively discards those that are found inconsistent with observations~\cite{DBLP:conf/tacas/FultonP19}. Our framework handles a more general form of parametric model uncertainty and additionally supports noisy observations and statistical reasoning. Finally, our idea of having experts define nondeterministic inference strategies that are sound by construction and refined via learning -- thereby making shielded agents in charge of their own safety budget -- has, to the best of our knowledge, no equivalent in the literature.

\section{Conclusion}

Our framework gives experts full access to the power of differential dynamic logic to build adaptive shields for hybrid systems. Its unique flexibility raises the equally unique challenge of performing statistical inference soundly and efficiently. We tackle this challenge with a mix of language design and learning, introducing the concepts of an \emph{inference strategy} and of an \emph{inference policy} respectively. Future work may explore the integration of reachability analysis and model-checking within our framework, allowing hybrid combinations of symbolic and numerical, offline and online proving.

\section*{Data-Availability Statement}
An artifact that contains our framework implementation, instructions to replicate all experiments, and formal proofs for all case studies is available at \url{https://doi.org/10.5281/zenodo.14916164}.

\begin{acks}
  This work was supported by the NSFC (Nos. 62276149, 92370124, 62350080, 92248303, U2341228, 62061136001), BNRist (BNR2022RC01006), Tsinghua Institute for Guo Qiang, and the High Performance Computing Center, Tsinghua University. J. Zhu was also supported by the XPlorer Prize. Finally, this work was supported by the Alexander von Humboldt Professorship program. We thank the anonymous reviewers for their helpful feedback.
\end{acks}

\bibliographystyle{ACM-Reference-Format}
\bibliography{main}

\appendix

\section{Experimental Case Studies}\label{ap:experiments}

We implemented a compiler for our shield specification language, which is used in all our experiments and made available as part of our artifact.

\subsection{Experimental Setup}

\subsubsection{Reinforcement Learning Settings} We use the SAC~\cite{haarnoja2018soft} reinforcement learning algorithm in all our experiments, with a buffer length of 1,000,000, a learning rate of 0.003, and a discount factor of 0.99. When the proposed action is unsafe, we execute a fallback in the environment but store the proposed action in the buffer. For all meta-learning experiments, the safety budget for each episode is $10^{-7}$; otherwise, the budget for the whole training is $10^{-3}$. We run all our experiments on a machine with one Intel Xeon E5-2683 v3 56-Core Processor and \qty{126}{\giga\byte} of RAM. A typical run of 400,000 interaction steps lasts from $4$ to $6$ hours. All our figures and tables report means and standard deviations, as computed from 3 random seeds.

\subsubsection{State and Action Spaces} As defined in Definition~\ref{def:shielded-env}, the state and action spaces of shielded environments feature large and complex objects of dynamic size, which are not easily fed to neural networks. For simplicity, we represent states across experiments using a finite number of features, which are: the state of the underlying MDP, the current bound parameter values, the current step ID within the episode (normalized by the maximum episode length), the number of available observations of each type (normalized by the maximum episode length) and the remaining safety budget (normalized by the total training budget when learning in a fixed environment and by the episodic budget in the case of meta-learning). Restricted classes of inference policies are represented by neural networks on a case-by-case basis, encoding inference actions as vectors of features.

\subsubsection{Benchmarked Agents} All experiments benchmark at least three agents. The \emph{Adaptive} agent is a shielded agent that showcases our framework. The \emph{Unshielded} baseline is a vanilla SAC agent that is allowed to perform unsafe actions. The \emph{Non-adaptive} baseline does not use the inference module to update the controller monitor and therefore implements a non-adaptive shield in the style of \emph{Justified Speculative Control}~\cite{DBLP:conf/aaai/FultonP18}. To ensure a fair comparison, especially in meta-learning settings, the last two baselines are given access to an inference module and an inference policy similar to the one used by the \emph{Adaptive} agent. However, the resulting bound parameters are not used to update a controller monitor but instead simply included in the agent's state, ensuring that all controllers can benefit from the same knowledge. Numbers of crashes are reported in Table~\ref{tab:crash-stats}, while the training overhead is shown in Table~\ref{tab:overhead-stats}. From these tables, we can see that our method can achieve zero crashes with little overhead. Additionally, we show the effects of the budget for different settings in Table~\ref{tab:budget-stats}. Large safety budgets can make the agent less conservative, potentially leading to crashes when the budgets are significantly high.

\begin{table}
    \footnotesize
    \caption{Shielding overhead during training. For all case studies, we indicate (in order): the total shielding overhead as a percentage of training time, the shielding overhead per simulation step (in milliseconds), the shielding overhead due to running the inference module per simulation step (in milliseconds), the average number of aggregated measurements per simulation step and the monitor size (as the sum of the AST sizes of the generated ``\textsf{safe}'' and ``\textsf{fallback}'' programs). We present means and standard deviations across 3 random seeds. As can be seen, despite our implementation being written in unoptimized Python, shielding overhead is consistently below $15\%$. Also, shielding overhead is dominated by the cost of running the inference module, which in turn increases with the average number of aggregated measurements per time step.}
    \label{tab:overhead-stats}
    \begin{tabular}{lccccc}
        \toprule
        Case Study & Overhead  ($\%$) & Overhead (\unit{ms}) & Inference (\unit{ms})  &  Avg. Aggregation & Monitor Size \\
        \midrule
        \primitiveinput{img/sisyphean_train/sisyphean_train_overhead.tex}
        \midrule
        \primitiveinput{img/versatile_train/versatile_train_overhead.tex}
        \midrule
        \primitiveinput{img/crossing_the_river/crossing_the_river_overhead.tex}
        \midrule
        \primitiveinput{img/revisiting_ACAS_X/revisiting_ACAS_X_overhead.tex}
        \bottomrule
    \end{tabular}
\end{table}

\begin{table}
    \small
    \caption{Return at testing time (average value of the last 100 episodes during 10,000 evaluation steps) and number of crashes during training and testing time, for different case studies and under different settings. The number of training steps and the maximum episode length for each setting are 80,000/100; 400,000/100; 400,000/100; 400,000/50; and 400,000/40 respectively.}
    \label{tab:crash-stats}
    \begin{tabular}{lcccc}
        \toprule
        Setting & Method & Return & Crashes during training & Crashes during testing\\
        \midrule
        \primitiveinput{img/sisyphean_train/sisyphean_train.tex}
        \midrule
        \primitiveinput{img/versatile_train/versatile_train_ksigma_large.tex}
        \midrule
        \primitiveinput{img/versatile_train/versatile_train_ksigma_small.tex}
        \midrule
        \primitiveinput{img/crossing_the_river/crossing_the_river.tex}
        \midrule
        \primitiveinput{img/revisiting_ACAS_X/revisiting_ACAS_X.tex}
        \bottomrule
    \end{tabular}
\end{table}

\begin{table}
  \small
  \caption{Return during testing and the number of crashes during both training and testing across all experiments, under varying safety budgets (budget per episode for meta-learning settings, and total budget for other scenarios). We present means and standard deviations across 3 random seeds. As can be seen, reducing the safety budget causes the agent to be more conservative. Setting an overly high safety budget results in crashes during training and/or testing.}
  \label{tab:budget-stats}
  \begin{tabular}{lcccc}
      \toprule
      Setting & Safety Budget & Return (testing) & Crashes (training) & Crashes (testing) \\
      \midrule
      \primitiveinput{img/sisyphean_train/sisyphean_train_budget.tex}
      \midrule
      \primitiveinput{img/versatile_train/versatile_train_ksigma_large_budget.tex}
      \midrule
      \primitiveinput{img/versatile_train/versatile_train_ksigma_small_budget.tex}
      \midrule
      \primitiveinput{img/crossing_the_river/crossing_the_river_budget.tex}
      \midrule
      \primitiveinput{img/revisiting_ACAS_X/revisiting_ACAS_X_budget.tex}
      \bottomrule
  \end{tabular}
\end{table}

\subsection{Sisyphean Train}\label{ap:sisyphean-train}

\subsubsection{Environment} Assume that we have a train that is 1,000 meters away from the station with a speed of $v$, and we want it to stop within a distance range. We take one of two actions every second: accelerate with an acceleration $A$ of $\qty{4}{\meter\per\second\squared}$ or brake with an acceleration $-B$ of $\qty{-4}{\meter\per\second\squared}$. However, due to the incline of the railroad, the final acceleration includes an additional, unknown term $f(x)$, where $x$ is the position of the train. We assume that the shape of the railroad is $C\sin(\omega x + \phi)$. As a consequence, we derive:
\[
    f(x) = g\cdot\frac{C\omega\cos(\omega x + \phi)}{\sqrt{1 + C^2\omega^2\cos^2(\omega x + \phi)}}.
\]
We know that $f$ is $gC\omega^2$-Lipschitz and bounded between $\left[-gC\omega/\sqrt{1 + C^2\omega^2}, \, gC\omega/\sqrt{1 + C^2\omega^2}\right]$. At each step, the agent observes $f(x) + \eta$, where $\eta \sim \mathcal{U}(\qty{-0.3}{\meter\per\second\squared}, \qty{0.3}{\meter\per\second\squared})$.

Our goal is to stop within a distance range of $0$ to $\qty{100}{\meter}$ from the station. If we are inside this range with a speed of less than $\qty{1}{\meter\per\second}$, the episode ends with a reward of $10$; if we go over the station, the episode ends with a reward of $-10$; in other cases, we get a reward of $-0.05$ per step (i.e., a reward sequence of $-0.00$, $-0.05$, $-0.10$) to encourage achieving the goal as early as possible. All episodes stop at a maximum length of 100 control cycles.

\subsubsection{Modelling} For this example, we use a variant of the shield defined in Figure~\ref{fig:overview-train-local}, with uniform measurement noise instead of Gaussian noise, allowing us to also demonstrate the use of concentration bounds to implement the $\InvCCDF$ operator.

\subsubsection{Experimental Protocol} In this case study, we choose $k=\qty{0.0025}{\per\second\squared}$, $F=\qty{3}{\meter\per\second\squared}$, $v=\qty{30}{\meter\per\second}$, $C=\qty{0.22}{\meter}$, $\omega=\qty{0.00083}{\per\meter}$, and $\phi=\pi/2$. Because the shape of the railroad is $C\sin(\omega x + \phi)$, we assume that the station is at the highest position, and we want to go up along the railroad. Although the change rate of the railroad's height is small, our knowledge of the upper bound $F$ is very conservative. Thus, using information from the observations for inference is important.

We do not have analytical solutions for the $\InvCCDF$ function with linear combinations of uniformly distributed observations, so we need to use concentration bounds to do approximate inference. We consider two different bounds: Chebyshev and Hoeffding. From our analysis in Section~\ref{sec:eval-tail-expressions}, we expect Hoeffding bounds to work better. When not mentioned, we use them as the default method.

We benchmark our \emph{Adaptive} agent against \emph{Unshielded} and \emph{Non-adaptive} baselines. In addition, we test a variant of our adaptive agent that uses the Chebyshev inequality for inference instead of the Hoeffding inequality. For all methods, we use a hardcoded inference policy that performs aggregation every time 20 observations are available within a radius of 100 meters. Each time, a portion of the remaining safety budget is spent, which is equal to the number of steps since the last aggregation, divided by the total number of training steps. %

\subsubsection{Results} We run experiments for 80,000 steps with a maximum episode length of 100. Results are shown in Figure~\ref{fig:example1-1} and Table~\ref{tab:crash-stats}. We find that the adaptive method achieves a higher return than the non-adaptive method, while still benefitting from the same safety guarantees. The adaptive agent that uses Chebyshev bounds performs no better than the non-adaptive baseline, due to producing bounds that are too conservative to be useful.

\begin{figure}
    \includegraphics{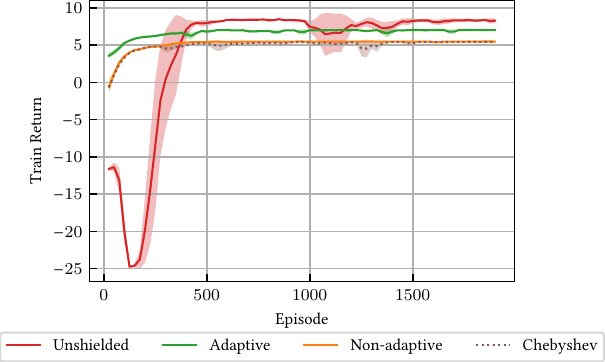}
    \caption{Returns for the \textsc{Sisyphean train} case study. \emph{Chebyshev} is a variation of \emph{Adaptive} where the Chebyshev bound is used in place of the default Hoeffding bound.}
    \label{fig:example1-1}
    \Description[]{}
\end{figure}

\subsection{Versatile Train}\label{ap:versatile-train}

\subsubsection{Environment} We use the same environment setting and shield as in the Sisyphean train case study, except that observations involve Gaussian measurement noise $\eta \sim \mathcal{N}(0, \sigma^2)$ instead of uniform noise. Additionally, we add a reward of 0.1 for non-terminal states with a non-zero remaining budget to encourage the agent only to use its budget when necessary.

\subsubsection{Experimental Protocol} To demonstrate meta-learning, we randomly sample the phase factor $\phi$ from $\mathcal{U}(0, 2\pi)$ at every episode, so as to represent different shapes of the railroad. We consider two settings: one where $k/\sigma$ is large and one where $k/\sigma$ is small. For each of them, we benchmark two adaptive agents with different hardcoded policies. The first one spends a fixed budget to aggregate all available observations every 20 steps, while the other spends a fixed budget to aggregate all available observations every single step. In addition, the \emph{Adaptive} agent \emph{learns} an inference policy that chooses at every time step whether or not to aggregate all available observations and how much budget to spend on it.

Regarding the environment parameters, we set $F=\qty{2.5}{\meter\per\second\squared}$ and $v=\qty{30}{\meter\per\second}$ for both settings. In the $k/\sigma$ large setting, we choose $k=\qty{0.002}{\per\second\squared}$, $\sigma=\qty{0.001}{\meter\per\second\squared}$, $C=\qty{0.19}{\meter}$, and $\omega=\qty{0.00080}{\per\meter}$. In the $k/\sigma$ small setting, we choose $k=\qty{0.00001}{\per\second\squared}$, $\sigma=\qty{1}{\meter\per\second\squared}$, $C=\qty{38.2}{\meter}$, and $\omega=\qty{0.0000040}{\per\meter}$. Note that the value of $k$ is determined by $C$ and $\omega$.

\subsubsection{Results} We run training for 400,000 steps, with a maximum episode length of 100. The results are shown in Figure~\ref{fig:example2-1} and Table~\ref{tab:crash-stats}. As expected, an inference policy that aggregates observations in \emph{small} batches works better in a setting with irregular tracks (large $k$) and low sensor noise (small $\sigma$), while a policy that aggregates observations in \emph{large} batches works better in settings with regular tracks and high sensor noise. However, in both cases, an inference policy can be \emph{learned} that matches the performance of its best-hardcoded counterpart.

We also perform a qualitative analysis of the policy learned by the adaptive agent, which is shown in Figure~\ref{fig:example2-2}. We show two cases, involving two different combinations of $k$ and $\sigma$. Regarding the choice of acceleration/braking, we can see that as long as the estimated braking distance is small enough for safety, the agent chooses to accelerate (unless it is near the terminal). For the inference part, in both cases, no budget is spent in the last few steps, because inference does not help when it needs to stop anyway. Also, there are some differences in the frequency of budget spending. When $k/\sigma$ is large, the change of the road can be large, which means the information we get from one position decays quickly; the noise level is small, which means we can get a good estimation without aggregating a lot of observations. All of these issues make a high inference frequency attractive, and vice versa. It is also demonstrated in the figure.

\subsection{Crossing the River}\label{ap:crossing-river}

\subsubsection{Environment} Assume that we are a robot moving in a two-dimensional plane. Since our speed is slow, we assume that we have direct control of our velocity at each control cycle. We want to cross the bridge over a river, at night, as quickly as possible and without falling into the river. However, we do not know the position of the bridge and must find it using our lamp. Once turned on, the lamp allows getting noisy measurements of the bridge position, but only when the robot is within a given radius of the bridge. Using the lamp also consumes our energy, so we only want to turn it on when necessary.

For simplicity, we assume that the bridge is a line segment at $x=0 \wedge y\in [y_b - W, y_b + W]$, whose center position $y_b$ is sampled from $\mathcal{U}(\qty{-10}{\meter}, \qty{10}{\meter})$ and whose width is $W=\qty{1}{\meter}$. The river is a line at $x=0$. Our initial position for both the $x$ and $y$ directions is sampled from $\mathcal{U}(\qty{-20}{\meter}, \qty{20}{\meter})$ (we resample if $x=0$). Initially, our velocity is 0. Every second, we can control our velocity for both directions, within a range of $[\qty{-2}{\meter\per\second}, \qty{2}{\meter\per\second}]$.

We can observe the bridge position $y_b$ from within a $\qty{5}{\meter}$ radius when our lamp is turned on, with Gaussian noise of standard deviation $\sigma \times |x|$ (i.e. we get better observations by moving closer to the river). Observations are only available every 2 seconds, due to sensor delay.

Our goal is to go to the other side of the river. If we succeed, the episode ends with a reward of 10; if we move into the river, the episode ends with a reward of -10; otherwise, we get a penalty of -0.1 at each time step, or -0.2 when the lamp is turned on. Episodes have a maximum length of 50.

\subsubsection{Modelling} We show a shield specification for this environment in Figure~\ref{fig:river-crossing}. Note that none of the details about \emph{when} observations become available need to be modeled, since this information is not soundness-relevant.

\subsubsection{Experimental Protocol} We compare the usual \emph{Unshielded}, \emph{Adaptive} and \emph{Non-adaptive} agents.

\subsubsection{Results} We run training for 400,000 steps. The results are shown in Figure~\ref{fig:example3-1} and Table~\ref{tab:crash-stats}. Unsurprisingly, the non-adaptive agent is incapable of ever crossing the bridge since it can never get enough certainty about its position. We also provide a qualitative analysis of the learned policy in Figure~\ref{fig:example3-2}. In two instances, we can see the robot approaching the river, turning on its lamp, and moving along the river until it gets some observations, after which it spends some safety budget for inference and crosses.

\begin{figure}
    \includegraphics{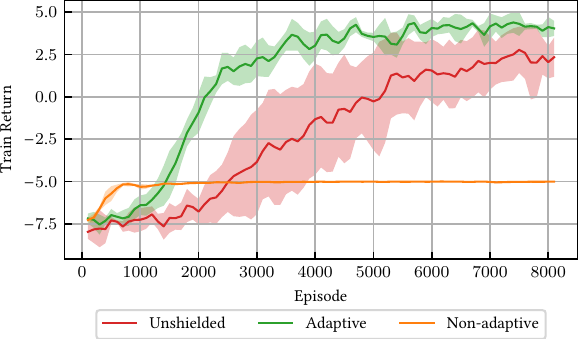}
    \caption{Returns for the \textsc{crossing the river} case study. All methods use a learned inference policy.}
    \label{fig:example3-1}
    \Description[]{}
\end{figure}

\newcommand{\rcyb}{y_\textsf{b}}
\newcommand{\rcybmin}{\underaccent{\bar}{y}_\textsf{b}}
\newcommand{\rcybmax}{{\bar y}_\textsf{b}}
\newcommand{\rcvx}{v_\textsf{x}}
\newcommand{\rcvy}{v_\textsf{y}}

\begin{figure}
  \begin{center}
  \begin{sllisting}
    \slconstant V, W, T, \sigma \\
    \slunknown \rcyb \\
    \slassume
      V > 0 \commasep
      W > 0 \commasep
      T > 0 \commasep
      \sigma > 0 \\
    \slbound
      \rcybmin : \rcybmin \le \rcyb \commasep
      \rcybmax : \rcybmax \ge \rcyb \\
    \slcontroller \\
      \slind \rcvx \slassign * \seq \rcvy \slassign * \seq \, ?(|\rcvx| \le V \land |\rcvy| \le V) \seq \\
      \slind ?(x \cdot (x + \rcvx T) > 0 \ \lor \  \rcvx = 0 \ \lor \  (\rcvx \ne 0 \land \rcybmax - W \,\le\, y + \rcvy \cdot (-x / \rcvx) \,\le\, \rcybmin + W)) \\
    \slplant
      t \slassign 0 \seq \dlode{x' = \rcvx, y' = \rcvy, t'=1 \dldom t \le T} \\
    \slsafe
      x = 0 \limply |y - \rcyb| \le W \\
    \slinvariant
      x = 0 \limply (\rcybmax - W \le y \le \rcybmin + W) \\
    \slnoise
      \eta \sim \slnormal(0, \sigma^2) \\
    \slobserve
      \omega = \rcyb - |x| \cdot \eta \\
    \slinfer
      \rcybmin, \rcybmax \slassign \slaggregate i: \omega_i \sland |x_i| \cdot \eta_i
  \end{sllisting}
  \end{center}
  \caption{A robot attempting to cross a river at night.}\label{fig:river-crossing}
  \Description[]{}
\end{figure}

\subsection{Revisiting ACAS X}\label{ap:acas}

\newcommand{\catm}{t_\textsf{m}}
\newcommand{\caAint}{A_\textsf{int}}
\newcommand{\caSigmaOne}{\sigma_\textsf{v}}
\newcommand{\caSigmaTwo}{\sigma_\textsf{h}}
\newcommand{\caEtaOne}{\eta_\textsf{v}}
\newcommand{\caEtaTwo}{\eta_\textsf{h}}
\newcommand{\caEtaThree}{\eta_\textsf{c}}
\newcommand{\caOmegaOne}{\omega_\textsf{v}}
\newcommand{\caOmegaTwo}{\omega_\textsf{h}}
\newcommand{\caOmegaThree}{\omega_\textsf{c}}
\newcommand{\cacmin}{\underaccent{\bar}{c}}
\newcommand{\cavint}{v_\textsf{int}}
\newcommand{\cahint}{h_\textsf{int}}
\newcommand{\cavintmax}{{\bar v}_\textsf{int}}
\newcommand{\cavintmin}{\underaccent{\bar}{v}_\textsf{int}}
\newcommand{\cahintmax}{{\bar h}_\textsf{int}}
\newcommand{\cahintmin}{\underaccent{\bar}{h}_\textsf{int}}
\newcommand{\cahzerointmax}{{\bar h}_{0,\textsf{int}}}
\newcommand{\cahzerointmin}{\underaccent{\bar}{h}_{0,\textsf{int}}}
\newcommand{\cahmintmax}{{\bar h}_\textsf{m,int}}
\newcommand{\cahmintmin}{\underaccent{\bar}{h}_\textsf{m,int}}
\newcommand{\cahnext}{h_\textsf{next}}
\newcommand{\cavnext}{v_\textsf{next}}
\newcommand{\catleftnext}{t_\textsf{left}}
\newcommand{\catinit}{t_0}

\begin{figure}
  \small
  \begin{center}
  \begin{sllisting}
    \renewcommand{\arraystretch}{1.2}
    \slconstant
      \catm,\, T,\, A,\, \caAint,\, R,\, V,\, H,\,
      \caSigmaOne,\, \caSigmaTwo,\, p \\
    \slunknown \cavint(*),\, \cahint(*),\, c \\
    \slassume \\
      \slind \catm \geq 0 ,\,
      T > 0 ,\,
      A > 0 ,\,
      \caAint > 0 ,\,
      R > 0 ,\,
      \caSigmaOne > 0 ,\,
      \caSigmaTwo > 0 ,\,
      0 < p < 1 ,\, \\
    \slind
      \lforall{t_1} \lforall{t_2}
        |\cahint(t_2) - \cahint(t_1) - \cavint(t_1) \cdot (t_2 - t_1)| \leq
        \caAint \cdot (t_2 - t_1) ^ 2 / 2, \\
    \slind
      \lforall{t_1} \lforall{t_2}
        |\cavint(t_2) - \cavint(t_1)| \le \caAint |t_2 - t_1|, \\
    \slind
      \lforall{t} (|\cavint(t)| \leq V \wedge |\cahint(t)| \leq H), \\
    \slind (c = 0 \lor c = 1) \land (c=1 \limply (\\
      \slind[2] (\cahint(0) > 0 \rightarrow \lforall{t} (t \leq \catm \rightarrow \cahint(\catm) \geq \cahint(t) + \cavint(t) \cdot (\catm - t))) \ \land \  \\
      \slind[2] (\cahint(0) < 0 \rightarrow \lforall{t} (t \leq \catm \rightarrow \cahint(\catm) \leq \cahint(t) + \cavint(t) \cdot (\catm - t)))))\\
    \slbound \\
      \slind \cacmin: \cacmin \le c, \, \\
      \slind
        \cavintmin: \cavintmin \le \cavint(t), \,
        \cavintmax: \cavintmax \ge \cavint(t), \,
        \cahintmin: \cahintmin \le \cahint(t), \,
        \cahintmax: \cahintmax \ge \cahint(t), \,
        \\
      \slind
        \cahzerointmin: \cahzerointmin \le \cahint(0), \,
        \cahzerointmax: \cahzerointmax \ge \cahint(0), \,
        \\
      \slind
        \cahmintmin: \cahmintmin \le \cahint(\catm), \,
        \cahmintmax: \cahmintmax \ge \cahint(\catm)
        \\
    \slcontroller \\
      \slind a \dlassign * \seq \, ?(|a| \leq A) \seq \\
      \slind \cahnext \dlassign h + v \cdot \min(T, \catm - t) + a \cdot \min(T, \catm - t) ^ 2 / 2 \seq \\
      \slind
        \cavnext \dlassign v + a \cdot \min(T, \catm - t) \seq
        \catleftnext \dlassign \max(\catm - t - T, 0) \seq \\
      \slind  ?(
        (\cahnext + \cavnext \cdot \catleftnext + A \cdot \catleftnext ^ 2 / 2 \,\geq\, \cahmintmax + R)
        \,\lor\,
        (\cahnext + \cavnext \cdot \catleftnext - A \cdot \catleftnext ^ 2 / 2 \,\leq\, \cahmintmin - R)
      )\\
    \slplant
      \catinit \dlassign t \seq \dlode{h' = v, v' = a, t' = 1 \dldom t \leq \min(\catinit + T, \catm)} \\
    \slsafe
      t = \catm \rightarrow |h - \cahint(t)| \geq R \\
    \slinvariant \\
      \slind t \geq 0 \land t \leq \catm \ \land \\
      \slind ((h + v \cdot (\catm - t) + A \cdot (\catm - t) ^ 2 / 2 \,\geq\, \cahmintmax + R) \,\lor\,
      (h + v \cdot (\catm - t) - A \cdot (\catm - t) ^ 2 / 2 \,\leq\, \cahmintmin - R)) \\
    \slnoise
      \caEtaOne \sim \slnormal(0, \caSigmaOne ^ 2) \commasep
      \caEtaTwo \sim \slnormal(0, \caSigmaTwo ^ 2) \commasep
      \caEtaThree \sim \slbernouilli(p) \\
    \slobserve
      \caOmegaOne = \cavint(t) - \caEtaOne \commasep
      \caOmegaTwo = \cahint(t) - \caEtaTwo \commasep
      \caOmegaThree = \min(1, c + \caEtaThree) \\
    \slinfer \\
      \slind \cavintmax \slassign V \seq  \cavintmin \slassign -V \seq  \cahintmax \slassign H \seq  \cahintmin \slassign -H \seq \\
      \slind  \cavintmax \slassign \slaggregate i: (\caOmegaOne)_i + \caAint \cdot |t - t_i| \sland (\caEtaOne)_i \seq \\
      \slind  \cavintmin \slassign \slaggregate i: (\caOmegaOne)_i - \caAint \cdot |t - t_i| \sland (\caEtaOne)_i \seq \\
      \slind \cahintmax \slassign \slaggregate i: (\caOmegaTwo)_i + (\cavintmax)_i \cdot (t - t_i) + \caAint \cdot (t - t_i) ^ 2 / 2 \sland (\caEtaTwo)_i \slwhen t_i \leq t \seq \\
      \slind \cahintmin \slassign \slaggregate i: (\caOmegaTwo)_i + (\cavintmin)_i \cdot (t - t_i) - \caAint \cdot (t - t_i) ^ 2 / 2 \sland (\caEtaTwo)_i \slwhen t_i \leq t \seq \\
      \slind \cahzerointmin \slassign \cahintmax + \cavintmin \cdot (0 - t) + \caAint \cdot (0 - t) ^ 2 / 2 \seq \\
      \slind \cahzerointmax \slassign \cahintmin + \cavintmax \cdot (0 - t) - \caAint \cdot (0 - t) ^ 2 / 2 \seq \\
      \slind \cacmin \slassign \slaggregate i: 0 \sland 1 - (\caEtaThree)_i  \slwhen (\caOmegaThree)_i = 1 \seq \\
      \slind \cahmintmax \slassign \cahintmax + \cavintmax \cdot (\catm - t) + \caAint \cdot (\catm - t) ^ 2 / 2 \slwhen \cacmin \leq 0 \land \cahzerointmax \geq 0 \seq \\
      \slind \cahmintmax \slassign \cahintmax + \cavintmax \cdot (\catm - t) \slwhen \cacmin > 0 \wedge \cahzerointmax < 0 \seq \\
      \slind \cahmintmin \slassign \cahintmin + \cavintmin \cdot (\catm - t) - \caAint \cdot (\catm - t) ^ 2 / 2 \slwhen \cacmin \leq 0 \land \cahzerointmin < 0 \seq \\
      \slind \cahmintmin \slassign \cahintmin + \cavintmin \cdot (\catm - t) \slwhen \cacmin > 0 \wedge \cahzerointmin > 0
  \end{sllisting}
  \end{center}
  \caption{An adaptive shield for an airborne collision avoidance system.}\label{fig:acas}
  \Description[]{}
\end{figure}

\subsubsection{Environment} Suppose we are in an aircraft, and there is an intruder moving towards us, also within the same, vertical 2D plane. We want to avoid a collision with minimal disturbance to our planned trajectory, by choosing a sequence of vertical accelerations based on sensor information. For simplicity, we assume that both aircraft have a known, constant horizontal velocity. This allows us to specify a \emph{horizontal meeting time} $\catm=\qty{40}{\second}$ (when the horizontal separation between the planes is 0). At that time, we require that the difference between the altitude of our plane and the intruder is at least $\qty{500}{\metre}$ to avoid a collision.

At time $t=0$, we assume $h=0$ and $v=0$. We control the vertical acceleration of the plane every second, and the maximum vertical acceleration $A$ in both directions is $\qty{3}{\metre\per\second\squared}$.

We denote the altitude and vertical velocity of the intruder at time $t$ as $\cahint(t)$ and $\cavint(t)$ respectively. These quantities specify the intruder's trajectory, which is unknown. In addition, we distinguish between two kinds of intruders: those that are ACAS-compliant and those that are not. Compliant intruders will never accelerate upward if their initial altitude is lower than ours and never accelerate downward if their initial altitude is higher. We can model compliance with a boolean unknown $c$ with values in $\{0, 1\}$ (1 means compliant).

We assume that for all $t$, $\cahint(t)$ is bounded between $\qty{-2000}{\metre}$ and $\qty{2000}{\metre}$, and $\cavint(t)$ is bounded between $\qty{-50}{\metre\per\second}$ and $\qty{50}{\metre\per\second}$. The maximum vertical acceleration for the intruder in both directions is also $\qty{3}{\metre\per\second\squared}$.

We get information about the intruder's position and velocity using our sensors. We can measure its altitude with noise $\mathcal{N}(0, (\qty{20}{\meter})^2)$ and its velocity with noise $\mathcal{N}(0, (\qty{2}{\meter\per\second})^2)$. Two kinds of observations can be used to infer compliance: one comes from making radio calls and the other comes from analyzing the current intruder's trajectory. Both provide independent evidence and are received at times $t=5$ and $t=10$ respectively. For simplicity, we also assume that both have a false-positive rate of $10^{-4}$ (i.e. falsely classifying a plane as compliant) and a false-negative rate (i.e. failing to classify a plane as compliant) of $0.1$. Since the safety budget for each episode is $10^{-7}$, we can only afford to claim compliance from two supportive observations.

In simulation, we assume that the initial altitude $\cahint(0)$ of the intruder is between $\qty{-500}{\meter}$ and $\qty{500}{\meter}$, and its initial vertical velocity $\cavint(0)$ is between $\qty{-2}{\meter\per\second}$ and $\qty{2}{\meter\per\second}$. The compliant variable $c$ is sampled from $\slbernouilli(0.5)$. Then, the intruder chooses a fixed acceleration for the whole simulation. If not compliant, it chooses an acceleration within $[(-V - \cavint(0)) / \catm, (V - \cavint(0)) / \catm] \cap [\qty{-3}{\metre\per\second\squared}, \qty{3}{\metre\per\second\squared}]$, where $V=\qty{50}{\metre\per\second}$ is the vertical velocity bound. If compliant, it takes the intersection of the previous bound with $[0, \infty)$ or $(-\infty, 0]$, depending on the sign of $\cahint(0)$.

The safety goal is to maintain a distance of at least $\qty{200}{\meter}$ between our plane and the intruder at the time of the meeting. We get a reward of 10 if successful, and -30 otherwise. During other time steps, we are penalized in proportion to our distance to the horizontal line $h=0$ (our initial intended trajectory) and rewarded if compliance is inferred, with a reward of $-0.2|h| + 0.2\cdot(\cacmin > 0)$.

\subsubsection{Shield Specification} We show our shield specification in Figure~\ref{fig:acas}.
The invariant dictates that one of these cases must hold: (1) if we accelerate with acceleration $A$, we will be higher than the maximum possible altitude of the intruder at the meeting time $\catm$ (with margin $R$) or (2) if we accelerate with acceleration $-A$, we will be lower than the minimum possible altitude of the intruder (with margin $R$ also). The controller ensures that the invariant will still hold at the next control cycle. Interestingly, the controller itself does \emph{not} directly depend on the lower bound $\cacmin$ for the compliance parameter. Instead, $\cacmin$ influences the controller indirectly, by affecting the inference strategy for other bound parameters such as $\cahmintmax$ that the controller depends on.

\subsubsection{Experimental Protocol} Here, we still compare the unshielded, adaptive, and non-adaptive agents. All of them rely on a learned inference policy that determines (1) how much budget should be spent on aggregating all the available compliance evidence (if any) and (2) how much budget should be spent aggregating all other available measurements (if any).

\subsubsection{Results}
We run training for 400,000 steps, with a maximum episode length of 40. Results are shown in Figure~\ref{fig:example4-1} and Table~\ref{tab:crash-stats}. The adaptive agent fares better than the non-adaptive one, and it also achieves stable rewards without crashes.

We also perform a qualitative analysis of the policy learned by the adaptive agent, which is shown in Figure~\ref{fig:example4-2}. We show two cases, involving a \emph{compliant} and \emph{non-compliant} agent. When successfully inferring compliance, our aircraft significantly reduces its uncertainty about the final intruder position and thus manages to limit the magnitude of its avoidance maneuver.

\begin{figure}
    \centering
    \includegraphics{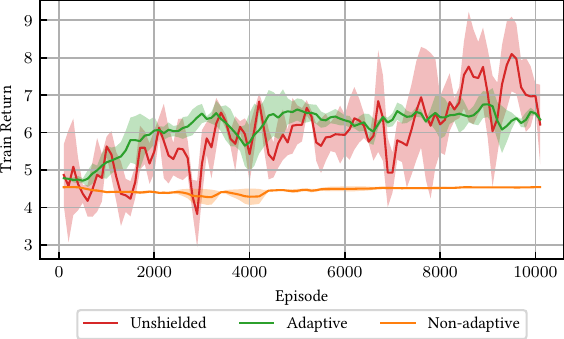}
    \caption{Returns for the \textsc{Revisiting ACAS X} case study. All methods learn an inference policy that specifies how much budget to spend on compliance observations and position/velocity observations.}
    \label{fig:example4-1}
    \Description[]{}
\end{figure}

\section{Proof Obligations for the Case Studies}
\label{ap:proof-obligations}
The proof obligations generated by our tool for each case study are shown in Figure~\ref{fig:targets-sisyphean-train}, Figure~\ref{fig:targets-crossing-the-river}, Figure~\ref{fig:targets-revisiting-ACAS-X-0}, and Figure~\ref{fig:targets-revisiting-ACAS-X}. Note that the proof obligations for the \emph{Sisyphean Train} case study and the \emph{Versatile Train} case study are the same, so we only consider the first one. Our tool automatically outputs these obligations in KeYmaera X's input language, allowing them to be directly imported for automatic or interactive proving.

All obligations are proved and all formal proofs can be found in this paper's accompanying artifact. Checking all of them using KeYmaera X currently takes about 2 minutes on a laptop. Among these 32 proof obligations, 27 can be discharged automatically or with trivial human guidance (such as providing a single term for instantiating an existential variable), while others need non-trivial guidance (marked as red).

\begin{figure}[p]
\begin{center}
\begin{sllisting}
\Init \Equiv A > 0 \, \land \, B > 0 \, \land \, T > 0 \, \land \, k > 0 \, \land \, \eta_r > 0 \, \land \, F < B \, \land \, A + F > 0 \, \land \, \\
    \slind[2] \, \lforall{x_1} \lforall{x_2} |f(x_1) - f(x_2)| \le k |x_1 - x_2| \, \land \, \lforall{x} (f(x) \le F \, \land \, f(x) \ge -A)\\
\Ctrl \Equiv y \dlassign \min(y, \TrainFxMax) \seq (?(x + v T + (A + F) \cdot T ^ 2 / 2 + (v + (A + F) \cdot T) ^ 2 / (2 (B - \min(F, y + \\
    \slind[2] \, \, k (v T + (A + F) \cdot T ^ 2 / 2) + k (v + (A + F) \cdot T) ^ 2 / (2 (B - F))))) \le 0) \seq a \dlassign A \cup a \dlassign -B)\\
\Plant \Equiv t \dlassign 0 \seq \dlode{x' = v \commasep v' = a + f(x) \commasep y' = k v \commasep t' = 1 \dldom t \le T \, \land \, v \ge 0}\\
\Inv \Equiv v \ge 0 \, \land \, x + v ^ 2 / (2 (B - \min(F, y + k v ^ 2 / (2 (B - F))))) \le 0 \, \land \, y \ge f(x)\\
\Safe \Equiv x \le 0\\
\pbmonotonicity\\
  \hspace{0.5em}\text{--}\hspace{0.4em}\TrainFxMax_1 \le \TrainFxMax_2 \limply f(x) \le \TrainFxMax_1 \limply f(x) \le \TrainFxMax_2\\
\pbmodel\\
  \hspace{0.5em}\text{--}\hspace{0.4em}\color{red}{\Init \, \land \, f(x) \le \TrainFxMax \, \land \, \Inv \limply \lboxI{\Ctrl \seq \Plant}\Inv}\\
\pbsafe\\
  \hspace{0.5em}\text{--}\hspace{0.4em}\Init \, \land \, \Inv \limply \Safe\\
\pbtotality\\
  \hspace{0.5em}\text{--}\hspace{0.4em}\Init \, \land \, f(x) \le \TrainFxMax \, \land \, \Inv \limply \ldiamond{\Ctrl}{\True}\\
\pbinference\\
  \hspace{0.5em}\text{--}\hspace{0.4em}\Init \, \land \, \Inv \, \land \, \TrainFxMax = F \limply f(x) \le \TrainFxMax\\
  \hspace{0.5em}\text{--}\hspace{0.4em}\Init \, \land \, \Inv \, \land \, \TrainFxMax = \omega_i + k |x - x_i| + \eta_i \, \land \, \omega_i = f(x_i) - \eta_i \limply f(x) \le \TrainFxMax\\
  \hspace{0.5em}\text{--}\hspace{0.4em}\Init \, \land \, \Inv \, \land \, \TrainFxMax = \TrainFxMax_i + k |x - x_i| \, \land \, f(x_i) \le \TrainFxMax_i \limply f(x) \le \TrainFxMax\\
\end{sllisting}
\end{center}
\caption{Proof obligations for the \emph{Sisyphean Train} case study. We write $\bdist{v}{b} \equiv v^2/2b$. The one obligation requiring nontrival guidance (proving the model's invariant) is shown in red. An informal proof of this formula is available in Appendix~\ref{ap:train-invariant-proof}.}
\label{fig:targets-sisyphean-train}
\Description[]{}
\end{figure}

\begin{figure}[p]
\begin{center}
\begin{sllisting}
\Init \Equiv V > 0 \, \land \, W > 0 \, \land \, T > 0 \, \land \, \sigma > 0\\
\Ctrl \Equiv \rcvx \dlassign * \seq \rcvy \dlassign * \seq l \dlassign * \seq ?(|\rcvx| \le V \, \land \, |\rcvy| \le V \, \land \, (x (x + \rcvx T) > 0 \, \lor \, \rcvx = 0 \, \land \, y + \rcvy T \ge\\
  \slind \, \, \, \, \, \, \, \, \, \, \,  \rcybmax - W \, \land \, y + \rcvy T \le \rcybmin + W \, \lor \, !\rcvx = 0 \, \land \, y + \rcvy (-x / \rcvx) \ge \rcybmax - W \, \land \, y + \rcvy (-x / \rcvx) \le \\
  \slind \, \, \, \, \, \, \, \, \, \, \, \rcybmin + W))\\
\Plant \Equiv t \dlassign 0 \seq \dlode{x' = \rcvx \commasep y' = \rcvy \commasep t' = 1 \dldom t \le T}\\
\Inv \Equiv x = 0 \limply y \ge \rcybmax - W \, \land \, y \le \rcybmin + W\\
\Safe \Equiv x = 0 \limply y \ge \rcyb - W \, \land \, y \le \rcyb + W\\
\pbmonotonicity\\
  \hspace{0.5em}\text{--}\hspace{0.4em}{\rcybmax}_1 \le {\rcybmax}_2 \, \land \, {\rcybmin}_1 \ge {\rcybmin}_2 \limply \rcyb \le {\rcybmax}_1 \, \land \, \rcyb \ge {\rcybmin}_1 \limply \rcyb \le {\rcybmax}_2 \, \land \, \rcyb \ge {\rcybmin}_2\\
\pbmodel\\
  \hspace{0.5em}\text{--}\hspace{0.4em}\Init \, \land \, \rcyb \le \rcybmax \, \land \, \rcyb \ge \rcybmin \, \land \, \Inv \limply \lboxI{\Ctrl \seq \Plant}\Inv\\
\pbsafe\\
  \hspace{0.5em}\text{--}\hspace{0.4em}\Init \, \land \, \rcyb \le \rcybmax \, \land \, \rcyb \ge \rcybmin \, \land \, \Inv \limply \Safe\\
\pbtotality\\
  \hspace{0.5em}\text{--}\hspace{0.4em}\Init \, \land \, \rcyb \le \rcybmax \, \land \, \rcyb \ge \rcybmin \, \land \, \Inv \limply \ldiamond{\Ctrl}{\True}\\
\pbinference\\
  \hspace{0.5em}\text{--}\hspace{0.4em}\Init \, \land \, \Inv \, \land \, \rcybmax = \omega_i + |x_i| \eta_i \, \land \, \omega_i = \rcyb - |x_i| \eta_i \limply \rcyb \le \rcybmax\\
  \hspace{0.5em}\text{--}\hspace{0.4em}\Init \, \land \, \Inv \, \land \, \rcybmin = \omega_i + |x_i| \eta_i \, \land \, \omega_i = \rcyb - |x_i| \eta_i \limply \rcyb \ge \rcybmin\\
\end{sllisting}
\end{center}
\caption{Proof obligations for the \emph{Crossing the River} case study. No obligation requires nontrivial guidance.}
\label{fig:targets-crossing-the-river}
\Description[]{}
\end{figure}

\begin{figure}[p]
\begin{center}
\begin{sllisting}
\Init \Equiv \catm \ge 0 \, \land \, T > 0 \, \land \, A > 0 \, \land \, \caAint > 0 \, \land \, R > 0 \, \land \, \caSigmaOne > 0 \, \land \, \caSigmaTwo > 0 \, \land \, p > 0 \, \land \, p < 1 \, \land \, \\
  \slind \, \, \, \, \, \, \, \, \, \, \, \lforall{s_1} \lforall{s_2} |\cahint(s_2) - \cahint(s_1) - \cavint(s_1) \cdot (s_2 - s_1)| \le \caAint (s_2 - s_1) ^ 2 / 2 \, \land \, \\
  \slind \, \, \, \, \, \, \, \, \, \, \, \lforall{x_1} \lforall{x_2} |\cavint(x_1) - \cavint(x_2)| \le \caAint |x_1 - x_2| \, \land \, \lforall{s} (|\cavint(s)| \le V \, \land \, |\cahint(s)| \le H) \, \land \, \\
  \slind \, \, \, \, \, \, \, \, \, \, \, (c = 0 \, \lor \, c = 1) \, \land \, (c = 1 \limply (\cahint(0) > 0 \limply \lforall{s} (s \le \catm \limply \cahint(\catm) \ge \cahint(s) + \cavint(s) \cdot\\
  \slind \, \, \, \, \, \, \, \, \, \, \,  (\catm - s))) \, \land \, (\cahint(0) < 0 \limply \lforall{s} (s \le \catm \limply \cahint(\catm) \le \cahint(s) + \cavint(s) \cdot (\catm - s))))\\
\Ctrl \Equiv a \dlassign * \seq \cahnext \dlassign h + v \min(T, \catm - t) + a \min(T, \catm - t) ^ 2 / 2 \seq \cavnext \dlassign v + a \min(T, \catm - t) \seq \\
  \slind \, \, \, \, \, \, \, \, \, \, \, \catleftnext \dlassign \max(\catm - t - T, 0) \seq ?(|a| \le A \, \land \, (\cahnext + \cavnext \catleftnext + A \catleftnext ^ 2 / 2 \ge \cahmintmax + R \, \lor \, \\
  \slind \, \, \, \, \, \, \, \, \, \, \, \cahnext + \cavnext \catleftnext - A \catleftnext ^ 2 / 2 \le \cahmintmin - R))\\
\Plant \Equiv \catinit \dlassign t \seq \dlode{h' = v \commasep v' = a \commasep t' = 1 \dldom t \le \min(\catinit + T, \catm)}\\
\Inv \Equiv t \ge 0 \, \land \, t \le \catm \, \land \, (h + v (\catm - t) + A (\catm - t) ^ 2 / 2 \ge \cahmintmax + R \, \lor \, \\
  \slind \, \, \, \, \, \, \, \, \, h + v (\catm - t) - A (\catm - t) ^ 2 / 2 \le \cahmintmin - R)\\
\Safe \Equiv t = \catm \limply |h - \cahint(t)| \ge R\\
\pbmonotonicity\\
  \hspace{0.5em}\text{--}\hspace{0.4em}{\cavintmax}_1 \le {\cavintmax}_2 \, \land \, {\cavintmin}_1 \ge {\cavintmin}_2 \, \land \, {\cahintmax}_1 \le {\cahintmax}_2 \, \land \, {\cahintmin}_1 \ge {\cahintmin}_2 \, \land \, {\cahzerointmax}_1 \le {\cahzerointmax}_2 \, \land \, \\
  \slind {\cahzerointmin}_1 \ge {\cahzerointmin}_2 \, \land \, {\cacmin}_1 \ge {\cacmin}_2 \, \land \, {\cahmintmax}_1 \le {\cahmintmax}_2 \, \land \, {\cahmintmin}_1 \ge {\cahmintmin}_2 \limply \cavint(t) \le {\cavintmax}_1 \, \land \, \\
  \slind \cavint(t) \ge {\cavintmin}_1 \, \land \, \cahint(t) \le {\cahintmax}_1 \, \land \, \cahint(t) \ge {\cahintmin}_1 \, \land \, \cahint(0) \le {\cahzerointmax}_1 \, \land \, \cahint(0) \ge {\cahzerointmin}_1 \, \land \, \\
  \slind c \ge {\cacmin}_1 \, \land \, \cahint(\catm) \le {\cahmintmax}_1 \, \land \, \cahint(\catm) \ge {\cahmintmin}_1 \limply \cavint(t) \le {\cavintmax}_2 \, \land \, \cavint(t) \ge {\cavintmin}_2 \, \land \, \\
  \slind \cahint(t) \le {\cahintmax}_2 \, \land \, \cahint(t) \ge {\cahintmin}_2 \, \land \, \cahint(0) \le {\cahzerointmax}_2 \, \land \, \cahint(0) \ge {\cahzerointmin}_2 \, \land \, c \ge {\cacmin}_2 \, \land \, \\
  \slind \cahint(\catm) \le {\cahmintmax}_2 \, \land \, \cahint(\catm) \ge {\cahmintmin}_2\\
\pbmodel\\
  \hspace{0.5em}\text{--}\hspace{0.4em}\color{red}{\Init \, \land \, \cavint(t) \le \cavintmax \, \land \, \cavint(t) \ge \cavintmin \, \land \, \cahint(t) \le \cahintmax \, \land \, \cahint(t) \ge \cahintmin \, \land \, \cahint(0) \le \cahzerointmax \, \land \,} \\
  \slind \color{red}{\cahint(0) \ge \cahzerointmin \, \land \, c \ge \cacmin \, \land \, \cahint(\catm) \le \cahmintmax \, \land \, \cahint(\catm) \ge \cahmintmin \, \land \, \Inv \limply \lboxI{\Ctrl \seq \Plant}\Inv}\\
\pbsafe\\
  \hspace{0.5em}\text{--}\hspace{0.4em}\Init \, \land \, \cahint(0) \le \cahzerointmax \, \land \, \cahint(0) \ge \cahzerointmin \, \land \, c \ge \cacmin \, \land \, \cahint(\catm) \le \cahmintmax \, \land \, \cahint(\catm) \ge \cahmintmin \, \land \, \Inv \\
  \slind \limply \Safe\\
\pbtotality\\
  \hspace{0.5em}\text{--}\hspace{0.4em}\color{red}{\Init \, \land \, \cavint(t) \le \cavintmax \, \land \, \cavint(t) \ge \cavintmin \, \land \, \cahint(t) \le \cahintmax \, \land \, \cahint(t) \ge \cahintmin \, \land \, \cahint(0) \le \cahzerointmax \, \land \,} \\
  \slind \color{red}{\cahint(0) \ge \cahzerointmin \, \land \, c \ge \cacmin \, \land \, \cahint(\catm) \le \cahmintmax \, \land \, \cahint(\catm) \ge \cahmintmin \, \land \, \Inv \limply \ldiamond{\Ctrl}{\True}}\\
\end{sllisting}
\end{center}
\caption{Proof obligations for the \emph{Revisiting ACAS X} case study (2 of 2). Obligations requiring nontrival guidance are marked in red.}
\label{fig:targets-revisiting-ACAS-X-0}
\Description[]{}
\end{figure}

\begin{figure}[p]
\begin{center}
\begin{sllisting}
\pbinference\\
  \hspace{0.5em}\text{--}\hspace{0.4em}\Init \, \land \, \Inv \, \land \, \cavintmax = V \limply \cavint(t) \le \cavintmax\\
  \hspace{0.5em}\text{--}\hspace{0.4em}\Init \, \land \, \Inv \, \land \, \cavintmin = -V \limply \cavint(t) \ge \cavintmin\\
  \hspace{0.5em}\text{--}\hspace{0.4em}\Init \, \land \, \Inv \, \land \, \cahintmax = H \limply \cahint(t) \le \cahintmax\\
  \hspace{0.5em}\text{--}\hspace{0.4em}\Init \, \land \, \Inv \, \land \, \cahintmin = -H \limply \cahint(t) \ge \cahintmin\\
  \hspace{0.5em}\text{--}\hspace{0.4em}\Init \, \land \, \Inv \, \land \, \cavintmax = {\caOmegaOne}_i + \caAint |t - t_i| + {\caEtaOne}_i \, \land \, {\caOmegaOne}_i = \cavint(t_i) - {\caEtaOne}_i \limply \cavint(t) \le \cavintmax\\
  \hspace{0.5em}\text{--}\hspace{0.4em}\Init \, \land \, \Inv \, \land \, \cavintmin = {\caOmegaOne}_i - \caAint |t - t_i| + {\caEtaOne}_i \, \land \, {\caOmegaOne}_i = \cavint(t_i) - {\caEtaOne}_i \limply \cavint(t) \ge \cavintmin\\
  \hspace{0.5em}\text{--}\hspace{0.4em}\Init \, \land \, \Inv \, \land \, \cahintmax = {\caOmegaTwo}_i + {\cavintmax}_i (t - t_i) + \caAint (t - t_i) ^ 2 / 2 + {\caEtaTwo}_i \, \land \, {\caOmegaTwo}_i = \cahint(t_i) - {\caEtaTwo}_i \, \land \, \\
  \slind \cavint(t_i) \le {\cavintmax}_i \, \land \, t_i \le t \limply \cahint(t) \le \cahintmax\\
  \hspace{0.5em}\text{--}\hspace{0.4em}\Init \, \land \, \Inv \, \land \, \cahintmin = {\caOmegaTwo}_i + {\cavintmin}_i (t - t_i) - \caAint (t - t_i) ^ 2 / 2 + {\caEtaTwo}_i \, \land \, {\caOmegaTwo}_i = \cahint(t_i) - {\caEtaTwo}_i \, \land \, \\
  \slind \cavint(t_i) \ge {\cavintmin}_i \, \land \, t_i \le t \limply \cahint(t) \ge \cahintmin\\
  \hspace{0.5em}\text{--}\hspace{0.4em}\Init \, \land \, \Inv \, \land \, \cahzerointmax = \cahintmax + \cavintmin (0 - t) + \caAint (0 - t) ^ 2 / 2 \, \land \, \cahint(t) \le \cahintmax \, \land \, \cavint(t) \ge \cavintmin \\
  \slind \limply \cahint(0) \le \cahzerointmax\\
  \hspace{0.5em}\text{--}\hspace{0.4em}\Init \, \land \, \Inv \, \land \, \cahzerointmin = \cahintmin + \cavintmax (0 - t) - \caAint (0 - t) ^ 2 / 2 \, \land \, \cahint(t) \ge \cahintmin \, \land \, \cavint(t) \le \cavintmax \\
  \slind \limply \cahint(0) \ge \cahzerointmin\\
  \hspace{0.5em}\text{--}\hspace{0.4em}\Init \, \land \, \Inv \, \land \, \cacmin = 0 + (1 - {\caEtaThree}_i) \, \land \, {\caOmegaThree}_i = \min(1, c + {\caEtaThree}_i) \, \land \, {\caOmegaThree}_i = 1 \limply c \ge \cacmin\\
  \hspace{0.5em}\text{--}\hspace{0.4em}\Init \, \land \, \Inv \, \land \, \cahmintmax = \cahintmax + \cavintmax (\catm - t) + \caAint (\catm - t) ^ 2 / 2 \, \land \, c \ge \cacmin \, \land \, \cahint(0) \le \cahzerointmax \, \land \, \\
  \slind \cahint(t) \le \cahintmax \, \land \, \cavint(t) \le \cavintmax \, \land \, (\cacmin \le 0 \, \lor \, \cahzerointmax \ge 0) \limply \cahint(\catm) \le \cahmintmax\\
  \hspace{0.5em}\text{--}\hspace{0.4em}\color{red}{\Init \, \land \, \Inv \, \land \, \cahmintmax = \cahintmax + \cavintmax (\catm - t) \, \land \, c \ge \cacmin \, \land \, \cahint(0) \le \cahzerointmax \, \land \, \cahint(t) \le \cahintmax \, \land \,} \\
  \slind \color{red}{\cavint(t) \le \cavintmax \, \land \, \cacmin > 0 \, \land \, \cahzerointmax < 0 \limply \cahint(\catm) \le \cahmintmax}\\
  \hspace{0.5em}\text{--}\hspace{0.4em}\Init \, \land \, \Inv \, \land \, \cahmintmin = \cahintmin + \cavintmin (\catm - t) - \caAint (\catm - t) ^ 2 / 2 \, \land \, c \ge \cacmin \, \land \, \cahint(0) \ge \cahzerointmin \, \land \, \\
  \slind \cahint(t) \ge \cahintmin \, \land \, \cavint(t) \ge \cavintmin \, \land \, (\cacmin \le 0 \, \lor \, \cahzerointmin \le 0) \limply \cahint(\catm) \ge \cahmintmin\\
  \hspace{0.5em}\text{--}\hspace{0.4em}\color{red}{\Init \, \land \, \Inv \, \land \, \cahmintmin = \cahintmin + \cavintmin (\catm - t) \, \land \, c \ge \cacmin \, \land \, \cahint(0) \ge \cahzerointmin \, \land \, \cahint(t) \ge \cahintmin \, \land \, }\\
  \slind \color{red}{\cavint(t) \ge \cavintmin \, \land \, \cacmin > 0 \, \land \, \cahzerointmin > 0 \limply \cahint(\catm) \ge \cahmintmin}\\
\end{sllisting}
\end{center}
\caption{Proof obligations for the \emph{Revisiting ACAS X} case study (2 of 2). Obligations requiring nontrival guidance are marked in red.}
\label{fig:targets-revisiting-ACAS-X}
\Description[]{}
\end{figure}

\clearpage

\section{Informal Proof of the Invariant for the \emph{Sisyphean Train} Case Study}\label{ap:train-invariant-proof}
We provide an informal proof of the invariant of our train shield example from Figure~\ref{fig:overview-train-local}. We use the same notation as Figure~\ref{fig:targets-sisyphean-train}. Furthermore, we denote
{\small\[Q \equiv x + vT + (A + F)T^2/2 + \bdist{v+(A+F)T}{B - \min(F, \, y + k(vT + (A + F)T^2/2) + k \cdot \bdist{v + (A+F)T}{B-F})} \le e\]} and so we have $\Ctrl \equiv (y \dlassign \min(y, \TrainFxMax) \seq (((?Q \seq a \dlassign A) \cup (a \dlassign -B)))$. We want to prove that \[\vDash \ \Init \, \land \, f(x) \le \TrainFxMax \, \land \, \Inv \limply [\Ctrl \seq \Plant] \Inv.\] Such a proof can be split into an \emph{acceleration case}: \[\vDash \ \Init \, \land \, f(x) \le \TrainFxMax \, \land \, \Inv \limply [y \dlassign \min(y, \TrainFxMax) \seq a:=-B \seq \Plant] \Inv\] and a \emph{braking case}: \[\vDash \ \Init \, \land \, f(x) \le \TrainFxMax \, \land \, \Inv \limply [y \dlassign \min(y, \TrainFxMax) \seq ?Q \seq a:=A \seq \Plant] \Inv\] Note that being in the domain constraint of the ODE, $v \ge 0$, it is trivially preserved and we focus on establishing the preservation of $y \ge f(x)$ and $x + \bdist{v}{B - \min(F, y + k \cdot \bdist{v}{B-F})} \le e$.

\subsection{Proof of the \emph{Braking Case}}
We denote $\alpha \,\equiv\, (y \dlassign \min(y, \TrainFxMax) \seq a:=-B \seq \Plant)$, and we proceed in four steps:
\begin{enumerate}
    \item $\Fvalid \Init \, \land \, f(x) \le \TrainFxMax \, \land \, \Inv \limply [\alpha] (y \ge f(x))$;
    \item $\Fvalid \Init \, \land \, f(x) \le \TrainFxMax \, \land \, \Inv \, \land \, y = y_0 \, \land \, v = v_0 \limply [\alpha] (y + k \cdot \bdist{v}{B-F} \le y_0 + k \cdot \bdist{v_0}{B-F})$;
    \item $\Fvalid \Init \, \land \, f(x) \le \TrainFxMax \, \land \, \Inv \, \land \, y = y_0 \, \land \, v = v_0 \limply [\alpha] (x + \bdist{v}{B - \min(F, y_0 + k \cdot \bdist{v_0}{B-F})} \le e)$;
    \item $\Fvalid \Init \, \land \, f(x) \le \TrainFxMax \, \land \, \Inv \limply [\alpha] (x + \bdist{v}{B - \min(F, y + k \cdot \bdist{v}{B-F})} \le e)$.
\end{enumerate}

\paragraph{Step 1} Before the execution of $\alpha$, we know that $y \ge f(x)$ and $\TrainFxMax \ge f(x)$. After the execution of $y \dlassign \min(y, \TrainFxMax)$, we still have $y \ge f(x)$. To prove that it is also preserved by the ODE, we simply use the \emph{differential invariant} rule (dI) from $\dL$\footnote{The dI rule from $\dL$ allows proving that an inequality is preserved by an ODE if its \emph{derivative} has the right sign throughout: \[
\frac{\vdash [X' \dlassign G(x)](A)' \ge (B)'}{A \ge B \vdash [X'=G(X) \, \& \, P] \, A \ge B}.
\]}. Thus, we must prove $y' \ge (f(x))'$:
\begin{itemize}
    \item[-] It is enough to prove: $k v \ge x' \times f'(x)$.
    \item[-] It is enough to prove: $k v \ge v \times f'(x)$.
    \item[-] Since $v \ge 0$, it is enough to prove that $k \ge f'(x)$, which holds since $f$ is $k$-Lipschitz.
\end{itemize}

\paragraph{Step 2} We also use dI and prove that $(y + k \cdot \bdist{v}{B-F})' \le 0$.
\begin{itemize}
    \item[-] It is enough to prove: $k v + (k v v') / (B - F) \le 0$.
    \item[-] It is enough to prove: $k v (1 + (v')/(B-F)) \le 0$.
    \item[-] It is enough to prove: $v' \le -(B-F)$.
    \item[-] It is enough to prove that $-B + f(x) \le -(B-F)$, which holds since $f$ is $F$-bounded.
\end{itemize}

\paragraph{Step 3} We also use dI and prove that $(x + \bdist{v}{B - \min(F, y_0 + k \cdot \bdist{v_0}{B-F})})' \le e$:
\begin{itemize}
    \item[-] It is enough to prove: $v' \le -B + \min(F, y_0 + k \cdot \bdist{v_0}{B-F})$.
    \item[-] It is enough to prove: $-B + f(x) \le -B + \min(F, y_0 + k \cdot \bdist{v_0}{B-F})$.
    \item[-] It is enough to prove: $f(x) \le \min(F, y_0 + k \cdot \bdist{v_0}{B-F})$.
    \item[-] It is enough to prove: $\min(F, y) \le \min(F, y_0 + k \cdot \bdist{v_0}{B-F})$, which is a consequence of step 2 since $k \cdot \bdist{v}{B-F} \ge 0$.
\end{itemize}

\paragraph{Step 4} The target is equivalent to: \[\Fvalid \Init \, \land \, f(x) \le \TrainFxMax \, \land \, \Inv \, \land \, y = y_0 \, \land \, v = v_0 \limply [\alpha] (x + \bdist{v}{B - \min(F, y + k \cdot \bdist{v}{B-F})} \le e).\]
\begin{itemize}
    \item[-] It is enough to prove: $x + \bdist{v}{B - \min(F, y + k \cdot \bdist{v}{B-F})} \le x + \bdist{v}{B - \min(F, y_0 + k \cdot \bdist{v_0}{B-F})}$, because of step 3.
    \item[-] It is enough to prove: $B - \min(F, y + k \cdot \bdist{v}{B-F}) \ge B - \min(F, y_0 + k \cdot \bdist{v_0}{B-F})$.
    \item[-] It is enough to prove: $y + k \cdot \bdist{v}{B-F} \le y_0 + k \cdot \bdist{v_0}{B-F}$, which is true because of step 2.
\end{itemize}

\subsection{Proof of the \emph{Acceleration Case}}
We denote $\alpha \equiv (y \dlassign \min(y, \TrainFxMax) \seq ?Q \seq a:=A \seq \Plant)$, and we proceed in four steps:
\begin{enumerate}
    \item $\Fvalid \Init \, \land \, f(x) \le \TrainFxMax \, \land \, \Inv \limply [\alpha] (y \ge f(x))$;
    \item $\Fvalid \Init \, \land \, f(x) \le \TrainFxMax \, \land \, \Inv \, \land \, y = y_0 \, \land \, v = v_0 \limply [\alpha] (y + k \cdot \bdist{v}{B-F} \ge y_0 + k \cdot \bdist{v_0}{B-F})$;
    \item $\Fvalid \Init \, \land \, f(x) \le \TrainFxMax \, \land \, \Inv \, \land \, v = v_0 \, \land \, y = y_0 \, \land \, v = v_0 \limply [\alpha] (v \le v_0+(A+F)t' \, \land \, x \le x_0+v_0t+((A+F)t^2)/2 \, \land \, y \le y_0+k(v_0t+((A+F)t^2)/2))$;
    \item $\Fvalid \Init \, \land \, f(x) \le \TrainFxMax \, \land \, \Inv \, \land \, Q \limply [\alpha] (x + \bdist{v}{B - \min(F, y + k \cdot \bdist{v}{B-F})} \le e)$.
\end{enumerate}

\paragraph{Step 1} Before the execution of $\alpha$, we know that $y \ge f(x)$ and $\TrainFxMax \ge f(x)$. After the execution of $y \dlassign \min(y, \TrainFxMax)$, we still have $y \ge f(x)$. To prove that it is also preserved by the ODE, we simply use dI (differential invariant). We must prove $y' \ge (f(x))'$.
\begin{itemize}
    \item[-] It is enough to prove: $k v \ge x' \times f'(x)$.
    \item[-] It is enough to prove: $k v \ge v \times f'(x)$.
    \item[-] Since $v \ge 0$, it is enough to prove that $k \ge f'(x)$, which holds since $f$ is $k$-Lipschitz.
\end{itemize}

\paragraph{Step 2} We also use dI and prove that $(y + k \cdot \bdist{v}{B-F})' \ge 0$.
\begin{itemize}
    \item[-] It is enough to prove: $k v + (k v v') / (B - F) \ge 0$.
    \item[-] It is enough to prove: $k v (1 + (v')/(B-F)) \ge 0$.
    \item[-] It is enough to prove: $v' \ge -(B-F)$.
    \item[-] It is enough to prove: $A + f(x) \ge -(B-F)$, which holds since $A+f(x) \ge 0 \ge -(B-F)$.
\end{itemize}

\paragraph{Step 3} We also use dI.
\begin{itemize}
    \item[-] For $v \le v_0+(A+F)t$, it is enough to prove that $v' \le A+F$, or $A + f(x) \le A+F$, which holds since $f$ is $F$-bounded.
    \item[-] For $x \le x_0+v_0t+((A+F)t^2)/2$, it is enough to prove that $x' \le v_0+(A+F)t$, or $v \le v_0 +(A+F)t$, which is already proved.
    \item[-] It is enough to prove: $v' \ge -(B-F)$.
    \item[-] Similarly, we can prove that $y \le y_0+k(v_0t+((A+F)t^2)/2)$.
\end{itemize}

\paragraph{Step 4} From step 3, we have upper bounds for $x$, $y$, and $v$. By montonicity, it is enough to prove $\Fvalid \Inv \, \land \, v \ge 0 \, \land \, v=v_0 \, \land \, x=x_0 \, \land \, y=y_0 \limply [\alpha] (x_0 + v_0T + (A + F)T^2/2 + \bdist{v_0+(A+F)T}{B - \min(F, \, y_0 + k(v_0T + (A + F)T^2/2) + k \cdot \bdist{v_0 + (A+F)T}{B-F})} \le e)$. This holds by definition of $?Q$.

\section{Deriving controller monitors and fallback policies}\label{ap:ctrl-monitor}

\subsection{Deriving Control Action Spaces}
\label{ap:defining-action-ctrl}
We formally demonstrate how nondeterministic $\dL$ controllers induce an action space $\CtrlActionSem{\Ctrl}$ and define $\CtrlExeSem{\Ctrl}(s, a)$ in Figure~\ref{fig:ctrl-fun-sem}. To provide a more advanced example that uses those definitions, let us consider the following controller:
\[
\Ctrl \ \equiv \ (x := * \,;\, y := * \,;\, ?x \ge y) \,\cup\, (x := 0 \,;\, ((y := * \,;\, ?y \ge 0) \,\cup\, y := -1)).
\]

Then, we have $\CtrlActionSem{\Ctrl} \equiv (\Reals \times \Reals \times \UnitSet) \uplus (\UnitSet \times ((\Reals \times \UnitSet) \uplus \UnitSet)) \!\IsomorphTo \Reals^2 \uplus \Reals \uplus \UnitSet$. Moreover, consider action $a = \SumRightInj{(\Unit, \SumLeftInj{(8, \Unit)})}$, which encodes the path: ``take the right branch, do the assignment, take the left branch, pick value $8$ for $y$ and do the test''. Then, we have: \(\CtrlExeSem{\Ctrl}((3, 7), a) = (0, 8). \)

\begin{figure}
  \begin{minipage}{0.4\textwidth}
  \begin{align*}
      \CtrlActionSem{\alpha \cup \beta} &\ \equiv \ \CtrlActionSem{\alpha} \uplus \CtrlActionSem{\beta} \\
      \CtrlActionSem{\alpha ; \beta} &\ \equiv \ \CtrlActionSem{\alpha} \times \CtrlActionSem{\beta} \\
      \CtrlActionSem{x := *} &\ \equiv \ \Reals \\
      \CtrlActionSem{x := e} &\ \equiv \ \UnitSet \\
      \CtrlActionSem{?Q} &\ \equiv \ \UnitSet \\
  \end{align*}
  \end{minipage}
  \begin{minipage}{0.5\textwidth}
  \begin{align*}
      \CtrlExeSem{\alpha \cup \beta}(s, \SumLeftInj{a}) & \ \equiv \ \CtrlExeSem{\alpha}(s, a) \\
      \CtrlExeSem{\alpha \cup \beta}(s, \SumRightInj{a}) & \ \equiv \ \CtrlExeSem{\beta}(s, a) \\
      \CtrlExeSem{\alpha ; \beta}(s, (a_1, a_2)) & \ \equiv \ \CtrlExeSem{\beta}(\CtrlExeSem{\alpha}(s, a_1), a_2) \\
      \CtrlExeSem{x \dlassign *}(s, v) & \ \equiv \ \UpdateMap{s}{x}{v} \\
      \CtrlExeSem{x \dlassign e}(s, \Unit) & \ \equiv \ \UpdateMap{s}{x}{\sem{e}(\EmptyValuation, s)} \\
      \CtrlExeSem{?Q}(s, \Unit) & \ \equiv \ s
  \end{align*}
  \end{minipage}
  \caption{A formal definition of $\CtrlActionSem{\Ctrl}$ and $\CtrlExeSem{\Ctrl}$, as introduced in Section~\ref{sec:safe-rl-via-jsc}. Note that these definitions do not feature loops and differential equations, as these are not allowed to occur in controllers.}
  \Description[]{}
  \label{fig:ctrl-fun-sem}
\end{figure}

\subsection{Deriving Controller Monitors}
\label{ap:extracting-cm}
Figure~\ref{fig:deriving-cm} provides more details on how controller monitors can be automatically extracted from nondeterministic controllers. For pedagogical reasons, we take a slightly different approach here than is used in the original ModelPlex paper~\cite{DBLP:journals/fmsd/MitschP16}, but both approaches are similar in spirit. Note that both $\CtrlMonitorSem{\alpha}$ and $\CtrlExeSem{\alpha}$ are computable whenever $\alpha$ is a hybrid program that is free of function symbols, loops, differential equations, modalities, and quantifiers.

\begin{figure}
  \begin{align*}
  \CtrlMonitorSem{\alpha \cup \beta}(s, \SumLeftInj{a}) & \ \equiv \ \CtrlMonitorSem{\alpha}(s, a) \\
  \CtrlMonitorSem{\alpha \cup \beta}(s, \SumRightInj{a}) & \ \equiv \ \CtrlMonitorSem{\beta}(s, a) \\
  \CtrlMonitorSem{\alpha \seqI \beta}(s, (a_1, a_2)) & \ \equiv \ \BoolAnd{\CtrlMonitorSem{\alpha}(s, a_1)}{\CtrlMonitorSem{\beta}(\CtrlExeSem{\alpha}(s), a_2)} \\
  \CtrlMonitorSem{x \dlassign *}(s, \Unit) & \ \equiv \ \BoolTrue \\
  \CtrlMonitorSem{x \dlassign e}(s, \Unit) & \ \equiv \ \BoolTrue \\
  \CtrlMonitorSem{?Q}(s, \Unit) & \ \equiv \ (\EmptyValuation, s) \in \sem{Q}
  \end{align*}
  \caption{Extracting a controller monitor from a nondeterministic $\dL$ controller.}
  \Description{The decomposition of controller monitors.}
  \label{fig:deriving-cm}
\end{figure}

\subsection{Existence of a Computable Fallback Policy}\label{ap:generating-fallbacks}
Our shielding algorithm requires a fallback policy $\CtrlFallbackSem{\Ctrl}$ to pick actions whenever the desired action is rejected by the controller monitor. An explicit fallback policy is typically provided by users on a case-by-case basis (see discussion in Section~\ref{sec:safe-rl-via-jsc}), but fallback actions can also be generated systematically and automatically with the help of an SMT solver. We demonstrate this point by studying an example with all features of the general case. Indeed, consider the following controller:
\[
  \alpha \ \equiv \ ((x \dlassign * \,;\, y \dlassign v \cdot x) \,\cup\, (y \dlassign w)) \seq ?(y \ge 1).
\]
Here, the action space is $\CtrlActionSem{\alpha} \equiv ((\Reals \times \UnitSet) \uplus \UnitSet) \times \UnitSet$. In order to find an action, one can consider the following modified controller where a vector of variables $\VarsVec{u}$ is used to encode nondeterministic choices:
\[
  \alpha_{\VarsVec{u}} \ \equiv \ ((?u_1 = 0 \seq x \dlassign u_2 \seq y \dlassign v \cdot x) \,\cup\, (?u_1=1 \seq y \dlassign w)) \seq\, ?(y \ge 1).
\]
In the general case, any occurrence of $x \dlassign *$ can be replaced by $x  \dlassign u$ with $u$ a fresh variable, and any occurrence of $\alpha \cup \beta$ can be replaced by $(?(u=0) \seq \alpha) \cup (?(u=1) \seq \beta)$ with $u$ a fresh variable also. Then, one can consider the formula $F_{\VarsVec{u}} \equiv \ldiamond{\alpha_{\VarsVec{u}}}{\text{true}}$. Using the axioms of $\dL$, we can rewrite $F_\VarsVec{u}$ into an equivalent, quantifier-free real arithmetic formula $G_{\VarsVec{u}}$~\cite{DBLP:conf/hybrid/Platzer07}. In our example, we get:
\[
  G_{u_1, u_2} \equiv (u_1 = 0 \land v \cdot u_2 \ge 1) \lor (u_1 = 1 \land w \ge 1).
\]
Finally, given a particular state $s$, one can replace any variable not in $\VarsVec{u}$ by its value in $s$ and then generate an action by computing a model of the resulting formula using an SMT solver. In this particular case, assuming that $s(v) = 2$ and $s(w) = 0$, we seek models of the following formula:
\[
  (u_1 = 0 \land 2 \cdot u_2 \ge 1) \lor (u_1 = 1 \land 0 \ge 1).
\]
A valuation for $\VarsVec{u}$ can be naturally mapped into an action. For example, the model $\{u_1 \mapsto 0, u_2 \mapsto 4 \}$ of the formula above is mapped to action $(\SumLeftInj{(4, \Unit)}, \Unit)$, which can be interpreted as ``take the left branch and choose $x := 4$''.

\section{Details on the Inference Strategy Language}\label{ap:inf-details}

\subsection{Analysis of the Strategy from Figure~\ref{fig:overview-train-global}}\label{ap:train-global-inf-strategy}
In this section, we analyze the inference strategy of the shield defined in Figure~\ref{fig:overview-train-global}. The first line is:
\[ \TrainThetaSlopeMin, \TrainThetaSlopeMax \slassign \slaggregate i, j: (\omega_j - \omega_i)/(u_j - u_i) \sland (\eta_j - \eta_i)/(u_j - u_i) \slwhen u_j > u_i. \]
First, note that $\TrainThetaSlopeMin, \TrainThetaSlopeMax \slassign e$ is simply a syntactic sugar for $\TrainThetaSlopeMin \slassign e \seq \TrainThetaSlopeMax \slassign e$, as defined in Figure~\ref{fig:inf-lang-syntax-semantics}. Let us thus focus on the second assignment. The corresponding proof obligation is:
\[ (\omega_i = \TrainThetaSlope u_i + \TrainThetaBias - \eta_i) \land (\omega_j = \TrainThetaSlope u_j + \TrainThetaBias - \eta_j) \land (u_j > u_i) \limplyI \TrainThetaSlope \le (\omega_j - \omega_i)/(u_j - u_i) + (\eta_j - \eta_i)/(u_j - u_i),  \]
where irrelevant assumptions have been elided. As can be seen, the guard specified by the $\textsc{when}$-clause appears as a crucial assumption. During evaluation, the inference statement fails when a positive weight is put on a pair of steps where $u_j > u_i$ does not hold (and thus $\OptNone$ is returned). The second inference assignment from Figure~\ref{fig:overview-train-global} is:
\[ \TrainThetaBiasMax \slassign \slaggregate i: \omega_i - \TrainThetaSlopeMax u_i \sland \eta_i \slwhen u_i \le 0. \]
Here, the associated proof obligation is:
\[ (\omega_i = \TrainThetaSlope u_i + \TrainThetaBias - \eta_i) \land (\TrainThetaSlopeMax \ge \TrainThetaSlope) \land (u_i \le \omega_i - \TrainThetaSlopeMax u_i + \eta_i) \limply (\TrainThetaBias \le 0). \]
Note how the definition of $\TrainThetaSlopeMax$ (i.e. $\TrainThetaSlopeMax \ge \TrainThetaSlope$) is included as an assumption since it appears in the \textsc{aggregate} expression. This behavior is formalized in the definition of $\VarDef{\cdot}$ in Figure~\ref{fig:inf-lang-obligations}.

\subsection{Evaluating Probabilistic Tail Bounds}\label{sec:eval-tail-expressions}

In this Appendix, we provide more details on how the $\InvCCDF$ operator used by the inference strategy language can be implemented. Note however that optimizing this operator in the general case is orthogonal to our main contributions and goes beyond the scope of this paper.

\paragraph{Gaussian case.} When the noise component of an \textsc{aggregate} statement is a linear combination of Gaussian variables, the resulting $\InvCCDF$ expression can be evaluated optimally. Note that the coefficients and Gaussian hyperparameters can be state-dependent since all state variables can be substituted by their concrete values before $\InvCCDF$ is evaluated. The following properties make the handling of Gaussian noise particularly straightforward:
\begin{itemize}
  \item If $X_i \sim \slnormal(\mu_i, \sigma_i^2)$ are independent and $\lambda_i \in \Reals$, then $\sum \lambda_i X_i \sim \slnormal\bigl(\sum_i \lambda_i X_i, \, \sum_i \lambda_i^2 \sigma_i^2\bigr)$.
  \item $\Prob_{X \sim \slnormal(\mu, \sigma^2)} \{ X > \delta_\Eps \} = \Eps$ where $\delta_\Eps \equiv \mu + \sqrt{2} \cdot \sigma \cdot \ErfInv(1 - 2\Eps)$ and $\ErfInv$ is the inverse of the error function defined by $\Erf(z) \equiv (2/\sqrt{\pi}) \int_0^z e^{-t^2}dt$ for $z \in \Reals$~\cite{bertsekas2008introduction}.
\end{itemize}

\paragraph{Discrete case} Discrete distributions that are expressed using a finite combination of Bernouilli variables can be generally handled by computing the associated, finite probability table. Also:
\begin{itemize}
  \item If $X \sim \slbernouilli(p)$, then $\InvCCDF(X, \Eps) = 0$ if $\Eps \ge p$ and $1$ otherwise.
  \item If $X_i \sim \slbernouilli(p_i)$ i.i.d, then $\InvCCDF((1/n)\sum_{i=1}^n X_i, \Eps) = 0$ if and only of $\Eps \ge \Prob(\forall i \ X_i = 0) = 1 - \prod_i(1-p_i)$.
\end{itemize}

\paragraph{Use of concentration inequalities} Concentration inequalities (e.g. Chebyshev, Cantelli, Hoeffding, or Chernoff bounds) can be used to implement the $\InvCCDF$ operator for arbitrary distributions, using simple features such as their mean, variance, support or moment-generating function~\cite{bertsekas2008introduction}. Such features benefit from strong compositional properties\footnote{For example, $\Expectation(X + Y) = \Expectation(X) + \Expectation(Y)$ and $\Variance(X + Y) = \Variance(X) + \Variance(Y)$ for $X$ and $Y$ independent random variables.}, making them easy to estimate in a large variety of cases. The simplest concentration bound is the Chebyshev inequality. It states that for any distribution $X$ with mean $\mu$ and variance $\sigma^2$ and for any $k > 0$, we have: \[\Prob(|X-\mu| > k \sigma) \le \frac{1}{k^2}. \]
We can therefore define $\InvCCDF(X, \Eps) = \mu + \sigma / \sqrt{\Eps}$. In particular, in a typical case where we have independent variables $X_1,\dots,X_n$ with mean $0$ and variance $\sigma^2$ along with some convex coefficients $\lambda_1,\dots,\lambda_n$,
this yields $\InvCCDF\bigl(\sum_i \lambda_i X_i, \, \Eps\bigr) = \SqrtI{\sum_i \lambda_i^2} \cdot \sigma \cdot \bigl(1/\sqrt{\Eps}\bigr)$.

The Hoeffding inequality often provides tighter bounds when aggregating independent noise variables with bounded support. It states that if $X_1, \dots, X_n$ are independent and that $a_i \le X_i \le b_i$ and $\Expectation(X_i) = \mu_i$ for all $i$, then the following holds for all $t > 0$:
\[ \Prob((X_1+\dots+X_n) - (\mu_1 + \dots + \mu_n) > t) \le \exp\biggl(-\frac{2t^2}{\sum_{i=1}^n(b_i-a_i)^2}\biggr). \] In the particular case where $X_1, \dots, X_n$ are independent with mean $0$, $a \le X_i \le b$ for all $i$ and $\lambda_1,\dots\lambda_n$ are convex coefficients, the following holds for all $t>0$: \[\Prob\biggl(\sum_i\lambda_iX_i > t\biggr) \le \exp\biggl(-\frac{2t^2}{(b-a)^2 \cdot \sum_i \lambda_i^2}\biggr).\] Thus, in this case, we can define $\InvCCDF\bigl(\sum_i \lambda_i X_i, \, \Eps\bigr) = \SqrtI{\sum_i \lambda_i^2} \cdot (b-a) \cdot \ln \bigl(1/\sqrt{\Eps}\bigr)$. This bound is exponentially better than the Chebyshev bound in terms of $\Eps$, as it features a factor of $\ln \bigl(1/\sqrt{\Eps}\bigr)$ instead of $1/\sqrt{\Eps}$. However, both bounds improve at the same rate of $1/\sqrt{n}$ as $n$ gets larger, assuming a uniform distribution $\lambda_i=1/n$ of the convex coefficients.

\section{Proofs}

\subsection{Main Safety Theorem}\label{ap:main-theorem-proof}

This section provides a proof sketch for Theorem~\ref{thm:main-safety}. Throughout this section, we fix a shield specification $\tuple{\Const, \dots}$, a compatible environment $\tuple{\tuple{S, \dots}, \dots, c, u, \ObsAvailFun, \ObsMeasurementFun}$ and the associated shielded MDP $\hat E = \tuple{\hat S, \hat A, \hat P,\dots}$. Our proof uses the notions of a \emph{consistent history} and of a \emph{consistent shielded state}.

\begin{definition}[Consistent history]
  A history $h \in \Hist$ is said to be \emph{consistent} if for all $(s, z, \LocBounds) \in h$, we have $(c \CatValuation u, \StateMapping(s)) \in \sem{\BoundConj{\LocBounds}}$ and $z \subseteq \ObsAvailFun(s)$.
\end{definition}

\begin{definition}[Consistent shielded state]
  A state $\hat s=(s,h,\GlobBounds,\EpsRem) \in \hat S$ is said to be \emph{consistent} if and only if  $h$ is consistent, $\EpsRem \ge 0$ and $(c \CatValuation u, \StateMapping(s) \CatValuation \GlobBounds) \in \sem{\Assum \land \Bound \land \Inv}$.
\end{definition}

Theorem~\ref{thm:main-safety} follows from the following lemma, using the fact that the system's invariant implies the safety condition (Definition~\ref{sec:shield-spec}, item~(\ref{obl:inv-implies-post})).

\begin{lemma}[Consistency Preservation]
 For any consistent state $\hat s=(\dots,\EpsRem)$ and action $a \in \hat A$, the next state $\hat s' = (\dots,\EpsRem') \sim \hat P(\hat s, \hat a)$ is also consistent with probability at least $1 - (\EpsRem - \EpsRem')$.
\end{lemma}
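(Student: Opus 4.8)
The plan is to walk through Algorithm~\ref{alg:shield} and verify the three clauses defining a consistent state for the sampled successor $\hat s' = (s', h', \GlobBounds', \EpsRem')$, isolating the one genuinely probabilistic ingredient---the soundness of inference (Theorem~\ref{thm:inf-soundness})---so that every remaining step is deterministic. The deterministic skeleton is immediate. The budget clause $\EpsRem' \ge 0$ holds because the loop starting on Line~\ref{line:start-exec-inference} decrements $\EpsRem$ by $\Eps$ only when $\Eps \le \EpsRem$, and the total amount subtracted equals exactly $\EpsRem - \EpsRem'$; this is precisely the failure budget that the union bound below must respect.

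The probabilistic core is the claim that, after the inference loop, every updated bound holds at the current state $s$. First I would check that the valuation $v$ and the history assembled in Lines~\ref{line:algo-start}--\ref{line:inf-end} instantiate the universally quantified hypotheses of Definition~\ref{def:sound-inf-assignment}: the tagged objects $\TagWithIndex{s_i}{i}, \TagWithIndex{b_i}{i}, \TagWithIndex{o_i}{i}$ correspond (through $\StateMapping$) to what the algorithm builds, the bound instantiations recorded in the input history satisfy $\Assum \land \Inv \land \BoundConj{b_i}$ by consistency of $\hat s$, and the observations are drawn according to $\sem{\Noise}(c)$ and $\Obs$ by Definition~\ref{def:compat-env}(\ref{item:correct-observation}). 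Theorem~\ref{thm:inf-soundness} then certifies that each executed assignment $(p, e, \Eps)$ preserves the truth of $\BoundConj{\cdot}$ with probability at least $1 - \Eps$. Since a later assignment may read a parameter value written by an earlier one through $v$, I would chain these guarantees with a first-failure decomposition: letting $G_k$ be the event that the $k$-th executed assignment leaves its target bound valid, on $\bigcap_{j<k} G_j$ the inputs to assignment $k$ form a consistent history, so $\Prob(\lnot G_k \mid \bigcap_{j<k} G_j) \le \Eps_k$, and hence $\Prob(\bigcup_k \lnot G_k) \le \sum_k \Eps_k = \EpsRem - \EpsRem'$.

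Conditioned on the good event $\bigcap_k G_k$, the rest is deterministic. The updated bounds $\GlobBounds' \CatValuation \LocBounds$ are valid at $s$, and since an update replaces a value only when strictly tighter, obligation~\ref{obl:inv-monotone} (invariant monotonicity) guarantees that $\Inv$, which held under the input bounds, still holds under $\GlobBounds'$, yielding $(c \CatValuation u, \StateMapping(s) \CatValuation (\GlobBounds' \CatValuation \LocBounds)) \in \sem{\Assum \land \Bound \land \Inv}$. The assertion on Line~\ref{line:assert-local} holds because the strategy supplies a default for every local parameter, so Lemma~\ref{lem:ctrl-monitor} produces a well-defined monitor and fallback; the action actually played is either accepted by the monitor (hence a valid controller transition) or replaced by the fallback, which is valid by obligation~\ref{obl:ctrl-total}. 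Definition~\ref{def:compat-env}(\ref{item:correct-model}) converts the underlying MDP step on Line~\ref{line:perform-underlying} into a valid $\Plant$ run, so the whole cycle realizes a run of $\Ctrl \seqI \Plant$; obligation~\ref{obl:inv-preserved} then gives $\Inv$ at $s'$, and state-independence of the assumptions and the global bounds lets them survive the plant step, establishing the invariant clause of $\hat s'$. Consistency of $h'$ follows because the reused entries have their observation sets cleared (so the containment in $\ObsAvailFun$ is trivially preserved) while keeping their valid local bounds, and the freshly appended entry $(s, \ObsAvailFun(s), \LocBounds)$ is consistent by the local-bound validity just established.

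The hard part will not be any single step but the faithful bookkeeping that links the operational valuation and history manipulated by Algorithm~\ref{alg:shield} to the abstract, quantified objects of Definition~\ref{def:sound-inf-assignment}, together with sequencing the union bound correctly across assignments that share observations and feed one another. Getting this right relies on exactly the design invariants that make the staging sound: that each symbolic instantiation is built with no access to observation values, that observations are never reused across cycles (the reused entries are cleared inside the first inference loop), and that the noise samples feeding all assignments of a cycle are i.i.d.\ draws from $\sem{\Noise}(c)$.
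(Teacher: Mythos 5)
Your proposal is correct and follows essentially the same route as the paper's proof: a walk through Algorithm~\ref{alg:shield} that isolates the inference loop as the sole probabilistic step (handled by Theorem~\ref{thm:inf-soundness}, Definition~\ref{def:compat-env}(\ref{item:correct-observation}), and invariant monotonicity, union-bounded by the budget decrease) and treats the monitor/fallback step and the plant step deterministically via obligations~(\ref{obl:ctrl-total}) and~(\ref{obl:inv-preserved}) together with Definition~\ref{def:compat-env}(\ref{item:correct-model}). Your only shaky intermediate step is stating the per-assignment guarantee as a \emph{conditional} bound $\Prob(\lnot G_k \mid \bigcap_{j<k} G_j) \le \Eps_k$, which Definition~\ref{def:sound-inf-assignment} does not literally license since conditioning perturbs the noise distribution; this is repaired (as the soundness proof's structure shows) by noting that each failure event intersected with its hypotheses is \emph{contained} in a noise-tail event of unconditional probability at most $\Eps_k$, after which your union bound goes through unchanged.
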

\begin{proof}
We show how the consistency of $(s, h, \GlobBounds, \EpsRem)$ is preserved throughout Algorithm~\ref{alg:shield} with suitable likelihood. From Line~\ref{line:algo-start} to Line~\ref{line:start-exec-inference} (excluded), only the observation availability component of history $h$ is changed, in a way that preserves consistency. From the soundness of inference (Theorem~\ref{thm:inf-soundness}), the soundness of the measurement function (Definition~\ref{def:compat-env}, item~(\ref{item:correct-observation})) and the monotonicity of the invariant (Section~\ref{sec:shield-spec}, item~(\ref{obl:inv-monotone})), each iteration of the loop starting at Line~\ref{line:start-exec-inference} preserves the consistency of $(s, \, h \cdot (s, \ObsAvailFun(s), \{p \mapsto v(p) : p \in \LocalParam\}), \, \{p \mapsto v(p) : p \in \GlobalParam\}, \EpsRem)$, barring an event of probability bounded by the decrease in $\EpsRem$. This ensures that consistency still holds before executing Line~\ref{line:mon-start}, with suitable probability. From Lemma~\ref{lem:ctrl-monitor} and controller totality (Section~\ref{sec:shield-spec}, item~(\ref{obl:ctrl-total})), a fallback is always guaranteed to exist on Line~\ref{sec:shield-spec}. Finally, from the model's accuracy (Definition~\ref{def:compat-env}, item~(\ref{item:correct-model})) and the invariant preservation property (Section~\ref{sec:shield-spec}), Line~\ref{line:perform-underlying} preserves consistency.
\end{proof}

\subsection{Soundness of the Inference Strategy Language}\label{ap:inf-lang-soundness-proof}

We provide a proof of Theorem~\ref{thm:inf-soundness} below.

\newcommand{\latestEnv}{w}
\newcommand{\historyEnv}{h}

\begin{proof}
  Let $\tuple{\Const,\dots,\Inference}$ a shield specification and let us assume that $\Fvalid \ProofSem{\Inference}$. Let $a \in \InfActionSem{\Inference}$ an inference action. We need to show that $\sem{\Inference}(a)$ is a list of \emph{sound} inference assignments. By the semantics of strategy composition, it is enough to consider an arbitrary assignment $p \slassign e$ in the strategy and show that it produces a sound symbolic inference assignment. Thus, let $a \in \InfActionSem{p \slassign e}$ and $(p, e', \Eps) = \sem{\Inference}(a)$. We must show that $(p, e', \Eps)$ is sound.

  To do so, by Definition~\ref{def:sound-inf-assignment}, we must fix an arbitrary history and show that updating the latest bound instantiation with $e'$ preserves the truth of $\Bound_p$. Thus, let us consider:
  \begin{itemize}
    \item some symbol valuations $c \in \Const \toI \Reals$ and $u \in \Unknown \toI \Funs$,
    \item a history size $n \in \Nats$,
    \item a state history $(s_1 \dots s_{n}) \in \ProgState^{n}$,
    \item an observation availability history $(z_1 \dots z_{n}) \in \ObsAvail^{n}$, and
    \item a bound instantiation history $(b_1 \dots b_{n}) \in \Inst^{n}$.
  \end{itemize}
  Let us also define:
  \begin{itemize}
    \item some sampled noise variables $\eta_1 \cdots \eta_{n} \sim \sem{\Noise}(c)$,
    \item corresponding observations $o_i = \{\omega \mapsto \sem{\Obs_\omega}(c \CatValuation u, s_i \CatValuation \eta_i) : \omega \in \Param \cap z_i \}$,
    \item the valuation $v = \bigl(s_n \CatValuation b_n \bigr) \cup \bigl(\BigCatValuation_{1 \le i \le n} \TagWithIndex{s_i}{i} \CatValuation \TagWithIndex{b_i}{i} \CatValuation \TagWithIndex{o_i}{i}\bigr)$ summarizing the history and
    \item the value $r = \sem{e'}(c, v)$ of the concrete bound obtained using $v$.
  \end{itemize}
  In addition, let us assume that $(c \CatValuation u, s_i) \in \sem{\Assum \land \Inv \land \BoundConj{b_i}}$ for all $i$. We must show: \[\Prob\left\{ (c \CatValuation u, s_{n}) \notin \sem{\BoundConj{\UpdateMap{b_n}{p}{r}}} \right\} \le \Eps.\]
  Let us abbreviate $\latestEnv = (c \CatValuation u, s_n \CatValuation b_n)$ the data about the current state and $\historyEnv = (c \CatValuation u, v)$ the full historical data. It is enough to show that $\Prob\{ \latestEnv \notin \sem{\BoundWith{p}{r}} \} \le \Eps$. We do so by case disjunction on $e$.
  \paragraph{Case $e = (\theta \slwhen G)$} In this case, the semantics of the inference strategy language gives us $e' = (\theta \WhenSymbBE G)$, and $\Eps = 0$. Let us assume that $\historyEnv \in \sem{G}$. We must show that $\latestEnv \in \sem{\BoundWith{p}{r}}$. By validity of the generated proof obligation, we have $\latestEnv \in \sem{A \limplyI G \limplyI \BoundWith{p}{\theta}}$ where $A \equiv \Assum \wedge (\bigwedge \VarDef{\FV(e)})$. Since $h \in \sem{G}$, we also have $\latestEnv \in \sem{G}$. In addition, we have $\latestEnv \in \sem{A}$ since $(c \CatValuation u, s_i) \in \sem{\Assum \land \Inv \land \BoundConj{b_i}}$ for all $i$. Thus, we get $\latestEnv \in \sem{\BoundWith{p}{\theta}}$. We conclude by observing that $r = \sem{\theta}(\latestEnv)$.
  \paragraph{Case $e = (\slbest i_1, \dots, i_n : \theta  \slwhen G)$} This case is very similar to the previous one.
  \paragraph{Case $e = (\slaggregate i_1, \dots, i_n: \theta_1 \sland \theta_2  \slwhen G)$} By the semantics of the inference strategy language, we have $e' = (\theta_1' + \theta_2' \WhenSymbBE G')$, where $G' = \bigwedge_{\lambda, j \in D} \FSubst{G}{i}{j}$, $\theta_1' \equiv \sum_{\lambda, j \in D} {\lambda \FSubst{\theta_1}{i}{j}}$ and $\theta_2' = \InvCCDF\bigl(\sum_{\lambda, j \in D} {\lambda \FSubst{\theta_2}{i}{j}}, \Eps \bigl)$. Let us assume that $\historyEnv \in \sem{G'}$. We must show that $\Prob(\latestEnv \notin \sem{\BoundWith{p}{r}}) \le \Eps$. By validity of the generated proof obligation and by alpha-renaming, we have: \[h \in \sem{\FSubst{A}{i}{j} \limplyI \FSubst{G}{i}{j} \limplyI \BoundWith{p}{\FSubst{(\theta_1+\theta_2)}{i}{j}}}\]
  for all $\lambda,j \in D$. In addition, since $\historyEnv \in \sem{G'}$, then $\historyEnv \in \sem{\FSubst{G}{i}{j}}$ for all $j$. Finally, $\historyEnv \in \sem{\FSubst{A}{i}{j}}$ for all $j$ by our \emph{history consistency} assumption. As a consequence, we have $h \in \sem{\BoundWith{p}{\FSubst{(\theta_1+\theta_2)}{i}{j}}}$ for all $j$ and thus $h \in \sem{\BoundWith{p}{b^*}}$ where $b^* \equiv \sum_{\lambda, j \in D} \lambda \FSubst{(\theta_1 + \theta_2)}{i}{j}$ since $\Bound_p$ is monotone in $p$ and thus convex (assuming without loss of generality that $p$ is an \emph{upper}-bound). Also, by monotonicity of $p$ again, we have $\historyEnv \in \sem{b \ge b^* \limplyI \BoundWith{p}{b}}$. Therefore, we deduce:
  \[ \Prob(\historyEnv \in \sem{\neg\BoundWith{p}{\theta_1'+\theta_2'}}) \,\le\, \Prob(h \in \sem{b^* > \theta_1' + \theta_2'}) \,=\, \Prob\Bigl(h \in \bigsem{\sum_{\lambda, j \in D} {\lambda \FSubst{\theta_2}{i}{j}} > \theta_2'}\Bigr) \,\le\, \Eps \] where the last inequality results from the definitions of $(\eta_i)_i$ and $(\omega_i)_i$ and from the definition of the $\InvCCDF$ operator.
  We established $\Prob(\historyEnv \in \sem{\neg\BoundWith{p}{\theta_1'+\theta_2'}}) \le \Eps$. In addition, since $\sem{\theta_1' + \theta_2'}(\historyEnv) = r$, we also have $\Prob(\historyEnv \in \sem{\neg\BoundWith{p}{r}}) \le \Eps$. Finally, since $\latestEnv \subseteq \historyEnv$ and $\BoundWith{p}{r}$ does not contain any indexed variable, we have $\Prob(\latestEnv \in \sem{\neg\BoundWith{p}{r}}) \le \Eps$, which concludes the proof.
\end{proof}

\end{document}